
\documentclass[12pt,a4paper]{amsproc}

\usepackage[margin=2cm]{geometry}
\usepackage{amsmath,amssymb,amsthm}
\usepackage{mathrsfs}
\usepackage{multirow}
\usepackage{diagbox}
\usepackage{graphicx}
\usepackage{color}

\newtheorem{theorem}{Theorem}
\newtheorem{corollary}{Corollary}
\newtheorem{lemma}{Lemma}
\newtheorem{proposition}{Proposition}
\newtheorem{assumption}{Assumption}
\theoremstyle{remark}
\newtheorem{remark}{Remark}
\newtheorem{example}{Example}

\def\R{\mathbb{R}}
\def\1{\mathbf{1}}
\def\ve{\varepsilon}
\def\hve{\hat{\ve}}

\def\Q{\rho}

\def\htheta{\hat{\theta}}

\def\Femp{\mathbb{F}}
\def\hFemp{\hat{\Femp}}

\def\hR{\hat{R}}

\def\op{o_P(1)}

\def\opn{o_P(n^{-1/2})}

\def\RR{\mathscr{R}}

\def\Lcal{\mathcal{L}}

\def\Zint{\mathbb{Z}}
\def\Nnum{\mathbb{N}}

\def\Dset{\mathbb{D}}

\textwidth 17cm
\textheight 222mm

\oddsidemargin = -5mm \topmargin = -5mm

\DeclareMathOperator*{\argmin}{arg\,min}

\numberwithin{equation}{section}

\allowdisplaybreaks

\begin{document}


\title[Regularization parameter selection]{Regularization parameter
  selection in indirect regression by residual based bootstrap}

\begin{abstract}
Residual-based analysis is generally considered a cornerstone of
statistical methodology. For a special case of indirect regression, we
investigate the residual-based empirical distribution function and
provide a uniform expansion of this estimator, which is also shown to
be asymptotically most precise. This investigation naturally leads to
a completely data-driven technique for selecting a regularization
parameter used in our indirect regression function estimator. The
resulting methodology is based on a smooth bootstrap of the model
residuals. A simulation study demonstrates the effectiveness of our
approach.
\end{abstract}

\author[N.\ Bissantz, J.\ Chown and H.\ Dette]{Nicolai Bissantz,
  Justin Chown$^*$ and Holger Dette}

\thanks{$^*$ Correspondences may be addressed to Justin Chown
  (justin.chown@ruhr-uni-bochum.de). \\
  {\em Ruhr-Universit\"at Bochum,
    Fakult\"at f\"ur Mathematik, Lehrstuhl f\"ur Stochastik, 44780
    Bochum, DE}}

\maketitle

\noindent {\em Keywords:}
bandwidth selection,
indirect regression estimator, 
inverse problems, regularization,
residual-based empirical distribution function,
smooth bootstrap
\bigskip

\noindent{\itshape 2010 AMS Subject Classifications:} 
Primary: 62G08, 62G09; 
Secondary: 62G20, 62G30.


\section{Introduction}
\label{intro}

In many experiments one is only able to make indirect observations of
the physical process being observed. Important quantities that are of
interest to the study are not directly available for statistical
inference in such so-called {\sl inverse problems}, but images of
these quantities under some transformation such as a convolution can
be used instead. Here, we consider an inverse regression model, i.e.\
observing a signal of interest from indirect observations
\begin{equation} \label{modeleq}
Y_j = \big[K\theta\big](x_j) + \ve_j,
 \qquad j = -n,\ldots,n,
\end{equation}
where $K$ is an operator specifying convolution of the true underlying
regression $\theta$ with a distortion function $\psi$, i.e.\
\begin{equation*}
\big[K\theta\big](x_j) = \int_{-1/2}^{1/2} \theta(u)\psi(x_j - u)\,du.
\end{equation*}
The resulting function $K\theta$ can be viewed as a distorted
regression function. We assume that $\theta$ is a smooth periodic
function, which is a common assumption taken in many inverse problems
and further discussed below. As in pages 49-51 of Tsybakov (2009),
this means the Fourier coefficients of $\theta$ are assumed to satisfy
a crucial technical summability requirement (see Section
\ref{mainresults} for further details).

We will assume that $\psi$ is known and behaves like a probability
density function on the interval $[-1/2,\,1/2]$, i.e.\ $\psi$ is
positive--valued on the interval $[-1/2,\,1/2]$ and integrates to one.
Later, we will specify further technical requirements regarding
$\psi$. However, to ensure the convolution operation is well--defined,
it is clear that $\psi$ must be periodically extended to the intervals
$[x - 1/2,\,x + 1/2]$ for each $x \in [-1/2,\,1/2]$. The covariates
$x_j$ in model \eqref{modeleq} are uniformly distributed design points
in the interval $[-1/2,\,1/2]$, i.e.\ $x_j = j/2n$, $j = -n,\ldots,n$,
and the errors $\ve_j$ are assumed to be independent, have mean equal
to zero and have the common distribution function $F$. Note, the
assumptions given above only guarantee that model \eqref{modeleq} is a
well--defined indirect regression model, where $\theta$ is
identifiable (see, for example, Cavalier and Golubev, 2006, or Mair
and Ruymgaart, 1996).

Statistical inverse problems have received a great amount of attention
for the construction of estimators for various densities and indirect
regression models. In particular, early work considers properties of
estimators in a range of important statistical inverse
problems. Important examples are Masry (1991), who investigates
estimators of a multivariate density function in an
errors-in-variables model using a deconvolution technique; Fan (1991),
who derives optimal rates of convergence for density estimators in
these models, and Masry (1993), who investigates estimators of a
smooth multivariate regression function using deconvolution techniques
for the case of regression function estimation with contaminated
covariates.

A second sequence of publications about statistical inverse problems
such as those considered in \eqref{modeleq} yields a better
understanding of the asymptotic properties of the estimators from a
theoretical perspective. Important results are due to Mair and
Ruymgaart (1996), who consider estimators of the indirect regression
function in a flexible model based on Hilbert scales. This includes
the popular case of Sobolev classes, where these authors describe
general regularization approaches for operator inversion and show that
the considered estimators are in fact minimax optimal. Cavalier and
Tsybakov (2002) investigate an indirect heteroscedastic regression
model and prove minimax optimality of their estimators. Moreover,
Cavalier (2008) surveys the available literature for deconvolution
estimators and provides minimax rates of reconstructions in several
models including \eqref{modeleq}.

More recently, statistical testing and model selection properties have
been considered in statistical inverse problems of the type
\eqref{modeleq}. Bissantz and Holzmann (2008) provide an overview for
constructing confidence intervals and confidence bands in univariate
statistical inverse problems. Later, Proksch, Bissantz and Dette
(2015) generalize the univariate case studied by the previous authors
and construct confidence bands for an indirect regression function of
multiple covariates. Marteau and Math\'e (2014) consider the problem
of testing for distorted signals using general regularization schemes.

All of the deconvolution estimators investigated in the previously
mentioned articles are based on projections of the data and result in
kernel-type estimators that depend on some kind of regularization
parameter. This quantity is analogous to the bandwidth found in the
usual nonparametric function estimators. Data--driven selection of
this parameter is an important problem that we want to more closely
examine in this article. Techniques for choosing the sequence of
regularization parameters generally focus on choosing a suitable
estimator of the integrated mean squared error of an indirect
regression estimator or choosing some other related quantity (see
Section \ref{apps}). Cavalier and Golubev (2006) make a particularly
important contribution to this problem, where these authors
investigate the integrated mean squared error of indirect regression
estimators and propose a suitably penalized quantity based on a
threshold of this important estimation performance metric. The authors
call this a {\em risk hull} approach because of the resulting
bowl-shaped objective function used for choosing their parameter
sequence. From another perspective, we can consider potential
bootstrap approaches to this problem, where we instead calculate the
integrated mean squared error of a bootstrap version of the indirect
regression estimator. The bootstrap method for choosing the
regularization parameter sequence investigated here appears to
be particularly promising when compared with the risk hull approach
previously mentioned.

In this article, we provide a statistical methodology for selecting a
best fitting (most feasible) regression estimator from a sequence of
function estimators based on observations from model \eqref{modeleq}
using the resulting model residuals constructed from the estimator
$\htheta$, see \eqref{thetahat}:
\begin{equation*}
\hve_j = Y_j - \big[K\htheta\big](x_j),
 \qquad j=-n,\ldots,n.
\end{equation*}
Many statistical procedures are residual--based, including the
bootstrap methodology for selecting the regularization that we
investigate. This requires that we first study the distribution
function $F$ of the model errors, which is generally unknown and must
be estimated.

To the best of our knowledge, this object has not been
studied before with respect to statistical deconvolution in a
completely nonparametric setting. We form an estimator of $F$ using
the empirical distribution function of the model residuals:
\begin{equation*}
\hFemp(t) = \frac{1}{2n + 1} \sum_{j = -n}^{n} \1\big[\hve_j \leq t\big]
 = \frac{1}{2n + 1} \sum_{j = -n}^{n}
 \1\big[Y_j - \big[K\htheta\big](x_j) \leq t\big],
 \qquad t \in \R.
\end{equation*}
The estimator $\htheta$ is shown here to be a suitable estimator of
$\theta$ such that we can study the limiting behavior of
$\hFemp$, which is new. In addition, this work reveals stronger
conditions regarding the smoothness of $\theta$ are required for
$\hFemp$ to be a consistent estimator of $F$ that have not appeared
before in the literature. Hence, any residual--based inference
procedure relying on $\hFemp$ would also require this stronger
smoothness condition, e.g.\ Kolmogorov-Smirnov-type and
Cram\'er-von-Mises-type statistics.

Studying these problems requires new results concerning the estimator
$\htheta$ and its bootstrap analog. The literature on statistical
deconvolution problems is vast, and, hence, some results will be
familiar. In particular, we show the estimator $\htheta$ has a strong
uniform rate of consistency for the function $\theta$ that is
analogous to the already known minimax optimal rate of convergence
(see Theorem \ref{thmthetahatuniformrate} in Section
\ref{mainresults1} and Remark \ref{remhnoptimal} in Section
\ref{apps}). There are also many results in the literature on
residual--based empirical distribution functions for direct regression
models; for example, uniform consistency and asymptotic
optimality. We show the estimate $\hFemp$ satisfies both of these
properties (see Theorem \ref{thmFhatExpan} and Remark
\ref{remFhatEfficient} in Section \ref{mainresults1}). The
residual-based empirical distribution function for a wide class of
semiparametric direct regression models are studied by M\"uller,
Schick and Wefelmeyer (2007), and we derive comparable results for the
indirect regression model \eqref{modeleq}.

We have organized the remainder of this article as follows. Further
notation and the estimation method are introduced in Section
\ref{mainresults}, and the asymptotic results for the estimators
$\htheta$ and $\hFemp$ may be found in Section \ref{mainresults1}. In
Section \ref{apps}, we consider the problem of finding an optimal
regularization parameter for the estimator $\htheta$. Here we provide
a rule-of-thumb approach that is in the spirit of Silverman (1986),
and, in Section \ref{boot}, we develop a data--driven approach for
selecting this parameter using a smooth bootstrap of the model
residuals that is in the spirit of Neumeyer (2009). We conclude the
article with a numerical study in Section \ref{sims}, which indicates
good finite sample performance of the proposed data-driven
regularization against the theoretically optimal regularization, and
we consider a comparative technique for choosing the regularization
for spectral cut-off estimates (a special case of our approach)
proposed by Cavalier and Golubev (2006) as an example in Section
\ref{comparemethods}. All of the proofs of our results may be found in
Section \ref{details}.


\section{Estimation in the indirect regression model}
\label{mainresults}

Let us begin with the space of square integrable functions
$\Lcal_2([-1/2,\,1/2])$ with domain $[-1/2,\,1/2]$. This function
space has the well known and countable orthonormal basis
\begin{equation*}
\Big\{e^{i2\pi k x}\,:\,x\in[-1/2,\,1/2]\Big\}_{k \in \Zint}.
\end{equation*}
In order to construct an  estimator for the function $\theta$ we will
need to restrict $\theta$ to a smooth class of functions from
$\Lcal_2([-1/2,\,1/2])$. This means we only consider functions
$\theta$ that are weakly differentiable in $\Lcal_2([-1/2,\,1/2])$.

For clarity, we will now introduce some notation. Let $d \in
\Nnum$. We will call $q^{(i)}$, $1 \leq i \leq d$, a weak derivative
of $q$ in $\Lcal_2([-1/2,\,1/2])$ of order $i$, if $q^{(i)} \in
\Lcal_2([-1/2,\,1/2])$ and $q^{(i)}$ satisfies
\begin{equation*}
\int_{-1/2}^{1/2} q(x)\frac{d^i}{dx^i}\phi(x)\,dx
 = (-1)^i\int_{-1/2}^{1/2} q^{(i)}(x)\phi(x)\,dx,
\end{equation*}
for every infinitely differentiable function $\phi$ with support
$[-1/2,\,1/2]$ that have evaluations of $\phi$,
$(d^{i'}\phi)/(dx^{i'})$, $i' = 1,\ldots,i$, at $1/2$ and $-1/2$ equal
to zero. The corresponding space of functions is the Sobolev space
$\mathcal{W}^{2,d}([-1/2,\,1/2])$, where
\begin{align*}
\mathcal{W}^{2,d}\big([-1/2,\,1/2]\big)
 &= \bigg\{ q \in \Lcal_2([-1/2,\,1/2]) \,:\,
 q^{(1)},\ldots,q^{(d)} \in \Lcal_2([-1/2,\,1/2])
 \bigg\} \\
&= \bigg\{ q \in \Lcal_2([-1/2,\,1/2]) \,:\,
 \sum_{k = -\infty}^{\infty} \big(1 + k^{2}\big)^{d} |\Q(k)|^2 < \infty
 \bigg\}.
\end{align*}
Here $\{\Q(k)\}_{k \in \Zint}$ are the Fourier coefficients of $q$:
\begin{equation*}
\Q(k) = \int_{-1/2}^{1/2} q(u) e^{-i2\pi k u}du,
\qquad k \in \Zint.
\end{equation*}
Replacing $d$ with a positive real number motivates considering
smoothness orders $s > 0$, i.e.\ $\mathcal{W}^{2,s}([-1/2,\,1/2])$ is
defined exactly as $\mathcal{W}^{2,d}([-1/2,\,1/2])$ is defined above
but now with $s$ in place of $d$. We will require that $\theta$
satisfies a stronger series condition than that stated for
$\mathcal{W}^{2,d}([-1/2,\,1/2])$ above. Following Condition $C_1$ in
Politis and Romano (1999), which is similar to a condition imposed in
Watson and Leadbetter (1963), we will restrict $\RR_{s}$ to a subspace
of $\mathcal{W}^{2,s}([-1/2,\,1/2])$:
\begin{equation*}
\RR_{s} = \bigg\{ q \in \mathcal{W}^{2,s}\big([-1/2,\,1/2]\big) \,:\,
 \sum_{k = -\infty}^{\infty} |k|^{s}|\Q(k)| < \infty \bigg\}.
\end{equation*}
Note, $\theta \in \RR_{s}$ implies a restriction on the Fourier
coefficients $\{\Theta(k)\}_{k \in \Zint}$ of $\theta$, which are
defined similarly to the Fourier coefficients $\{\Q(k)\}_{k \in
  \Zint}$ above.

Another important note regarding the Fourier basis $\{\exp(i2\pi
kx):x \in [-1/2,\,1/2]\}_{k \in \Zint}$ is that it also decomposes
the operator $K$ into a singular value decomposition along each of the
orthonormal basis functions, where now we only need to consider the
Fourier coefficients $\{\Psi(k)\}_{k \in \Zint}$ of the distortion
function $\psi$, which are defined similarly to the Fourier
coefficients $\{\Q(k)\}_{k \in \Zint}$ above. Much of the research in
the area of deconvolution problems has focused on two important
cases. The first case is that of the so--called ordinarily smooth
distortion functions, which means assuming the Fourier coefficients
$\{\Psi(k)\}_{k \in \Zint}$ decay at a polynomic rate: there is some
$b > 0$ such that $|\Psi(k)| \sim |k|^{-b}$. Here ``$\sim$'' denotes
asymptotic similarity. Under this assumption, we
can construct an estimator $\htheta$ for $\theta$ whose strong uniform
consistency rate is comparable, albeit worse, to the rates expected in
the usual nonparametric regression case, and we can show the estimator
$\hFemp$ is both root--$n$ consistent for $F$, uniformly in $t \in
\R$, and $\hFemp$ is asymptotically most precise. The second case is
that of the so--called super smooth distortion functions, which means
assuming the Fourier coefficients $\{\Psi(k)\}_{k \in \Zint}$ decay at
an exponential rate, e.g.\ $|\Psi(k)| \sim \exp(-|k|^{b})$. Under this
assumption, the resulting indirect regression estimator has only a
strong uniform consistency rate that is polynomic in the logarithm of
$n$, which we expect is too slow for us to maintain the root--$n$
consistency of $\hFemp$. Throughout this article, we will therefore
focus on the first case of ordinarily smooth distortion functions
$\psi$, and we work with a similar assumption as (1.4) of Fan (1991):
\begin{assumption} \label{assumpPsi}
There are finite constants $b > 0$, $\Gamma > 0$, and $0 < C_{\Psi} <
C_{\Psi}^*$ such that for every $|k| > \Gamma$ the Fourier
coefficients $\{\Psi(k)\}_{k \in \Zint}$ of $\psi$ satisfy $C_{\Psi} <
|k|^b|\Psi(k)| < C_{\Psi}^*$.
\end{assumption}

\begin{example} \label{expsi}
Suppose $\psi$ is known to be the standard Laplace density function
restricted to the interval $[-1/2,\,1/2]$, i.e.\
\begin{equation*}
\psi(x) = \frac12 e^{-|x|} \bigg\slash
 \bigg[ \int_{-1/2}^{1/2} \frac12 e^{-|x|}\,dx \bigg]
 = \frac12 e^{-|x|} \Big\slash 
 \Big[ 1 - e^{-1/2} \Big],
 \qquad x \in [-1/2,\,1/2].
\end{equation*}
The Fourier coefficients $\{\Psi(k)\}_{k \in \Zint}$ are then given by
\begin{equation*}
\Psi(k) = \frac{1}{1 + 4\pi^2k^2},
 \qquad k \in \Zint.
\end{equation*}
Then Assumption \ref{assumpPsi} is satisfied for the choices $b = 2$,
$\Gamma = 1$, $C_{\Psi} = (1 + 4\pi^2)^{-1}$ and $C_{\Psi}^* =
(4\pi^2)^{-1}$.
\end{example}

Recall that we use a uniform fixed design on the interval
$[-1/2,\,1/2]$. Writing $Q$ for the conditional distribution of a
response $Y$ given a fixed design point $x$ results in the equivalence
$Q(y\,|\,x) = P_x(Y \leq y)$, where $P_x$ denotes the distribution of
$Y$ depending on $x$, which is not random. It follows that we can
write the Fourier coefficients $\{R(k)\}_{k \in \Zint}$ of $K\theta$ as
\begin{equation} \label{Rk}
R(k) = \int_{-1/2}^{1/2}\int_{-\infty}^{\infty}
 ye^{-i2\pi k x}\,Q(dy\,|\,x)\,dx,
 \qquad k \in \Zint.
\end{equation}
The double integral in the right-hand side of \eqref{Rk} is an average,
and we can form an estimator for it from an empirical average using data
$(x_j,Y_j)$, $j = -n,\ldots,n$, and obtain
\begin{equation*}
\hR(k) = \frac{1}{2n + 1} \sum_{j = -n}^{n} Y_j e^{-i2\pi k x_j},
\qquad k \in \Zint.
\end{equation*}

To recover $\theta$ from the convolution $K\theta$, we will make use
of the convolution theorem for Fourier transformation: $R(k) =
\Theta(k)\Psi(k)$, $k \in \Zint$. Since the periodic extension of
$\psi$ is positive--valued on any interval $[x - 1/2,\,x + 1/2]$, $x
\in [-1/2,\,1/2]$, it follows that $\{\Psi(k)\}_{k \in \Zint}$ is
bounded away from zero in absolute value on any bounded region
$\mathcal{Z} \subset \Zint$, and, hence, $\Psi^{-1}$ is well--defined
(see, for example, the discussion on preconditioning on page 1425 of
Mair and Ruymgaart, 1996). Observing that Fourier transformation
reduces convolution to multiplication, we can exploit the Fourier
inversion formula by writing
\begin{equation*}
\theta(x) = \sum_{k = -\infty}^{\infty} \frac{R(k)}{\Psi(k)} e^{i2\pi k x},
 \qquad x \in [-1/2,\,1/2].
\end{equation*}
To plug--in our estimated Fourier coefficients $\{\hR(k)\}_{k \in \Zint}$
for the Fourier coefficients $\{R(k)\}_{k \in \Zint}$, we need to
control the random fluctuations that occur at high frequency spectra,
i.e.\ the inversion of the operator $K$ in \eqref{modeleq} requires
regularization; see Cavalier and Golubev (2006) for a very clear
discussion on regularization and ill-posedness.

Politis and Romano (1999) introduce spectral smoothing to control
these fluctuations in higher frequencies, which amounts to
regularizing the inversion operator in same spirit as Mair and
Ruymgaart (1996). To explain the idea consider the ratio
$|\hR(k)|/|\Psi(k)|$, which becomes large as $k$ increases. The idea
is to utilize lower frequencies and dampen the contributions from
higher frequencies by introducing a sequence of weights. The most
striking difference between the approaches taken by Politis and Romano
(1999) and Mair and Ruymgaart (1996) is the previous authors require
the regularization to preserve the fundamental Fourier frequency,
i.e.\ the regularization must be equal to one around some neighborhood
of the zeroth Fourier frequency. This approach to regularization is
simple to specify for applications and leads to desired optimality
properties.

Let us now introduce some notation. Write $\{h_n\}_{n \geq 1}$ for a
regularizing sequence that satisfies $h_n \to 0$, as $n \to
\infty$. Consider smoothing kernel functions similar to those used in
typical nonparametric function estimators, i.e.\ maps $x \mapsto
h_n^{-1}\delta(x/h_n)$, where $\delta$ is a suitably constrained
probability density function. Politis and Romano (1999) observe the
Fourier transform of a smoothing kernel $h_n^{-1}\delta(x/h_n)$ takes
the form $\Lambda(h_nk)$, where $\Lambda$ is the Fourier transform of the
desired kernel function $K_{\Lambda}$. This means the Fourier
transform of a smoothing kernel depends on $n$ only through the
regularizing sequence $\{h_n\}_{n \geq 1}$ by shrinking the Fourier
frequency from $k$ to $h_nk$. We will require our smoothing kernel to
have a Fourier transform $\Lambda$ that satisfies the following
general assumption:
\begin{assumption} \label{assumplambda}
The region $I = \{k \in \Zint\,:\,|k| \leq M\}$ exists for some
integer $M \geq 1$ such that $\Lambda(k) = 1$, when $k \in I$, and
$|\Lambda(k)| \leq 1$, otherwise. Let $\Lambda$ satisfy
$\int_{-\infty}^{\infty} |u|^{b}|\Lambda(u)|\,du < \infty$, where $b >
0$ is the degree of ill-posedness introduced in Assumption
\ref{assumpPsi}.
\end{assumption}

We will use the order notation $O(a_n)$ for a sequence $\{b_n\}_{n
\geq 1}$ satisfying $a_n^{-1}b_n \to L$ for some finite constant $L$,
and we write $o(a_n)$ when $L = 0$. Analogously, we will write
$O_P(a_n)$ and $o_P(a_n)$ when these statements hold with probability
tending to one as the sample size $n$ increases. Assumption
\ref{assumplambda} ensures only the estimation bias has a desirable
rate of convergence: the order $O(h_n^{s})$, when $\theta \in
\RR_{s}$. This is comparable to the direct estimation setting by
sufficiently high-order kernels or the so--called ``superkernels''
(see, for example, the discussion on page 3 of Politis and Romano,
1999). The idea of restricting the choice of the smoothing kernel
function based on obtaining a suitable rate of convergence in the
estimation bias dates all the way back to Parzen (1962).

An estimator of $\theta$ is given by a kernel smoother:
\begin{equation} \label{thetahat}
\htheta(x) = \sum_{k = -\infty}^{\infty} \Lambda(h_nk)
 \frac{\hR(k)}{\Psi(k)} e^{i2\pi k x}
 = \frac{1}{2n + 1} \sum_{j = -n}^{n} Y_j
 W_{h_n}\big(x - x_j\big),
 \qquad x \in [-1/2,\,1/2],
\end{equation}
where the smoothing kernel $W_{h_n}$ is given by
\begin{equation*}
W_{h_n}\big(x - x_j\big) = \sum_{k = -\infty}^{\infty}
 \frac{\Lambda(h_nk)}{\Psi(k)}
 \exp\Big(i2\pi k \big(x - x_j\big)\Big).
\end{equation*}
The smoothing kernel $W_{h_n}$ is sometimes called a deconvolution
kernel (see, for example, Birke, Bissantz and Holzmann, 2010).


\subsection{Asymptotic results for the deconvolution estimator and the
empirical distribution function of the residuals}
\label{mainresults1}

Our first result specifies the asymptotic order of the bias of
$\htheta$.

\begin{lemma} \label{lemhthetabias}
Let $\theta \in \RR_{s}$, with $s \geq 1$, and let Assumptions
\ref{assumpPsi} and \ref{assumplambda} hold. Then, for any
regularizing sequence $\{h_n\}_{n \geq 1}$ satisfying $h_n \to 0$ and
$nh_n^{b + 1} \to \infty$, as $n \to \infty$, we have
\begin{equation*}
\sup_{x \in [-1/2,\,1/2]} \Big|
 E\big[\htheta(x)\big] - \theta(x) \Big|
 = O\big(h_n^{s} + (nh_n^{b + 1})^{-1}\big).
\end{equation*}
\end{lemma}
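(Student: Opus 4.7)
\bigskip
\noindent\textbf{Proof plan.}
The natural starting point is the Fourier representation \eqref{thetahat} of $\htheta$, which lets us split the bias cleanly into a \emph{regularization} piece and a \emph{discretization} piece. Observing that $\theta(x) = \sum_{k} \Theta(k) e^{i2\pi k x}$ in $\Lcal_2$, that $R(k) = \Theta(k)\Psi(k)$, and using linearity of the expectation together with the Fubini identity implicit in the second expression of \eqref{thetahat}, I would first write
\begin{equation*}
E[\htheta(x)] - \theta(x)
 = \underbrace{\sum_{k \in \Zint} \Lambda(h_nk)\frac{E[\hR(k)] - R(k)}{\Psi(k)} e^{i2\pi k x}}_{=:\,B_1(x)}
 + \underbrace{\sum_{k\in\Zint} \big(\Lambda(h_nk) - 1\big)\Theta(k)e^{i2\pi k x}}_{=:\,B_2(x)},
\end{equation*}
and then bound the two pieces separately.

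The regularization term $B_2$ is easy: Assumption \ref{assumplambda} says $\Lambda(h_n k) = 1$ for $|k| \leq M/h_n$, and $|\Lambda|\leq 1$ elsewhere, so $|B_2(x)| \leq 2\sum_{|k|>M/h_n}|\Theta(k)|$. Factoring out $(M/h_n)^{-s}$ and using $\theta \in \RR_{s}$ to sum $\sum_k |k|^s|\Theta(k)|<\infty$ yields $\sup_x|B_2(x)| = O(h_n^s)$ at once.

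For the discretization term $B_1$, I would recognize $E[\hR(k)]$ as a Riemann sum and interchange summations to get the equivalent form
\begin{equation*}
B_1(x) = \frac{1}{2n+1}\sum_{j=-n}^{n}[K\theta](x_j)\,W_{h_n}(x-x_j) - \int_{-1/2}^{1/2}[K\theta](u)\,W_{h_n}(x-u)\,du,
\end{equation*}
i.e.\ the error of approximating an integral by its rectangle-rule analog on the $2n+1$ equispaced points $x_j = j/(2n)$. The key analytic input is the sup-norm bound
$\sup_y|W_{h_n}(y)| \leq \sum_k |\Lambda(h_nk)|/|\Psi(k)| = O(h_n^{-(b+1)})$,
which follows by Assumptions \ref{assumpPsi} and \ref{assumplambda} together with Riemann-sum comparison using $\int|u|^b|\Lambda(u)|\,du < \infty$. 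Because $\theta$ and $\psi$ are periodic on $[-1/2,1/2]$, so is $[K\theta]$, and the difference of $1/(2n+1)$ and the spacing $1/(2n)$ combined with the fact that $x_{-n}$ and $x_{n}$ coincide modulo the period accounts for essentially all of the Riemann-sum error. I would make this explicit via the split
\begin{equation*}
\frac{1}{2n+1}\sum_{j=-n}^{n} f(x_j) = \frac{2n}{2n+1}\cdot\frac{1}{2n}\sum_{j=-n}^{n-1} f(x_j) + \frac{f(1/2)}{2n+1},
\end{equation*}
applied to $f(u) = [K\theta](u)W_{h_n}(x-u)$. The first summand reproduces, up to a factor $1-(2n+1)^{-1}$, the true trapezoidal rule on a full period of a periodic smooth function; by Poisson summation its error equals $\sum_{m\neq 0}\hat f(2nm)$ and is easily seen to be of strictly smaller order than $(nh_n^{b+1})^{-1}$ using the decay of $\Theta$ and $\Psi$. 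The factor $(2n+1)^{-1}$ and the boundary term each contribute at most $\|K\theta\|_\infty\,\|W_{h_n}\|_\infty/(2n+1) = O((nh_n^{b+1})^{-1})$ uniformly in $x$.

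Combining gives $\sup_x|B_1(x)| = O((nh_n^{b+1})^{-1})$ and hence the claimed rate. The part I expect to be the main obstacle is the careful bookkeeping in the last step: disentangling the three sources of Riemann-sum error (normalization mismatch, boundary coincidence, genuine periodic aliasing) and verifying that the aliasing contribution is truly negligible compared with $(nh_n^{b+1})^{-1}$ requires the smoothness assumption $s \geq 1$ and the ordinarily-smooth decay of $\Psi$; the condition $nh_n^{b+1} \to \infty$ appears exactly as the requirement that $\|W_{h_n}\|_\infty/n \to 0$.
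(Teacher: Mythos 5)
Your decomposition of $E[\htheta(x)]-\theta(x)$ into the smoothing-bias term $B_2$ and the discretization term $B_1$, and your bound $\sup_x|B_2(x)|\leq 2\sum_{|k|>M/h_n}|\Theta(k)|=O(h_n^s)$, are exactly the paper's argument. Where you diverge is $B_1$. The paper bounds it by $\max_{k}|E[\hR(k)]-R(k)|\cdot\sum_k|\Lambda(h_nk)|/|\Psi(k)|$ and proves separately (Lemma \ref{lemRhatOrder}) that the discrete Fourier coefficients satisfy $\max_{k\in\Zint}|E[\hR(k)]-R(k)|=O(n^{-1})$ \emph{uniformly in $k$}, using the Lipschitz constant of $u\mapsto e^{-i2\pi ku}$ and the Fourier inversion formula applied to $r(u)e^{-i2\pi ku}$ so that the dangerous factor $|k|$ is absorbed into the summable series $\sum_{\zeta}|\zeta||R(k+\zeta)|$; this is where $s\geq 1$ enters. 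You instead treat $B_1$ as a quadrature error and invoke Poisson summation. Your accounting of the normalization mismatch and the boundary point is correct and reproduces the rate via $\|r\|_\infty\|W_{h_n}\|_\infty/(2n+1)=O((nh_n^{b+1})^{-1})$, and your reading of the condition $nh_n^{b+1}\to\infty$ is apt.

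The gap is the aliasing term. You assert that $\sum_{m\neq 0}\hat f(2nm)$ is ``easily seen'' to be of smaller order, but $\hat f(2nm)=\sum_k R(k)\,\Lambda(h_n(k-2nm))\Psi(k-2nm)^{-1}e^{i2\pi(k-2nm)x}$ samples $\Lambda$ on lattices of spacing $2nh_n$, which \emph{diverges}. Assumption \ref{assumplambda} gives only $\int|u|^b|\Lambda(u)|\,du<\infty$; this controls Riemann sums of mesh $h_n\to 0$ (which is how $\|W_{h_n}\|_\infty=O(h_n^{-b-1})$ is obtained) but gives no pointwise control of $\Lambda$ on a lattice of diverging mesh, and the crude bound $\sum_{m\neq 0}|\Lambda(h_n(k-2nm))|/|\Psi(k-2nm)|\leq\sum_{\ell}|\Lambda(h_n\ell)|/|\Psi(\ell)|=O(h_n^{-b-1})$ is uselessly large. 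The step can be repaired under a mild extra hypothesis (e.g.\ $\Lambda$ compactly supported, or $|\Lambda(u)|$ bounded by a decreasing integrable majorant): then a nonzero aliasing contribution forces $|k|\geq cn|m|$, and $\sum_k|k|^{s+b}|R(k)|<\infty$ (from $\theta\in\RR_s$ and Assumption \ref{assumpPsi}) yields an aliasing error of order $h_n^{-b}n^{-s-b}=o((nh_n^{b+1})^{-1})$ since $s+b>1$. As written, however, your argument does not close under the stated assumptions, whereas the paper's uniform-in-$k$ coefficient bound sidesteps aliasing entirely.
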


The asymptotic order of the bias of $\htheta$ is impacted by the
degree of ill-posedness of the inverse problem, which we expect can be
made negligible by choice of regularization parameters $\{h_n\}_{n
  \geq 1}$. In the following result, we observe this detrimental
effect in the asymptotic order of consistency as well.
\begin{lemma} \label{lemthetahatconsistency}
Let $\theta \in \RR_{s}$, with $s \geq 1$, and let Assumptions
\ref{assumpPsi} and \ref{assumplambda} hold. Assume that $\Lambda$
additionally satisfies $\int_{-\infty}^{\infty}\,|u|^{b +
  1}|\Lambda(u)|\,du < \infty$ and the random variables
$Y_{-n},\ldots,Y_{n}$ have a finite absolute moment of order $\kappa >
2 + 1/b$. Finally, let the regularizing sequence $\{h_n\}_{n \geq 1}$
satisfy $h_n \to 0$ such that $(nh_n^{2b + 1})^{-1}\log(n) \to
0$, as $n \to \infty$. Then
\begin{equation*}
\sup_{x \in [-1/2,\,1/2]} \Big| \htheta(x) - E\big[\htheta(x)\big] \Big|
 = O\Big(\big(nh_n^{2b + 1}\big)^{-1/2}\log^{1/2}(n)\Big),
 \qquad\text{a.s.}
\end{equation*}
\end{lemma}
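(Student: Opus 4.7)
The plan is to write
\[
\htheta(x) - E[\htheta(x)] = \frac{1}{2n+1} \sum_{j=-n}^{n} \ve_j\, W_{h_n}(x-x_j),
\]
using the deterministic design and zero-mean errors, and to control the sup-norm of this sum of independent mean-zero random variables by a standard recipe: kernel bound, discretise, truncate, apply Bernstein's inequality, and Borel--Cantelli.

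The preparatory step is three analytic estimates on the deconvolution kernel $W_{h_n}$. From Assumption \ref{assumpPsi} (giving $|\Psi(k)|^{-1} \lesssim |k|^b$ for $|k|>\Gamma$) together with the integrability of $|u|^{b}|\Lambda(u)|$ and $|u|^{b+1}|\Lambda(u)|$ guaranteed by Assumption \ref{assumplambda} and the strengthening assumed here, Riemann-sum comparisons produce $\sup_x|W_{h_n}(x)| = O(h_n^{-(b+1)})$ and $\sup_x|W'_{h_n}(x)| = O(h_n^{-(b+2)})$, and a Parseval-type identity on the equispaced grid $\{x_j\}$ gives the variance-scale bound $(2n+1)^{-1}\sum_j W_{h_n}(x-x_j)^2 = O(h_n^{-(2b+1)})$ uniformly in $x$. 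A parallel Riemann-sum comparison shows $(2n+1)^{-1}\sum_j W_{h_n}(x-x_j) = O(1)$ uniformly, a fact needed later for the tail.

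Next I would cover $[-1/2,\,1/2]$ by a grid $\GG_n$ of polynomial cardinality $n^A$; the Lipschitz bound on $W_{h_n}$ (and the polynomial lower bound on $h_n$ implied by $(nh_n^{2b+1})^{-1}\log n \to 0$) ensures that replacing $\sup_x$ by $\max_{x \in \GG_n}$ introduces an error of smaller order than $\lambda_n := (nh_n^{2b+1})^{-1/2}\log^{1/2}n$ provided $A$ is large. I would then split $\ve_j$ at a level $M_n$ into its truncated part $\ve_j\,\1[|\ve_j|\leq M_n]$ and its tail, centring each summand so that the centred truncated and centred tail parts sum to $\ve_j$. For the truncated sum $S_n^{(1)}(x)$, Bernstein's inequality with individual bound $B_n = O(M_n h_n^{-(b+1)}/n)$ and variance $V_n = O((nh_n^{2b+1})^{-1})$ gives $P(|S_n^{(1)}(x)| > c\lambda_n) \leq n^{-q}$ for any desired $q$, provided $M_n$ is chosen so that $M_n \lesssim n^{1/2}h_n^{1/2}/\log^{1/2}n$ (which keeps Bernstein in its Gaussian regime at the scale $\lambda_n$); a union bound over $\GG_n$ and Borel--Cantelli then deliver $\sup_x|S_n^{(1)}(x)| = O(\lambda_n)$ almost surely.

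The tail part $S_n^{(2)}(x)$ from $\ve_j\,\1[|\ve_j|>M_n]$ is handled by its deterministic centring, of order $|E[\ve\,\1[|\ve|>M_n]]| \cdot (2n+1)^{-1}\sum_j W_{h_n}(x-x_j) = O(M_n^{-(\kappa-1)})$ by Markov's inequality and the $\kappa$-th moment assumption, plus a centred fluctuation of variance $O\!\big(M_n^{-(\kappa-2)}(nh_n^{2b+1})^{-1}\big)$ that is controlled by a higher-moment Markov bound coupled with the same union-bound / Borel--Cantelli machinery on $\GG_n$. The main obstacle, and the precise source of the moment hypothesis, is the three-way balancing of the truncation level $M_n$: the Bernstein step forces $M_n \lesssim n^{1/2}h_n^{1/2}/\log^{1/2}n$, the tail-mean control forces $M_n \gtrsim (nh_n^{2b+1})^{1/(2(\kappa-1))}/\log^{1/(2(\kappa-1))}n$, and these two windows are jointly non-empty precisely when $\kappa > 2 + 1/b$, which is the stated moment condition. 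All other ingredients (Parseval identity, kernel bounds, grid-Lipschitz chaining, and Bernstein) are standard.
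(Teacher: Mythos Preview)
Your overall architecture---truncate, discretise, Bernstein, Borel--Cantelli---matches the paper's proof, which follows Masry (1993). The kernel estimates you list (sup bound $O(h_n^{-(b+1)})$, Lipschitz bound $O(h_n^{-(b+2)})$, variance scale $O(h_n^{-(2b+1)})$) and the Bernstein analysis for the truncated part are all correct and essentially identical to what the paper does.

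The genuine gap is in your handling of the centred tail fluctuation $S_n^{(2)}$. A ``higher-moment Markov bound coupled with union bound and Borel--Cantelli'' does not close here: the tail summands are \emph{unbounded}, so Bernstein is unavailable, and a Chebyshev-type bound gives only
\[
P\big(|S_n^{(2)}(x)| > \lambda_n\big) \;\lesssim\; \frac{M_n^{-(\kappa-2)}}{\log n},
\]
which, after a union bound over a grid of polynomial size $n^A$, is \emph{not summable} for any finite $\kappa$ once $A$ is large enough to make the Lipschitz chaining work. The paper (via Masry) instead picks the truncation level $t_n \asymp (n\log n\,(\log\log n)^{1+\delta})^{1/\kappa}$ so that $\sum_j P(|\ve_j| > t_{|j|}) < \infty$; Borel--Cantelli on the \emph{individual} tail events then gives that the tail sum is eventually identically zero almost surely. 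No moment bound on $S_n^{(2)}$ is needed at all.

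This also means your diagnosis of the moment condition is off. With your sharper bound $(2n+1)^{-1}\sum_j W_{h_n}(x-x_j)=O(1)$ (which is correct, and tighter than the paper's crude $\sup|W|$ bound), the tail-mean constraint $M_n^{-(\kappa-1)} = o(\lambda_n)$ is satisfied for essentially any $\kappa>2$ and is \emph{not} the source of $\kappa > 2 + 1/b$; your two windows overlap well before that. The condition $\kappa > 2+1/b$ actually arises from combining the Bernstein upper bound $M_n \lesssim n^{1/2}h_n^{1/2}/\log^{1/2}n$ with the \emph{Masry lower bound} $M_n \gtrsim n^{1/\kappa}(\log n)^{1/\kappa}$ required for the individual-tail Borel--Cantelli; at the smallest admissible $h_n\asymp(\log n/n)^{1/(2b+1)}$ this forces $n^{1/\kappa} \lesssim n^{b/(2b+1)}$, i.e.\ $\kappa > (2b+1)/b = 2+1/b$. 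Replace your tail-fluctuation step with this eventual-vanishing argument and the proof goes through.
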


The two lemmas above imply that we can obtain a strong uniform
rate of convergence of the estimator $\htheta$ for $\theta$ by
choosing a regularizing sequence $\{h_n\}_{n \geq 1}$ that balances
the asymptotic orders of both the bias and consistency, i.e.\
\begin{equation} \label{bw} 
h_n = O\big( n^{-1/(2s + 2b + 1)}\log^{1/(2s + 2b + 1)}(n) \big).
\end{equation}
For this choice of regularizing parameters, we have $(nh_n^{b +
  1})^{-1} = o(h_n^{s})$, which implies the bias of $\htheta$ has the
order $O(h_n^{s})$. Note, Lemma \ref{lemthetahatconsistency} requires
the responses to have a finite moment of order larger than $2 + 1/b$,
which is only a sufficient condition. One can easily show that $\kappa
> 2 + 1/(s + b)$ is necessary when $\{h_n\}_{n \geq 1}$ satisfies
\eqref{bw}, which is more reasonable for situations when $b \to 0$. We
now state the uniform rate of convergence of $\htheta$ for $\theta$ when
the parameter sequence $\{h_n\}_{n \geq 1}$ satisfies \eqref{bw}, and
two additional properties of the estimator $\htheta$.

\begin{theorem} \label{thmthetahatuniformrate}
Let the assumptions of Lemma \ref{lemthetahatconsistency} hold, but
now only requiring $\kappa > 2 + 1/(s + b)$. Choose the regularizing
sequence $\{h_n\}_{n \geq 1}$ to satisfy \eqref{bw}. Then
\begin{equation*}
\sup_{x \in [-1/2,\,1/2]} \Big|\htheta(x) - \theta(x)\Big|
 = O\big(n^{-s/(2s + 2b + 1)}\log^{s/(2s + 2b + 1)}(n)\big),
 \qquad\text{a.s.}
\end{equation*}
If, additionally, $s > (2b + 1)/(2\gamma)$, for some $0 < \gamma \leq
1$, then
\begin{equation*}
\bigg[\sup_{x \in [-1/2,\,1/2]} \Big|\htheta(x) - \theta(x)\Big|
 \bigg]^{1 + \gamma} = o(n^{-1/2}),
 \qquad\text{a.s.}
\end{equation*}
If $\Lambda$ satisfies $\int_{-\infty}^{\infty}\,|u|^{s +
  b - 1/2}|\Lambda(u)|\,du < \infty$, then, for large enough $n$,
\begin{equation*}
\htheta - \theta \in \RR_{s - 1/2,1},
 \qquad\text{a.s.},
\end{equation*}
where $\RR_{s - 1/2,1} = \{q \in \RR_{s - 1/2}\,:\,\|q\|_{\infty} \leq
1\}$ is the unit ball of the metric space
$(\RR_{s - 1/2},\,\|\cdot\|_{\infty})$.
\end{theorem}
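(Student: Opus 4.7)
My plan is to handle the three assertions in sequence. Parts one and two should follow quickly from Lemmas \ref{lemhthetabias} and \ref{lemthetahatconsistency}, while part three will require a new series-norm bound on the Fourier coefficients of $\htheta - \theta$.

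For Part 1, I would decompose $\htheta(x) - \theta(x)$ into bias $E\htheta(x) - \theta(x)$ plus stochastic fluctuation $\htheta(x) - E\htheta(x)$, apply Lemmas \ref{lemhthetabias} and \ref{lemthetahatconsistency} for sup-norm bounds on each, and substitute the bandwidth $h_n = n^{-1/(2s+2b+1)}\log^{1/(2s+2b+1)}(n)$ from \eqref{bw}. A quick exponent check gives $(nh_n^{b+1})^{-1} = o(h_n^s)$ (since $2s + b > s$), so the bias collapses to $O(h_n^s)$. Both the bias and the stochastic rate then evaluate to $O(n^{-s/(2s+2b+1)}\log^{s/(2s+2b+1)}(n))$, and the triangle inequality delivers the claimed uniform rate. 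Part 2 will follow by raising the Part 1 bound to the power $1+\gamma$: the exponent $s(1+\gamma)/(2s+2b+1)$ strictly exceeds $1/2$ precisely when $s > (2b+1)/(2\gamma)$, and the strict inequality absorbs the polylogarithmic factor, yielding $o(n^{-1/2})$.

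Part 3 will require controlling a series norm of $\htheta - \theta$. Writing $\widehat{\Theta}(k) = \Lambda(h_nk)\hR(k)/\Psi(k)$ for the Fourier coefficients of $\htheta$, I would decompose
\begin{equation*}
\widehat{\Theta}(k) - \Theta(k)
 = \big(\Lambda(h_nk) - 1\big)\Theta(k)
 + \frac{\Lambda(h_nk)\big(\hR(k) - R(k)\big)}{\Psi(k)}.
\end{equation*}
The series $\sum_k|k|^{s-1/2}|\widehat{\Theta}(k) - \Theta(k)|$ then splits into a deterministic regularization piece and a stochastic piece. The deterministic piece is dominated by $2\sum_k |k|^{s-1/2}|\Theta(k)|$, which is finite because $\theta \in \RR_s \subset \RR_{s-1/2}$. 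For the stochastic piece, I would take expectation and use $E|\hR(k) - R(k)| \leq Cn^{-1/2}$ (from the $O(1/n)$ sampling variance and the discretization bias already controlled inside Lemma \ref{lemhthetabias}) together with $|\Psi(k)|^{-1} \leq C|k|^b$ from Assumption \ref{assumpPsi}, obtaining
\begin{equation*}
E\sum_k |k|^{s-1/2}\frac{|\Lambda(h_nk)||\hR(k) - R(k)|}{|\Psi(k)|}
 \leq \frac{C}{\sqrt{n}}\sum_k |k|^{s+b-1/2}|\Lambda(h_nk)|
 \sim \frac{C}{\sqrt{n}}\,h_n^{-(s+b+1/2)}\int|u|^{s+b-1/2}|\Lambda(u)|\,du,
\end{equation*}
where the integral is finite by the new integrability hypothesis on $\Lambda$. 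Because \eqref{bw} forces $(s+b+1/2)/(2s+2b+1) = 1/2$ exactly, $n^{-1/2}h_n^{-(s+b+1/2)} = O(\log^{-1/2}(n)) = O(1)$, so the stochastic piece has bounded expectation and is therefore almost surely finite. Combined with $\|\htheta - \theta\|_\infty \to 0$ a.s.\ from Part 1 (so $\|\htheta - \theta\|_\infty \leq 1$ for all large $n$), this would give $\htheta - \theta \in \RR_{s-1/2,1}$ a.s.\ for large $n$.

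The hardest step will be the exponent balance in Part 3: the integrability order $s + b - 1/2$ on $\Lambda$ appears tailored so that \eqref{bw} makes $n^{-1/2}h_n^{-(s+b+1/2)}$ collapse to $O(1)$, and any weaker hypothesis on $\Lambda$ would break the argument. Getting $E|\hR(k) - R(k)| \leq Cn^{-1/2}$ uniformly in $k$ will also require careful handling of the Riemann-sum discretization error in $\hR(k)$, in parallel with the analysis that already underlies Lemma \ref{lemhthetabias}.
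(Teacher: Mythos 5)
Your proposal follows the paper's proof in structure: Parts 1 and 2 are obtained exactly as in the paper by combining Lemmas \ref{lemhthetabias} and \ref{lemthetahatconsistency} with the bandwidth \eqref{bw}, and Part 3 rests on the same decomposition of the Fourier coefficients of $\htheta-\theta$ and the same exponent balance $n^{-1/2}h_n^{-(s+b+1/2)}=O(1)$ enabled by the hypothesis $\int|u|^{s+b-1/2}|\Lambda(u)|\,du<\infty$. The only (harmless) deviation is your treatment of the stochastic term in Part 3, where you bound $E|\hR(k)-R(k)|\le Cn^{-1/2}$ and sum, whereas the paper splits off $\max_{k}|E[\hR(k)]-R(k)|=O(n^{-1})$ via Lemma \ref{lemRhatOrder} and uses the almost-sure bound $\max_{k}|(2n+1)^{-1}\sum_{j}\ve_je^{-i2\pi kx_j}|=O(n^{-1/2}\log^{1/2}(n))$; both yield almost-sure finiteness of the series $\sum_{\xi}|\xi|^{s-1/2}|\hat\Theta(\xi)-\Theta(\xi)|$, which together with Part 1 gives $\htheta-\theta\in\RR_{s-1/2,1}$ for large $n$.
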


\begin{remark} \label{remhthetasmoothnessrate}
The second statement of Theorem \ref{thmthetahatuniformrate} requires
the smoothness index $s$ of the function space $\RR_{s}$ to be larger
than the degree of ill--posedness $b$ of the inverse problem, which is
a stronger requirement than what has appeared in the literature
before. The additional smoothness is simply explained by the
entanglement of the smoothness index $s$ and the degree of
ill--posedness $b$ in the strong uniform consistency rate given in the
first statement of Theorem \ref{thmthetahatuniformrate}: $O(n^{-s/(2s
  + 2b + 1)}\log^{s/(2s + 2b + 1)}(n))$. This entanglement also occurs
for indirect regression estimators satisfying minimax optimality,
where now the integrated mean squared error has the order
$O(n^{-(2s)/(2s + 2b + 1)})$.
\end{remark}

We are now ready to state our main results concerning the estimator
$\hFemp$.

\begin{theorem} \label{thmFhatExpan}
Assume the distribution function $F$ admits a bounded Lebesgue density
function $f$ that is H\"older continuous with exponent $0 < \gamma
\leq 1$, and let $\ve_{-n},\ldots,\ve_{n}$ have a finite absolute
moment of order $\kappa > 2 + 1/(s + b)$. Let the assumptions of
Theorem \ref{thmthetahatuniformrate} be satisfied, with $s > \max\{(2b
+ 1)/(2\gamma),\,3/2\}$. Finally, let the regularizing sequence
$\{h_n\}_{n \geq 1}$ satisfy \eqref{bw}. Then
\begin{equation*}
\sup_{t \in \R} \bigg| \frac{1}{2n + 1} \sum_{j = -n}^{n}
 \Big\{ \1\big[\hve_j \leq t\big] - \1\big[\ve_j \leq t\big]
 - \ve_jf(t) \Big\} \bigg| = \opn.
\end{equation*}
\end{theorem}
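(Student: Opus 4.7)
The plan is to carry out a standard residual-based empirical-process linearisation around $\ve_j$. Setting $\Delta_j = [K\htheta](x_j) - [K\theta](x_j)$, so that $\hve_j = \ve_j - \Delta_j$, I decompose
\begin{equation*}
\1\big[\hve_j \leq t\big] - \1\big[\ve_j \leq t\big]
= \Big\{\1\big[\ve_j \leq t + \Delta_j\big] - \1\big[\ve_j \leq t\big] - f(t)\Delta_j\Big\} + f(t)\Delta_j,
\end{equation*}
reducing the theorem to two uniform-in-$t$ statements: (i) $\sup_t\big|\frac{1}{2n+1}\sum_j\{\1[\ve_j\leq t+\Delta_j] - \1[\ve_j\leq t] - f(t)\Delta_j\}\big| = \opn$, and (ii) $\frac{1}{2n+1}\sum_j\Delta_j = \frac{1}{2n+1}\sum_j\ve_j + \opn$, which together with boundedness of $f$ deliver the claim.

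For (i), I further split each summand into its conditional-mean piece $F(t+\Delta_j) - F(t) - f(t)\Delta_j$ and a centred empirical-process piece $\1[\ve_j\leq t+\Delta_j] - F(t+\Delta_j) - \1[\ve_j\leq t] + F(t)$. H\"older continuity of $f$ gives $|F(t+\Delta_j) - F(t) - f(t)\Delta_j| \leq C|\Delta_j|^{1+\gamma}$, and since $s > (2b+1)/(2\gamma)$ Theorem \ref{thmthetahatuniformrate} together with the $L^\infty$-boundedness of the convolution operator $K$ yields $\sup_j|\Delta_j|^{1+\gamma} = o(n^{-1/2})$ a.s., handling this piece uniformly. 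For the centred piece I would exploit the third conclusion of Theorem \ref{thmthetahatuniformrate}: eventually a.s., $\htheta - \theta \in \RR_{s-1/2,1}$, so $\Delta = K(\htheta - \theta)$ lies in a fixed ball of the smoother class $K(\RR_{s-1/2,1})$ with $\|\Delta\|_\infty \to 0$. A stochastic-equicontinuity argument (bracketing-entropy estimates for Sobolev balls, valid since $s - 1/2 > 1$ from $s > 3/2$) applied to the class $\{(y,x)\mapsto \1[y\leq t+\delta(x)] - \1[y\leq t] : t\in\R,\,\delta \in K(\RR_{s-1/2,1}),\,\|\delta\|_\infty \leq c_n\}$ with shrinking radius $c_n \to 0$ then delivers the required $\opn$ rate uniformly, after which I substitute $\delta = \Delta$.

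For (ii), I use the explicit form $[K\htheta](x_j) = \frac{1}{2n+1}\sum_i Y_i L_{h_n}(x_j - x_i)$ with $L_{h_n}(x) = \sum_k \Lambda(h_nk)e^{i2\pi kx}$, interchange the order of summation, and evaluate the inner design average by discrete orthogonality on the equispaced grid $\{j/(2n)\}_{j=-n}^n$: since $\sum_{j=-n}^n e^{i2\pi k x_j}$ equals $2n+1$ when $k\in 2n\Zint$ and $(-1)^k$ otherwise,
\begin{equation*}
\frac{1}{2n+1}\sum_j L_{h_n}(x_j - x_i)
= 1 + \sum_{\ell\neq 0}\Lambda(2nh_n\ell)e^{-i4\pi n\ell x_i}
+ \frac{1}{2n+1}\sum_{k\notin 2n\Zint}\Lambda(h_nk)(-1)^k e^{-i2\pi k x_i}.
\end{equation*}
Assumption \ref{assumplambda} forces $\Lambda\in L^1(\R)$, and since \eqref{bw} yields $nh_n \gg n^{1/2}$, both correction terms are $o(n^{-1/2})$ uniformly in $i$. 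Substituting $Y_i = [K\theta](x_i) + \ve_i$, the $K\theta$-contributions cancel against $\frac{1}{2n+1}\sum_j[K\theta](x_j)$ (up to the same tolerance), and the $\ve_i$-contributions collapse to $\frac{1}{2n+1}\sum_i\ve_i + \opn$ (the residual correction being absorbed using $\frac{1}{2n+1}\sum_i|Y_i| = O_P(1)$).

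The principal obstacle is step (i): synchronising the sup-over-$t$ empirical process with the data-dependent perturbation $\Delta_j$, which is correlated with the very $\ve_j$'s being summed through $\htheta$. The decisive ingredient is the third conclusion of Theorem \ref{thmthetahatuniformrate}, which confines $\htheta - \theta$ to the fixed unit ball $\RR_{s-1/2,1}$ almost surely for large $n$; without this uniform smoothness control, the class of perturbations would grow with $n$ and no Donsker/bracketing argument could deliver the required $\opn$ rate.
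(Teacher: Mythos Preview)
Your decomposition is essentially the paper's: it splits into an empirical-process modulus term (handled by a Donsker/bracketing argument over the class indexed by $\delta\in\RR_{s-1/2,1}$, exactly as in Lemma~\ref{lemFhatModulus} via the third conclusion of Theorem~\ref{thmthetahatuniformrate} and Proposition~\ref{corRsEntropy}), a H\"older--Taylor remainder controlled by $\|\htheta-\theta\|_\infty^{1+\gamma}=o(n^{-1/2})$, and a linearisation of the average of $\Delta_j$. The only cosmetic difference in (i) is that you centre by the design sum $\frac{1}{2n+1}\sum_jF(t+\Delta_j)$ where the paper uses $\int_{-1/2}^{1/2}F(t+\Delta(x))\,dx$; since $\Delta\in\RR_{s-1/2,1}$ eventually a.s.\ and $s>3/2$, the two differ by a Riemann-sum error of order $O(n^{-1})$, so the routes coincide.

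In step (ii) you genuinely diverge from the paper. The paper proves Proposition~\ref{propKthetahatExpan}, working with the \emph{continuous} integral $\int_{-1/2}^{1/2}[K(\htheta-\theta)](x)\,dx$ and exploiting exact orthogonality $\int_{-1/2}^{1/2}e^{i2\pi kx}\,dx=\delta_{k,0}$; this keeps only the $k=0$ Fourier mode and avoids any aliasing. Your discrete computation on the grid $\{j/(2n)\}$ is a valid alternative, and your evaluation of $\sum_{j=-n}^n e^{i2\pi kx_j}$ is correct, but the justification you give for the first correction term is incomplete: $\Lambda\in L^1(\R)$ alone does \emph{not} force $\sum_{\ell\neq 0}|\Lambda(2nh_n\ell)|=o(n^{-1/2})$, since integrability gives no pointwise decay at the lattice points $2nh_n\ell$. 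You need either some mild extra regularity on $\Lambda$ (e.g.\ eventual monotonicity, or a pointwise bound $|u|^p|\Lambda(u)|\leq C$ for some $p>1$, both of which hold for the kernels the paper actually uses) or, more simply, to pass from $\frac{1}{2n+1}\sum_j\Delta_j$ to $\int\Delta(x)\,dx$ by the same $O(n^{-1})$ Riemann-sum error and then invoke the paper's continuous argument. The paper's route buys you exactly this: no aliasing term to control.
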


\begin{corollary} \label{corFhatGaussian}
Under the conditions of Theorem \ref{thmFhatExpan}, the process
\begin{equation*}
(2n + 1)^{1/2}\{\hFemp(t) - F(t)\}
 = (2n + 1)^{-1/2} \sum_{j = -n}^{n}
 \Big\{\1\big[\ve_j \leq t\big] - F(t) + \ve_jf(t)\Big\} + \op,
\end{equation*}
$t \in \R$, weakly converges to a mean zero Gaussian process
$\{Z(t)\,:\,t \in \R\}$, with covariance function, for $u,v \in \R$,
\begin{equation*}
\Sigma(u,v) = F\big(\min\{u,\,v\}\big) - F(u)F(v)
 + f(u)E\big[\ve\1[\ve \leq v]\big] + f(v)E\big[\ve\1[\ve \leq u]\big]
 + \sigma^2f(u)f(v),
\end{equation*}
writing $\sigma^2 = E[\ve^2]$ and $\ve$ for a generic random variable
with distribution function $F$.
\end{corollary}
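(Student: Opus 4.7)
The plan is to deduce both the expansion and the weak convergence as essentially routine consequences of the uniform approximation in Theorem~\ref{thmFhatExpan}. I would first rewrite
\begin{equation*}
(2n+1)^{1/2}\{\hFemp(t) - F(t)\} = (2n+1)^{-1/2}\sum_{j=-n}^{n}\big\{\1[\hve_j \leq t] - F(t)\big\},
\end{equation*}
then add and subtract $\1[\ve_j \leq t] + \ve_j f(t)$ inside the sum, and apply Theorem~\ref{thmFhatExpan} to bound the discrepancy uniformly in $t \in \R$ by $(2n+1)^{1/2}\cdot \opn = \op$. This immediately produces the displayed expansion in the statement of the corollary.

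Next, I would identify the linearized process as an empirical average of i.i.d.\ summands $\xi_j(t) = \1[\ve_j \leq t] - F(t) + \ve_j f(t)$. Since the $\ve_j$ are i.i.d.\ mean-zero with $E[\ve_j^2] = \sigma^2 < \infty$ and $f$ is bounded, for any finite collection $t_1,\ldots,t_k \in \R$ the vectors $(\xi_j(t_1),\ldots,\xi_j(t_k))$ are i.i.d.\ with mean zero and finite covariance, so the multivariate central limit theorem yields convergence of the finite-dimensional distributions to a mean zero Gaussian law. To identify the covariance $\Sigma(u,v) = E[\xi_1(u)\xi_1(v)]$, I would expand the product of $\{\1[\ve \leq u] - F(u) + \ve f(u)\}$ and $\{\1[\ve \leq v] - F(v) + \ve f(v)\}$ and use the identities $E[\1[\ve \leq u]\1[\ve \leq v]] = F(\min\{u,v\})$, $E[\1[\ve \leq u]] = F(u)$, and $E[\ve] = 0$; collecting terms gives the stated formula.

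For tightness in the uniform metric on $\R$, I would decompose the linearized process as $U_n(t) + f(t)Z_n$, where
\begin{equation*}
U_n(t) = (2n+1)^{-1/2}\sum_{j=-n}^{n}\big\{\1[\ve_j \leq t] - F(t)\big\},\qquad
Z_n = (2n+1)^{-1/2}\sum_{j=-n}^{n}\ve_j.
\end{equation*}
The empirical process $U_n$ is asymptotically tight by the classical Donsker theorem, since half-line indicators form a universal Donsker class and $F$ is continuous (by the density assumption). The second summand $f(t)Z_n$ factors as a bounded continuous deterministic function times a scalar variable that converges in distribution to $N(0,\sigma^2)$, hence is trivially tight in the uniform metric. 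The sum is therefore tight, and combined with convergence of finite-dimensional distributions this yields weak convergence of the linearized process to the stated Gaussian process, with a Slutsky-type argument absorbing the $\op$ remainder from the expansion. The heavy lifting for this corollary has already been done in Theorem~\ref{thmFhatExpan}; the only genuine task is the bookkeeping in the covariance computation.
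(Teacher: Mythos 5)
Your proposal is correct and follows exactly the route the paper intends: the paper gives no separate proof of Corollary \ref{corFhatGaussian}, treating it as an immediate consequence of Theorem \ref{thmFhatExpan} combined with the classical Donsker theorem for the error-based empirical process, the scalar CLT for $(2n+1)^{-1/2}\sum_j \ve_j$, and the routine covariance bookkeeping you carry out. Nothing is missing.
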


\begin{remark} \label{remFhatEfficient}
Model \eqref{modeleq} is a nonparametric regression. The estimator
$\hFemp$ has influence function $\1[\ve \leq t] - F(t) + \ve f(t)$,
where $\ve$ is a generic random variable with distribution function
$F$. If we additionally assume that $F$ has finite Fisher information
for location, it follows that $\hFemp$ is efficient for estimating
$F$, in the sense of H\'ajek and Le Cam, from the results of M\"uller,
Schick and Wefelmeyer (2004).
\end{remark}


\section{Regularization parameter selection and the smooth bootstrap
of residuals}
\label{apps}

We now consider the problem of choosing an appropriate sequence of
regularization parameters $\{h_n\}_{n \geq 1}$ required by the
estimator $\htheta$. Popular approaches in the literature suggest a
practical choice of regularization would be a scheme that minimizes
the integrated mean squared error (IMSE) of $\htheta$. However,
selection of such a parameter can also be viewed as a model selection
problem, where we select the \emph{most feasible} regression model
from a sequence of regression function estimators generated from a
sequence of regularization parameters. In the case of iterative
estimation procedures, a suitable stopping iteration is
sought. Multiscale and related methods based on partial sums of
normalized residuals have been thoroughly investigated in the
literature (see, for example, Gonz\'alez-Manteiga, Martinez-Miranda
and P\'erez-Gonz\'alez, 2004; Bissantz, Mair and Munk, 2006; Bissantz,
Mair and Munk, 2008; Davies and Meise, 2008, and Hotz et al.,
2012). Lepski methodology among others has recently become a popular
approach in this context, where the IMSE of the indirect regression
estimator is replaced by a suitable non-random objective function using
oracle inequalities (see, for example, Goldenshluger, 1999; Cavalier
and Tsybakov, 2002; Math\'e and Pereverzev, 2006; Blanchard and
Math\'e, 2012, and Blanchard, Hoffmann and Rei\ss, 2016). An important
approach for spectral cut-off estimators based on assessing the risk
hull of these estimates is investigated by Cavalier and Golubev
(2006), which was already discussed in the introduction. In contrast
to previous works, we will propose a methodology based on a smooth
bootstrap of the model residuals to form a consistent estimator of the
IMSE of $\htheta$, and, using the perspective of conducting model
selection, we propose choosing the regularization parameter sequence
that minimizes this quantity.

In the following result, we give the asymptotic order of the integrated
variance and the integrated squared bias of the estimator $\htheta$
that will lead to a rule-of-thumb approach for selecting regularization
parameters that approximately minimize the IMSE of $\htheta$.
\begin{proposition} \label{prophthetaVarBiassq}
Let $\theta \in \RR_{s}$, with $s \geq 1$, and let Assumptions
\ref{assumpPsi} and \ref{assumplambda} hold. Assume that
$\ve_{-n},\ldots,\ve_{n}$ have finite variance $\sigma^2$. Then, for any
regularizing sequence $\{h_n\}_{n \geq 1}$ satisfying $h_n \to 0$ such
that both $nh_{n}^{2b + 1} \to \infty$, as $n \to \infty$, and
$(nh_n^{b + 1})^{-1} = o(h_n^{s})$ hold, there are constants
$C_{\Lambda} > 0$ and $C_{R} > 0$ such that
\begin{equation*}
\int_{-1/2}^{1/2} E\Big[\big\{\htheta(x)
 - E\big[\htheta(x)\big]\big\}^2\Big]\,dx
 = C_{\Lambda}\sigma^2(nh_n^{2b + 1})^{-1} + o\big((nh_n^{2b + 1})^{-1}\big)
\end{equation*}
and
\begin{equation*}
\int_{-1/2}^{1/2} \Big\{E\big[\htheta(x)\big] - \theta(x)\Big\}^2\,dx
 = C_{R}h_n^{2s} + o\big(h_n^{2s}\big).
\end{equation*}
\end{proposition}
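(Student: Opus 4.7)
The strategy is to pass to the Fourier domain via Parseval's identity on $[-1/2,\,1/2]$ and reduce the resulting sums over Fourier modes to Riemann-sum approximations in the variable $u = h_n k$, exploiting the polynomial control on $\{\Psi(k)\}$ from Assumption \ref{assumpPsi}, the integrability of $|u|^b|\Lambda(u)|$ from Assumption \ref{assumplambda}, and the summability $\sum_k |k|^s|\Theta(k)|<\infty$ from $\theta\in\RR_s$. For the integrated variance, I would write $\htheta-E[\htheta]$ as its Fourier series and apply Parseval to get
\[
\int_{-1/2}^{1/2} E\big[\{\htheta(x) - E[\htheta(x)]\}^2\big]\,dx
 = \sum_k \frac{|\Lambda(h_nk)|^2}{|\Psi(k)|^2}\,\Var\big(\hR(k)\big).
\]
Because the $\ve_j$ are independent with common variance $\sigma^2$, a direct moment computation gives $\Var(\hR(k)) = \sigma^2/(2n+1)$, uniformly in $k$. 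Splitting the sum at $|k|=\Gamma$ and bounding $|\Psi(k)|^{-2}$ on the tails by constant multiples of $|k|^{2b}$ via Assumption \ref{assumpPsi}, the substitution $u=h_nk$ then shows $\sum_k |\Lambda(h_nk)|^2|k|^{2b}$ is an asymptotic Riemann sum for $h_n^{-(2b+1)}\int |u|^{2b}|\Lambda(u)|^2\,du$, yielding the stated leading order $C_{\Lambda}\sigma^2(nh_n^{2b+1})^{-1}$ with $C_{\Lambda}>0$.

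For the integrated squared bias, Parseval gives
\[
\int_{-1/2}^{1/2}\{E[\htheta(x)] - \theta(x)\}^2\,dx
 = \sum_k \Big|[\Lambda(h_nk)-1]\,\Theta(k) + \Lambda(h_nk)\,\Delta(k)/\Psi(k)\Big|^2,
\]
where $\Delta(k) := E[\hR(k)] - R(k)$ is the aliasing/Riemann-sum error between the discrete transform $E[\hR(k)]$ and $R(k) = \Theta(k)\Psi(k)$. I would expand the square into the regularization piece $\sum_{|k|>M/h_n}|1-\Lambda(h_nk)|^2|\Theta(k)|^2$, the aliasing piece $\sum_k |\Lambda(h_nk)\Delta(k)/\Psi(k)|^2$, and a cross term controlled by Cauchy--Schwarz. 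The regularization piece is the source of the leading order $C_R h_n^{2s}$, using $\theta\in\RR_s$ to place a weight of size $h_n^s$ on each active tail mode and the fact that $|1-\Lambda(h_nk)|$ is bounded away from zero on a neighborhood of infinity. The aliasing piece is treated by a Poisson-summation/Dirichlet-kernel computation on the grid $x_j = j/(2n)$, producing the uniform bound $|\Delta(k)/\Psi(k)| = O((nh_n^{b+1})^{-1})$ on the effective support of $\Lambda$ (the estimate already underlying Lemma \ref{lemhthetabias}) and hence an integrated contribution $O((nh_n^{b+1})^{-2}) = o(h_n^{2s})$ under the imposed condition $(nh_n^{b+1})^{-1} = o(h_n^s)$.

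The main obstacle is extracting the \emph{exact} asymptotic equivalent $C_R h_n^{2s}(1+o(1))$ with $C_R>0$ for the tail sum $\sum_{|k|>M/h_n}|1-\Lambda(h_nk)|^2|\Theta(k)|^2$, rather than just the upper bound $O(h_n^{2s})$: the summability condition on $|k|^s|\Theta(k)|$ alone yields the bound, but identifying a strictly positive limiting constant requires a careful treatment of the tail as a Riemann-type sum with weight $|1-\Lambda|^2$ against the effective profile of $|\Theta|^2$ at the smoothness scale~$s$. A secondary technical issue is the Poisson-summation analysis for $\Delta(k)$, which must simultaneously accommodate the classical aliasing contribution of order $n^{-(s+b)}$ and the boundary corrections of order $n^{-1}$ arising from the specific grid $x_j = j/(2n)$, $j=-n,\ldots,n$.
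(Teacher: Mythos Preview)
Your approach is essentially the paper's: both arguments pass to the Fourier side via Parseval and reduce the integrated variance to $\frac{\sigma^2}{2n+1}\sum_k |\Lambda(h_nk)|^2/|\Psi(k)|^2$, then control this sum by the same mechanism used in Lemma~\ref{lemhthetabias}. The paper's proof is in fact considerably terser than yours: after writing the Parseval identity it simply asserts that ``repeating the arguments in the proof of Lemma~\ref{lemhthetabias}'' gives $\sum_k \Lambda^2(h_nk)/\Psi^2(k) = O(h_n^{-2b-1})$ and then says ``therefore, we can specify $C_\Lambda>0$''; for the bias it writes only ``the second assertion follows directly by an application of Lemma~\ref{lemhthetabias}.'' So your Parseval decomposition of the bias into a regularization piece, an aliasing piece, and a cross term is already more detailed than what the paper provides.

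Your stated obstacle about extracting a genuine asymptotic equivalence $C_R h_n^{2s}(1+o(1))$ with $C_R>0$ is a real one, and the paper does not resolve it either: Lemma~\ref{lemhthetabias} yields only the upper bound $O(h_n^{2s})$ under the condition $(nh_n^{b+1})^{-1}=o(h_n^s)$. Indeed, as the proposition is literally stated, the second assertion can fail: take $\theta$ to be a trigonometric polynomial, which lies in $\RR_s$ for every $s\geq 1$; then $[\Lambda(h_nk)-1]\Theta(k)\equiv 0$ for all large $n$ and the integrated squared bias is $O((nh_n^{b+1})^{-2})=o(h_n^{2s})$, so no $C_R>0$ exists. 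The exact-rate statement implicitly requires that $s$ be the maximal smoothness of $\theta$ (the $s_0$ of Remark~\ref{bootstrapbandwidth}), which is not part of the hypotheses here. A parallel caveat applies to $C_\Lambda$: Assumption~\ref{assumpPsi} gives only $|k|^b|\Psi(k)|\in(C_\Psi,C_\Psi^*)$, so your Riemann-sum argument yields $h_n^{2b+1}\sum_k |\Lambda(h_nk)|^2/|\Psi(k)|^2 \asymp 1$ but not convergence to a single limit unless $|k|^b\Psi(k)$ itself converges. In short, your plan recovers exactly what the paper's proof establishes (the correct orders), and the difficulty you flag is a gap in the proposition's formulation rather than in your argument.
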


\begin{remark} \label{remhnoptimal}
From the results of Proposition \ref{prophthetaVarBiassq}, we can
obtain an approximately optimal regularizing sequence, in the sense of
minimizing the IMSE of $\htheta$:
\begin{equation*}
h_{n,opt} \approx \bigg(\frac{2b + 1}{2s}\frac{C_{\Lambda}}{C_{R}}
 \sigma^2 \bigg)^{1/(2s + 2b + 1)} n^{-1/(2s + 2b + 1)}.
\end{equation*}
Consequently, the integrated mean squared error of $\htheta$ is of
the order $O(n^{-(2s)/(2s + 2b + 1)})$. Setting $\epsilon = \epsilon_n =
O(n^{-1/2})$ in Table 1 on page 9 of Cavalier (2008) yields that
$\htheta$ is indeed minimax optimal for estimating $\theta$.
\end{remark}
\smallskip

The conclusion that $\htheta$, formed from a regularizing sequence of
order $O(n^{-1/(2s + 2b + 1)})$, is minimax optimal only guarantees the
estimation strategy is optimal in the sense that it both minimizes the
rate of convergence for the integrated mean squared error, a measure
of estimation performance, and that no other estimator will achieve a
faster rate of convergence for this performance metric. However, as we
can see from Remark \ref{remhnoptimal}, the choice of regularizing
parameters $\{h_{n,opt}\}_{n \geq 1}$ requires further investigation
by numerical or stochastic methods due to unknown constants that are
not directly estimable. For example, working with the approximately
optimal bandwidth choice in Remark \ref{remhnoptimal}, the constant
$C_{\Lambda}$ is proportional to the limit of $h_{n}^{2b + 1}\sum_{k =
  -\infty}^{\infty} \{\Lambda(h_nk)/\Psi(k)\}^2$, which could be
approximated by a finite series and a pilot regularizing sequence. On
the other hand, $C_{R}$ is essentially an asymptotically stabilized
bias. Usually, this is not observable, and, hence, a numerical method like
bootstrap or cross--validation is required to estimate it. In
addition, and more generally, the optimal bandwidth depends on the
unknown smoothness index $s$ of the function space
$\RR_{s}$. Estimation of this quantity is very difficult and likely
not even possible without harsh and confining assumptions. However,
an educated guess would lead to the optimal bandwidth choice
corresponding with the largest possible function class $\RR_{s}$. This
means choosing $s$ to be as small possible. Unfortunately, the
resulting methodology is still arbitrary.


\subsection{Smooth bootstrap of residuals}
\label{boot}

Computational approaches for automated and data--driven bandwidth
selection methods have been well--studied in the literature for many
nonparametric function estimators. The approaches generally focus on
estimating the IMSE of the estimator using either a cross--validation
or bootstrap approach, which can then be minimized with respect to the
choice of bandwidth in an exact or approximate way. Cao (1993) studies
two methods for selecting a bandwidth in a kernel density estimator
using a smooth bootstrap of their univariate data. More recently,
Neumeyer (2009) has proven the general validity of a smooth bootstrap
process of the model residuals from a nonparametric regression. Due to
its simplicity, we will introduce a similar smooth bootstrap process
that admits a consistent estimate of the IMSE of $\htheta$, which
requires mirroring the restrictions given by Theorem
\ref{thmFhatExpan} on model \eqref{modeleq} in the bootstrap
scheme. Throughout this section, we will describe the stochastic
properties of our random quantities using $P^*$--outer measure, which,
for a single bootstrap response $Y^*$, reduces to the conditional
probability function
\begin{equation*}
P_x^*(Y^* \leq t) = P_x(Y^* \leq t \,|\, \Dset)
= P_x(\ve^* \leq t - [K\htheta](x) \,|\, \Dset)
\end{equation*}
given the original sample of data $\Dset =
\{(x_{-n},Y_{-n}),\ldots,(x_{n},Y_{n})\}$. Here $\ve^*$ is a smooth
bootstrap model residual, which we construct as follows.

Let us begin with examining the requirements imposed by Theorem
\ref{thmFhatExpan} on model \eqref{modeleq}. We need to ensure our
smooth bootstrap model residual $\ve^*$ satisfies having a mean equal
to zero, independence, a finite moment of order $\kappa > 2 + 1/(s +
b)$ and a common distribution function $F_n^*$ that admits a bounded
Lebesgue density function $f_n^*$ that is H\"older continuous. The
first requirement is satisfied merely by centering our original model
residuals:
\begin{equation*}
\tilde{\ve}_j = \hve_j - \frac{1}{2n + 1} \sum_{l = -n}^{n} \hve_l,
 \qquad j = -n,\ldots,n.
\end{equation*}
Turning our attention to the next constraint, we can see that
conditioning on the original sample $\Dset$ and selecting from
$\tilde{\ve}_{-n},\ldots,\tilde{\ve}_{n}$ completely at random and
with replacement satisfies independence, in the sense of $P^*$--outer
measure. However, the remaining assumptions are not satisfied because
resampling in this way results in the bootstrap model residuals
$\tilde{\ve}_{j}^*$ having a discrete distribution.

To fulfill the last requirements imposed on model \eqref{modeleq}, we
will contaminate the randomly selected centered model residual
$\tilde{\ve}_{j}^*$ by an independent, centered random variable $U_j$
that has a finite moment of order $\kappa > 2 + 1/(s + b)$ and common
distribution function characterized by a bounded Lebesgue density
function $w$. Hence, we construct our smooth bootstrap model residuals
$\ve_{-n}^* = \tilde{\ve}_{-n}^* + c_nU_{-n},\ldots,\ve_{n}^* =
\tilde{\ve}_{n}^* + c_nU_n$. Here the sequence $\{c_n\}_{n \geq 1}$ is
a scaling sequence similar to a bandwidth for kernel density
estimation. Consequently, $\ve_{j}^*$ has the common distribution
function
\begin{equation} \label{Fnstar}
F_{n}^*(t) = P^*(\ve_{j}^* \leq t) = \frac{1}{(2n + 1)c_n}
 \sum_{j = -n}^{n} \int_{-\infty}^{t}
 w\bigg(\frac{u - \tilde{\ve}_j}{c_n}\bigg)\,du,
 \qquad t \in \R,
\end{equation}
and density function
\begin{equation*}
f_{n}^*(t) = \frac{1}{(2n + 1)c_n} \sum_{j = -n}^{n}
 w\bigg(\frac{t - \tilde{\ve}_j}{c_n}\bigg),
 \qquad t \in \R.
\end{equation*}
We can see that $F_n^*$ is a smooth estimator of $F$ based on a kernel
density estimator $f_n^*$ of the original error density $f$. Hence,
the remaining requirement imposed by Theorem \ref{thmFhatExpan} on
$F$ can be mirrored in the bootstrap process by choice of $w$, i.e.\
we can choose $w$ to be H\"older continuous with the desired
exponent. Using model \eqref{modeleq}, we obtain our bootstrap sample
$(x_{-n},Y_{-n}^*),\ldots,(x_{n},Y_{n}^*)$, where
\begin{equation*}
Y_{j}^* = \big[K\htheta\big](x_{j}) + \ve_j^*,
 \qquad j = -n,\ldots,n.
\end{equation*}

Define $\htheta^*$ as in \eqref{thetahat} but now with $Y_j^*$
replacing $Y_j$ and a regularizing sequence $\{g_n\}_{n \geq 1}$
replacing the regularizing sequence $\{h_n\}_{n \geq 1}$, which is
also chosen to satisfy \eqref{bw}. Choosing the scaling sequence
$\{c_n\}_{n \geq 1}$  such that $c_n = O(n^{-\alpha})$, for some   $0
< \alpha < 1/2 + 1/\kappa < 1$, results in the  bootstrap indirect
regression estimator $\htheta^*$ satisfying similar properties as
$\htheta$ given in Theorem \nolinebreak
\ref{thmthetahatuniformrate}. We summarize these results in
Proposition \ref{prophthetabootuniformrate} in Section
\ref{details}.

A particularly important use of bootstrapping in practice is to find
suitable quantiles for test statistics. In the case of residual-based
analysis, one is typically interested in functionals $T(F)$ of the
error distribution $F$ and wishes to test (say) $H_0\,:\,T(F) = 0$
versus $H_a\,:\,T(F) \neq 0$. Here $F$ is unknown, which requires
investigating the estimate $T(\hFemp)$ that is based on model
residuals. Neumeyer (2009) uses a smooth bootstrapping of
residuals obtained from nonparametric smoothing in a direct regression
model to estimate quantiles of $T_n = T(\hFemp)$ using a bootstrap
version of this quantity. We therefore expect analogous results from
Neumeyer (2009) to hold in the present context.

In the following, we work with residuals constructed from this
bootstrap data:
\begin{equation*}
\hve_j^* = Y_j^* - \big[K\htheta^*\big](x_j),
 \qquad j = -n,\ldots,n.
\end{equation*}
The following result is the analog of Theorem \ref{thmFhatExpan} for
the empirical distribution function of these residuals. The proof of
this result follows along the same lines as the proof of Theorem
\ref{thmFhatExpan} and its supporting results (see Section
\ref{details}). These have been omitted for brevity.

\begin{theorem} \label{thmFhatbootExpan}
Assume the density function $w$ is H\"older continuous with exponent
$0 < \gamma \leq 1$. Let the assumptions of Proposition
\ref{prophthetabootuniformrate} be satisfied, with $s > \max\{(2b +
1)/(2\gamma),\,3/2\}$. Then
\begin{equation*}
\sup_{t \in \R} \bigg| \frac{1}{2n + 1} \sum_{j = -n}^{n}
 \Big\{ \1\big[\hve_j^* \leq t\big] - \1\big[\ve_j^* \leq t\big]
 - \ve_j^*f_n^*(t) \Big\} \bigg| = o_{P^*}(n^{-1/2}).
\end{equation*}
\end{theorem}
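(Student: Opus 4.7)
The plan is to mirror, step by step, the argument that establishes Theorem \ref{thmFhatExpan}, simply replacing the deterministic regression target $\theta$ by the pilot estimator $\htheta$, the bandwidth $h_n$ by $g_n$, the error distribution $F$ by the smoothed bootstrap distribution $F_n^*$, and the H\"older density $f$ by $f_n^*$. All manipulations take place conditionally on the original sample $\Dset$ under $P^*$-outer measure, so almost-sure statements about $\htheta$ used in Theorem \ref{thmFhatExpan} are replaced by the in-$P^*$-probability statements about $\htheta^*$ supplied by Proposition \ref{prophthetabootuniformrate}. The analytic prerequisites ($f_n^*$ bounded and H\"older continuous, $\ve_j^*$ with finite $\kappa$-th moment) hold on a $P$-set of probability tending to one: the construction \eqref{Fnstar} together with H\"older continuity of $w$ gives the first, while the contamination $c_nU_j$ provides the moment, using $\kappa > 2 + 1/(s+b)$ and the uniform rate of Theorem \ref{thmthetahatuniformrate} to handle the $\tilde\ve_j$.

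Next, I would write $\hve_j^* = \ve_j^* - \Delta_j^*$ with $\Delta_j^* = [K\htheta^*](x_j) - [K\htheta](x_j)$, so that
\begin{equation*}
\1\big[\hve_j^* \leq t\big] - \1\big[\ve_j^* \leq t\big]
 = \1\big[\ve_j^* \leq t + \Delta_j^*\big] - \1\big[\ve_j^* \leq t\big],
\end{equation*}
and add and subtract the $P^*$-conditional mean $F_n^*(t+\Delta_j^*) - F_n^*(t)$. The H\"older continuity of $f_n^*$ with exponent $\gamma$ gives
\begin{equation*}
F_n^*(t + \Delta_j^*) - F_n^*(t) = f_n^*(t)\Delta_j^* + R_j^*(t),
\qquad |R_j^*(t)| \leq C|\Delta_j^*|^{1+\gamma},
\end{equation*}
and the uniform bound on $\max_j|\Delta_j^*|$ supplied by Proposition \ref{prophthetabootuniformrate} combined with the hypothesis $s > (2b+1)/(2\gamma)$ makes the remainder $\sup_t\max_j|R_j^*(t)| = o_{P^*}(n^{-1/2})$. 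It then remains to reduce the deterministic piece $f_n^*(t)(2n+1)^{-1}\sum_j \Delta_j^*$ to $f_n^*(t)(2n+1)^{-1}\sum_j \ve_j^*$. This uses the Fourier identity showing that $K$ applied to the deconvolution kernel $W_{g_n}$ produces an ordinary smoothing kernel whose Fourier transform is $\Lambda(g_n\cdot)$: expanding $\htheta^*$ and using $Y_\ell^* = [K\htheta](x_\ell) + \ve_\ell^*$ one recognises a Priestley--Chao-type smoother whose swapped Riemann sum $(2n+1)^{-1}\sum_j K^{\Lambda}_{g_n}(x_j - x_\ell) = 1 + o(n^{-1/2})$ uniformly in $\ell$, by $\Lambda(0)=1$, the periodic extension of $\psi$, and the uniform design; absorbing the deterministic pilot bias into $o_{P^*}(n^{-1/2})$ uses \eqref{bw} and $s > 3/2$.

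The main obstacle is the empirical-process remainder
\begin{equation*}
\sup_{t \in \R}\bigg|\frac{1}{2n+1}\sum_{j=-n}^{n}\Big\{\1\big[\ve_j^* \leq t + \Delta_j^*\big] - \1\big[\ve_j^* \leq t\big] - F_n^*(t+\Delta_j^*) + F_n^*(t)\Big\}\bigg| = o_{P^*}(n^{-1/2}).
\end{equation*}
I would treat this by the bracketing/chaining device standard for residual empirical processes: cover the class $\{(u,v)\mapsto \1[u \leq t+v] - \1[u \leq t] : t \in \R,\ |v| \leq \eta_n\}$ by $L_2(P^*)$-brackets whose envelope has $L_2(P^*)$-norm of order $\sqrt{\eta_n \|f_n^*\|_\infty}$, where $\eta_n$ is the $P^*$-high-probability bound on $\max_j|\Delta_j^*|$ from Proposition \ref{prophthetabootuniformrate}. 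The resulting maximal inequality delivers a rate of order $\sqrt{\eta_n}\cdot\sqrt{\log n / n}$, and the hypothesis $s > \max\{(2b+1)/(2\gamma),\,3/2\}$ forces this to be $o(n^{-1/2})$ via the bootstrap analog of the second display of Theorem \ref{thmthetahatuniformrate}. The only genuine difficulty is checking that every constant and tail bound entering the empirical-process inequality can be chosen uniformly in the random pilot $\htheta$ on an event of $P$-probability tending to one, so that the $o_{P^*}$ conclusion truly holds; this is tedious but routine and runs entirely parallel to the argument already needed for Theorem \ref{thmFhatExpan}.
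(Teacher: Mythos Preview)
Your overall plan matches the paper exactly: the authors state outright that the proof ``follows along the same lines as the proof of Theorem~\ref{thmFhatExpan} and its supporting results,'' and the decomposition into an empirical-process modulus term, a H\"older linearization term, and a Proposition~\ref{propKthetahatExpan}-type term is precisely what you describe. Your treatment of the linearization remainder via the second assertion of Proposition~\ref{prophthetabootuniformrate} and of the deterministic piece via the Fourier identity is on target.

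There is, however, a gap in your empirical-process step. You propose to bracket the class $\{(u,v)\mapsto \1[u\leq t+v]-\1[u\leq t]:t\in\R,\ |v|\leq\eta_n\}$ and read off a rate $\sqrt{\eta_n\log n/n}$. The difficulty is that the summands $\1[\ve_j^*\leq t+\Delta_j^*]$ are \emph{not} evaluations of a single function at i.i.d.\ data: each $\Delta_j^*=[K(\htheta^*-\htheta)](x_j)$ varies with $j$ and, worse, depends on the entire vector $(\ve_{-n}^*,\ldots,\ve_n^*)$ through $\htheta^*$. A uniform bound over $|v|\leq\eta_n$ with a \emph{single} $v$ does not control a sum where each term carries its own perturbation, and the class of all sequences $(v_{-n},\ldots,v_n)$ with $\max_j|v_j|\leq\eta_n$ is far too large to be Donsker. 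The paper's route through Lemma~\ref{lemFhatModulus} (and its bootstrap analog) avoids this by indexing the process over the deterministic function class $\RR_{s-1/2,1}$: for fixed $\phi$ in this class the summands $\1[\ve_j^*\leq t+[K\phi](x_j)]-F_n^*(t+[K\phi](x_j))$ are independent and centered under $P^*$, and the bracketing-entropy bound of Proposition~\ref{corRsEntropy} (with $s-1/2>1$, whence the hypothesis $s>3/2$) controls the supremum over $(t,\phi)$. One then plugs in the random $\phi=\htheta^*-\htheta$, which belongs to $\RR_{s-1/2,1}$ by the third assertion of Proposition~\ref{prophthetabootuniformrate}. You never invoke that third assertion, and your claimed rate would make the hypothesis $s>3/2$ superfluous for this step, which is a sign that the argument as written cannot be correct.
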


Note, this result always includes the optimal bandwidth choice $c_n =
O(n^{-1/5})$ for density estimation. This fact in combination with the
results of Proposition \ref{propGausCovTerms} yield the following
analog of Corollary \ref{corFhatGaussian}:

\begin{corollary} \label{corFhatbootGaussian}
Let the assumptions of Theorem \ref{thmFhatbootExpan} be
satisfied. If, additionally, both of the densities $f$ and $w$ are
H\"older continuous with exponent $2/3 < \gamma \leq 1$, the scaling
sequence $\{c_n\}_{n \geq 1}$ satisfies $c_n = O(n^{-1/5})$ and $s >
(1 + \gamma)(2b + 1)/(3\gamma - 2)$, then the process
\begin{equation*}
(2n + 1)^{-1/2} \sum_{j = -n}^{n}
 \Big\{\1\big[\hve_j^* \leq t\big] - F_n^*(t)\Big\}
 = (2n + 1)^{-1/2}\sum_{j = -n}^{n}
 \Big\{\1\big[\ve_j^* \leq t\big] - F_n^*(t) + \ve_j^*f_n^*(t)\Big\}
 + o_{P^*}(1),
\end{equation*}
$t \in \R$, weakly converges, conditionally on the sample
$(x_{-n},\,Y_{-n}),\ldots,(x_{n},\,Y_{n})$, to a mean zero Gaussian
process $\{Z^*(t)\,:\,t \in \R\}$, with covariance function, for $u,v
\in \R$,
\begin{align*}
\Sigma^*(u,v) &= F_n^*\big(\min\{u,\,v\}\big) - F_n^*(u)F_n^*(v)
 + f_n^*(u)E^*\big[\ve^*\1[\ve^* \leq v]\big] \\
&\quad + f_n^*(v)E^*\big[\ve^*\1[\ve^* \leq u]\big]
 + \sigma^{2,*}f_n^*(u)f_n^*(v),
\end{align*}
where $\ve^*$ is a generic random variable with distribution function
$F_n^*$ and $\sigma^{2,*} = E^*[(\ve^*)^2]$. Additionally, we have
\begin{equation*}
\sup_{u,v \in \R} \Big| \Sigma^*(u,v) - \Sigma(u,v) \Big| = \op,
\end{equation*}
where $\Sigma$ is given in Corollary \ref{corFhatGaussian}.
\end{corollary}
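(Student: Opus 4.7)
The plan is to combine the asymptotic linearization provided by Theorem \ref{thmFhatbootExpan} with a conditional functional central limit theorem for the resulting linear term, and then to establish the uniform approximation of the bootstrap covariance kernel $\Sigma^*$ by the population covariance $\Sigma$. All stochastic statements below are understood in $P^*$-outer probability, conditionally on the original sample $\Dset$.

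The first displayed assertion I would handle as follows. Starting from Theorem \ref{thmFhatbootExpan}, the left-hand side equals, uniformly in $t$,
\begin{equation*}
S_n^*(t) = (2n+1)^{-1/2} \sum_{j=-n}^{n} \xi_j^*(t),
 \qquad \xi_j^*(t) = \1[\ve_j^* \leq t] - F_n^*(t) + \ve_j^* f_n^*(t),
\end{equation*}
up to an $o_{P^*}(1)$ remainder. Conditionally on $\Dset$, the $\xi_j^*(t)$ are i.i.d.\ with mean zero: the $\ve_j^*$ are i.i.d.\ under $P^*$ with distribution $F_n^*$, and $E^*[\ve_j^*] = 0$ by construction, since the $\tilde{\ve}_j$ are explicitly recentered and $E[U_j] = 0$. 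A direct computation of $E^*[\xi_1^*(u)\xi_1^*(v)]$ reproduces the stated formula for $\Sigma^*$. Finite-dimensional convergence to a Gaussian law with this covariance follows from the conditional Lindeberg--Feller CLT, since $|\xi_j^*(t)|$ is uniformly bounded in conditional $L^\kappa$ for some $\kappa > 2$ (using the moment assumption on $U_j$ and the boundedness of $w$, hence of $\|f_n^*\|_\infty$). Asymptotic equicontinuity follows from the observation that $\{\xi_1^*(t) : t \in \R\}$ is the sum of the monotone class of centered indicators and a one-parameter linear class, both uniformly Donsker with bracketing entropy growing only logarithmically in $1/\epsilon$.

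The delicate part of the proof is the uniform approximation $\sup_{u,v}|\Sigma^*(u,v) - \Sigma(u,v)| = \op$, which I would reduce to four separate statements: $\sup_t|F_n^*(t) - F(t)| = \op$, $\sup_t|f_n^*(t) - f(t)| = \op$, $\sup_t|E^*[\ve^*\1[\ve^* \leq t]] - E[\ve \1[\ve \leq t]]| = \op$, and $\sigma^{2,*} \to \sigma^2$ in probability. The first, third, and fourth follow routinely from Theorem \ref{thmFhatExpan} combined with elementary kernel-smoothing estimates for the bias introduced by the contamination $c_n U_j$. The principal obstacle is the uniform consistency of $f_n^*$, which I would decompose as
\begin{equation*}
f_n^*(t) - f(t) = \{f_n^*(t) - \tilde{f}_n(t)\} + \{\tilde{f}_n(t) - f(t)\},
 \qquad
 \tilde{f}_n(t) = \frac{1}{(2n+1)c_n}\sum_{j=-n}^{n} w\Big(\frac{t - \ve_j}{c_n}\Big).
\end{equation*}
The second bracket is $\op$ uniformly by standard kernel density theory, using H\"older continuity of $f$ with exponent $\gamma > 2/3$ and $c_n = O(n^{-1/5})$. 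For the first bracket, H\"older continuity of $w$ with exponent $\gamma$ yields the uniform bound $O(c_n^{-(1+\gamma)}\|\tilde{\ve} - \ve\|_\infty^\gamma)$. Writing $\tilde{\ve}_j - \ve_j = -[K(\htheta - \theta)](x_j) - (2n+1)^{-1}\sum_{l=-n}^n \hve_l$ and using that convolution with the bounded density $\psi$ is a contraction, together with the strong uniform rate from Theorem \ref{thmthetahatuniformrate}, this reduces to $O(n^{(1+\gamma)/5 - s\gamma/(2s+2b+1)})$ up to logarithmic factors, which tends to zero exactly when $s(3\gamma - 2) > (1+\gamma)(2b+1)$. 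This is precisely the smoothness assumption imposed in the corollary, and locating this balance is the heart of the proof; the conclusions of the corollary then follow from the finite-dimensional conditional CLT, tightness, and the established uniform convergence of $\Sigma^*$ to $\Sigma$.
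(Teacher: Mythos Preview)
Your proposal is correct and follows essentially the same route as the paper. The paper does not give a self-contained proof of the corollary; it packages the covariance-consistency step into Proposition~\ref{propGausCovTerms} (whose proof is only sketched, citing Silverman (1978) for the true-error kernel density and Neumeyer (2009) for the residual corrections) and then invokes Theorem~\ref{thmFhatbootExpan} for the linearization, exactly as you do. In particular, your derivation of the threshold $s > (1+\gamma)(2b+1)/(3\gamma-2)$ from the balance $c_n^{-(1+\gamma)}\|\htheta-\theta\|_\infty^\gamma \to 0$ with $c_n = O(n^{-1/5})$ is precisely the calculation the paper alludes to in the paragraph preceding Proposition~\ref{propGausCovTerms}.
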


Following the observations on pages 207-209 in Neumeyer (2009), we
can immediately obtain valid smooth bootstrap approximations of
quantiles for test statistics that are constructed from continuous
functionals of $F$ by combining this fact with the continuous sample
paths of Gaussian processes and the continuous mapping theorem. We
conclude this section with the following remark:
\begin{remark} \label{smoothbootvalidity}
Both the original residual--based process,
\begin{equation*}
(2n + 1)^{-1/2} \sum_{j = -n}^{n} \Big\{
 \1\big[\hve_j \leq t\big] - F(t) \Big\},
\end{equation*}
and its smooth bootstrap analogue,
\begin{equation*}
(2n + 1)^{-1/2} \sum_{j = -n}^{n} \Big\{
 \1\big[\hve_j^* \leq t\big] - F_n^*(t) \Big\},
\end{equation*}
have the same limiting distribution when the conditions of Corollary
\ref{corFhatbootGaussian} are satisfied. This limiting distribution is
given by the Gaussian process described in Corollary
\ref{corFhatGaussian}, which has continuous sample paths. It then
follows for statistics $T_n$ and their smooth bootstrap version
$T_n^*$ obtained from continuous functionals of $F$ satisfy the
following consistency property. Define $q_{n,\alpha}^*$ by $P^*(T_n^*
\leq q_{n,\alpha}^*) = \alpha$. Combining the continuity of the
functional used to construct $T_n$ and $T_n^*$ and the continuous
sample paths of Gaussian processes with the continuous mapping
theorem, we obtain
\begin{equation*}
P\big(T_n \leq q_{n,\alpha}^*\big) = \alpha + o(1),
\end{equation*}
which characterizes the validity of the proposed smooth bootstrap of
the model residuals. Hence, the bootstrap described here can be
immediately used to approximate unknown quantiles of test statistics
obtained from continuous functionals of $F$.
\end{remark}


\subsection{Regularization parameter selection by bootstrap}
\label{boothn}

Now we turn our attention to a different choice of regularization
parameters that also approximately minimizes the IMSE of the
indirect regression estimator $\htheta$. For clarity, throughout this
section we will subscript the estimators $\htheta$ and $\htheta^*$ by
the regularization parameters used to form them, i.e.\ we
write $\htheta_{h_n}$ to indicate the regularizing sequence
$\{h_n\}_{n \geq 1}$ is used to form the estimator $\htheta$. The IMSE
of $\htheta_{h_n}$, which we want to minimize with respect to the
regularizing sequence $\{h_n\}_{n \geq 1}$, is given by
\begin{equation} \label{IMSE}
IMSE\big(\htheta_{h_n}\big) = \int_{-1/2}^{1/2} E\Big[
 \big\{\htheta_{h_n}(x) - \theta(x)\big\}^2\Big]\,dx,
\end{equation}
which can be viewed as an objective function with respect to the
mapping $h_n \mapsto IMSE(\htheta_{h_n})$.

Following Cao (1993), we will arbitrarily choose the original
regularizing sequence $\{h_n\}_{n \geq 1}$ according to Theorem
\ref{thmthetahatuniformrate} as a pilot sequence to form an initial
and consistent estimator $\htheta_{h_n}$. A practical choice for
$\{h_n\}_{n \geq 1}$ is the rule-of-thumb parameter sequence given
in Remark \ref{remhnoptimal}, where the unknown constants are
estimated and the smoothness index $s$ is chosen as small as
possible. However, it is crucial for our approach to admit an
asymptotically optimal choice of regularizing parameters that the
pilot sequence $\{h_n\}_{n \geq 1}$ is chosen such that $s > s_0$,
where $s_0$ is the largest possible (finite) smoothness index such
that $\theta \in \RR_{s_0}$.

Consider the IMSE objective but now for the bootstrap
data, where we instead have $\htheta_{h_n}$ for the unknown function
$\theta$ in \eqref{IMSE}. Hence, we have an analogous form of
\eqref{IMSE} in $P^*$--outer measure that can be approximated via
Monte Carlo simulation:
\begin{equation} \label{IMSEboot}
IMSE^*\big(\htheta^*_{g_n}\big) = \int_{-1/2}^{1/2} E^*\Big[
 \big\{\htheta^*_{g_n}(x) - \htheta_{h_n}(x)\}^2\Big]\,dx.
\end{equation}

Since both $\htheta_{h_n}$ and $\htheta^*_{g_n}$ satisfy the
projective representation \eqref{thetahat}, it follows for the
expected values on the far right--hand sides of \eqref{IMSE} and
\eqref{IMSEboot} to be averages taken with respect to the distribution
functions $F$ and $F_n^*$, respectively. We can then use standard
arguments to show
\begin{equation*}
E^*\bigg[\int_{-1/2}^{1/2}
 \big\{\htheta^*_{g_n}(x) - \htheta_{h_n}(x)\big\}^2\,dx \bigg]
 = E\bigg[\int_{-1/2}^{1/2}
 \big\{\htheta_{h_n}(x) - \theta(x)\big\}^2\,dx \bigg] + \op.
\end{equation*}
Hence, we obtain $IMSE^*(\htheta^*_{g_n}) = IMSE(\htheta_{h_n}) +
\op$. This implies \eqref{IMSEboot} is an effective predictor of
\eqref{IMSE}, which implies that we can use the mapping $g_n \mapsto
IMSE^*(\htheta^*_{g_n})$ as an objective criteria for finding an
optimal regularizing sequence. It follows that we can choose
$\{g_{n,opt}\}_{n \geq 1}$ such that
\begin{equation} \label{gnopt}
g_{n,opt} = \argmin_{g \in (0,\,\hbar]} E^*\bigg[
 \int_{-1/2}^{1/2} \big\{\htheta^*_{g}(x) - \htheta_{h_n}(x)\big\}^2\,dx
 \bigg],
\end{equation}
where $\hbar > 0$ is a constant chosen larger than the optimal
regularization parameter. Consequently, the resulting regularization
parameters $\{g_{n,opt}\}_{n \geq 1}$ can be viewed as objective
corrections to the subjective pilot regularization parameters
$\{h_n\}_{n \geq 1}$.

Recall the Fourier frequency smoothing kernel $\Lambda$ used in the
deconvolution estimators $\htheta_{h_n}$ and $\htheta^*_{g_n}$. It is
easy to see that restricting the choice of $\Lambda$, and, hence,
restricting the choice of the resulting deconvolution smoothing
kernel from \eqref{thetahat}, leads to unique minimizers for each of
\eqref{IMSE} and \eqref{IMSEboot}, respectively. For example, choosing
$\Lambda$ as an indicator function, e.g.\ working with spectral
cut-off estimators, leads to the deconvolution smoothing kernel in
\eqref{thetahat} to be a smooth function with infinitely many
derivatives. The desired consistency property between the smooth
bootstrap selected optimal regularizing sequence $\{g_{n,opt}\}_{n
  \geq 1}$ from \eqref{gnopt} that minimizes \eqref{IMSEboot} and the
desired optimal regularizing sequence $\{h_{n,opt}\}_{n \geq 1}$ that
minimizes \eqref{IMSE} then follows from the consistency of
$IMSE^*(\htheta^*_{g_{n,opt}})$ for the desired
$IMSE(\htheta_{h_{n,opt}})$. We summarize these observations in the
following remark.

\begin{remark} \label{bootstrapbandwidth}
Let $s_0$ be the largest finite $s$ for which $\theta \in
\RR_{s}$. Choose the pilot regularizing sequence $\{h_n\}_{n \geq 1}$
according to the rule-of-thumb given in Remark \ref{remhnoptimal} with
$s > s_0$. We can restrict our choice of smoothing kernel by its
Fourier transform $\Lambda$ that allows for $\{h_{n,opt}\}_{n \geq 1}$
to be the unique minimizer of \eqref{IMSE}. Now choose
$\{g_{n,opt}\}_{n \geq 1}$ according to \eqref{gnopt}. Since
$IMSE^*(\htheta^*_{g_{n,opt}})$ is consistent for
$IMSE(\htheta_{h_{n,opt}})$, we have the desired $g_{n,opt} =
h_{n,opt} + \op$.
\end{remark}


\section{Finite sample properties}
\label{sims}

We conclude this article with a small numerical study of the previous
results, and we investigate the effectiveness of our smooth bootstrap
methodology for selecting a regularization parameter. In the following
simulations, we chose two regression functions $\theta_1$ and
$\theta_2$, where
\begin{equation*}
\theta_1(x) = 3\exp\big(-20x^2\big)
 \quad\text{and}\quad
 \theta_2(x) = 1 + 3 \cos\big( 3 \pi x / 4 \big)
 - 4 \cos^2( 3 \pi x \big),
 \quad x \in [-1/2,\,1/2].
\end{equation*}
The distortion function $\psi$ is taken as the Laplace density with a
mean of zero and a scale of $1/10$ that has been restricted to the
interval $[-1/2,\,1/2]$ as in Example \ref{expsi}, which also
satisfies Assumption \ref{assumpPsi} for the choice $b = 2$. The fixed
covariates are taken as $x_j = j/(2n + 1)$, which is asymptotically
equivalent to $j/(2n)$. This choice allows us to use the fast Fourier
transform algorithm for estimating the functions $\theta_1$ and
$\theta_2$. Finally, we consider two cases for the model errors:
normally distributed errors, with mean zero and scale $2/3$, and
$t$--distributed errors, with four degrees freedom and scaled
to $2/3$ as well. Our simulations consider samples of sizes $51$,
$101$, $201$ and $301$, i.e.\ $n$ is taken as $25$, $50$, $100$ and
$150$.

We work with the smoothing kernel that has Fourier coefficients
satisfying
\begin{equation*}
\Lambda(k) = \begin{cases}
  1, & \text{if } |k| \leq 7, \\
  (|k|/7)^{-6}, & \text{if } 7 < |k| \leq n,\\
  0, & \text{otherwise},
\end{cases}
\end{equation*}
which leads to considering function spaces $\RR_{s}$, with $5/2 < s <
7/2$. In order to select an appropriate regularization parameter for
the indirect regression function estimators, we work with the pilot
sequences $h_{n,1} = 5(2n + 1)^{-1/11}\log^{1/11}(2n + 1)$, which
corresponds with the choice $s = 3$ in \eqref{bw}, for estimating
$\theta_1$ and $h_{n,2} = 2.5(2n + 1)^{-1/11}\log^{1/11}(2n + 1)$ for
estimating $\theta_2$.

To create the smooth bootstrap of the residuals we have used standard
normally distributed contaminates $U_j$ and Silverman's rule
for selecting a bandwidth in kernel density estimation, i.e.\ we take
the scaling sequence $c_n = 1.06\hat{\sigma}(2n + 1)^{-1/5}$, where
$\hat{\sigma}$ is the estimated standard deviation of the model
residuals obtained using the pilot regularizing sequence. Using 200
smooth bootstrap replications to construct a suitable approximation of
the IMSE of the estimates of each of $\theta_1$ and $\theta_2$, we
take 100 equally spaced candidate regularization parameters in an
interval $[l_n,\,u_n]$, where $l_n = (2n + 1)^{-1/10}$, which results
in undersmoothed estimators, and $u_n = 10(2n +
1)^{-1/12}\log^{1/12}(2n + 1)$, which results in oversmoothed
estimators. Following the discussion in Section \ref{boothn}, we choose
the optimal regularization parameter $g_{n, opt}$ as the grid point
that minimizes this approximate IMSE, which we then use to construct
the resulting function estimators of $\theta_1$ and $\theta_2$.

\begin{figure}
\centering
\includegraphics[width=0.3\textwidth]{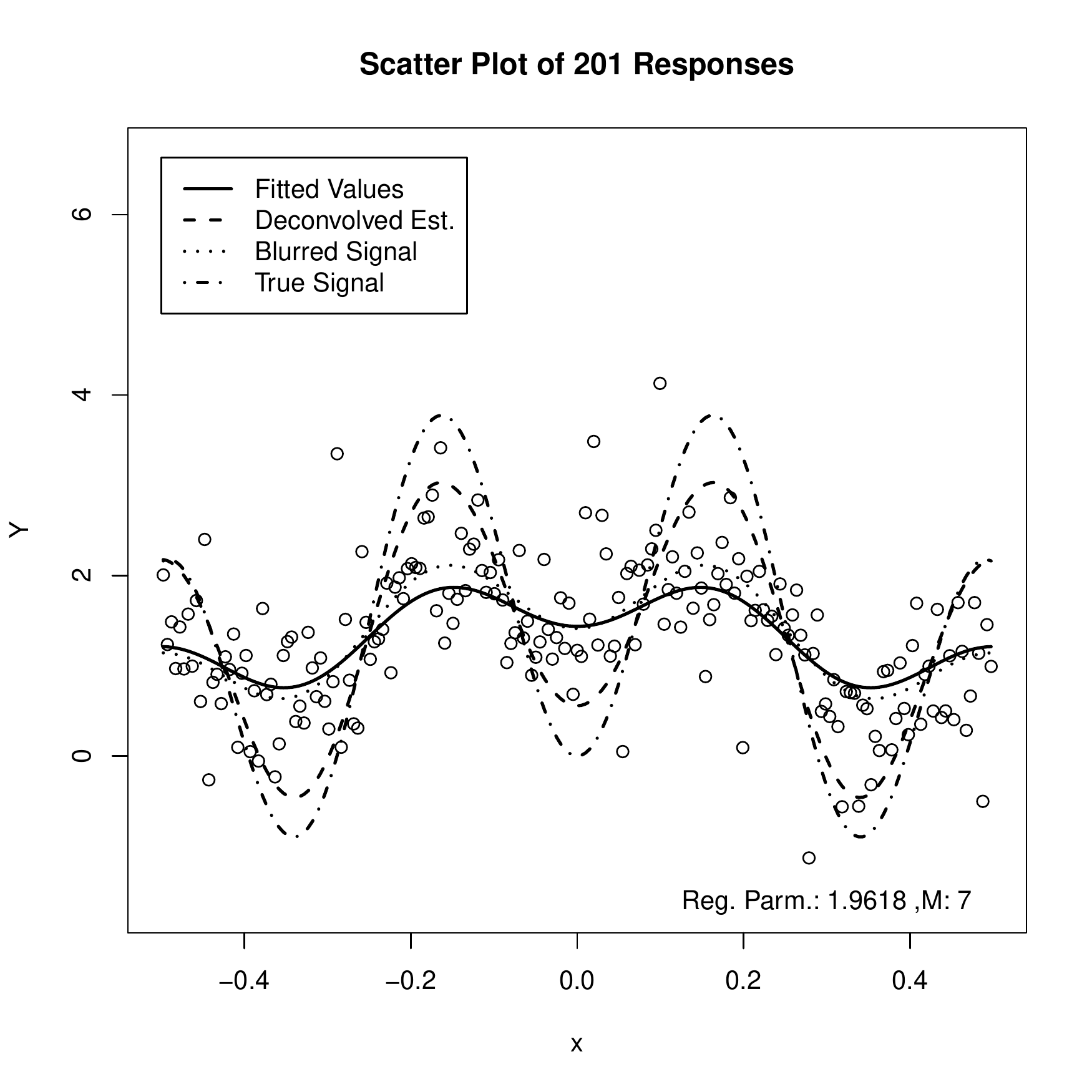}
\includegraphics[width=0.3\textwidth]{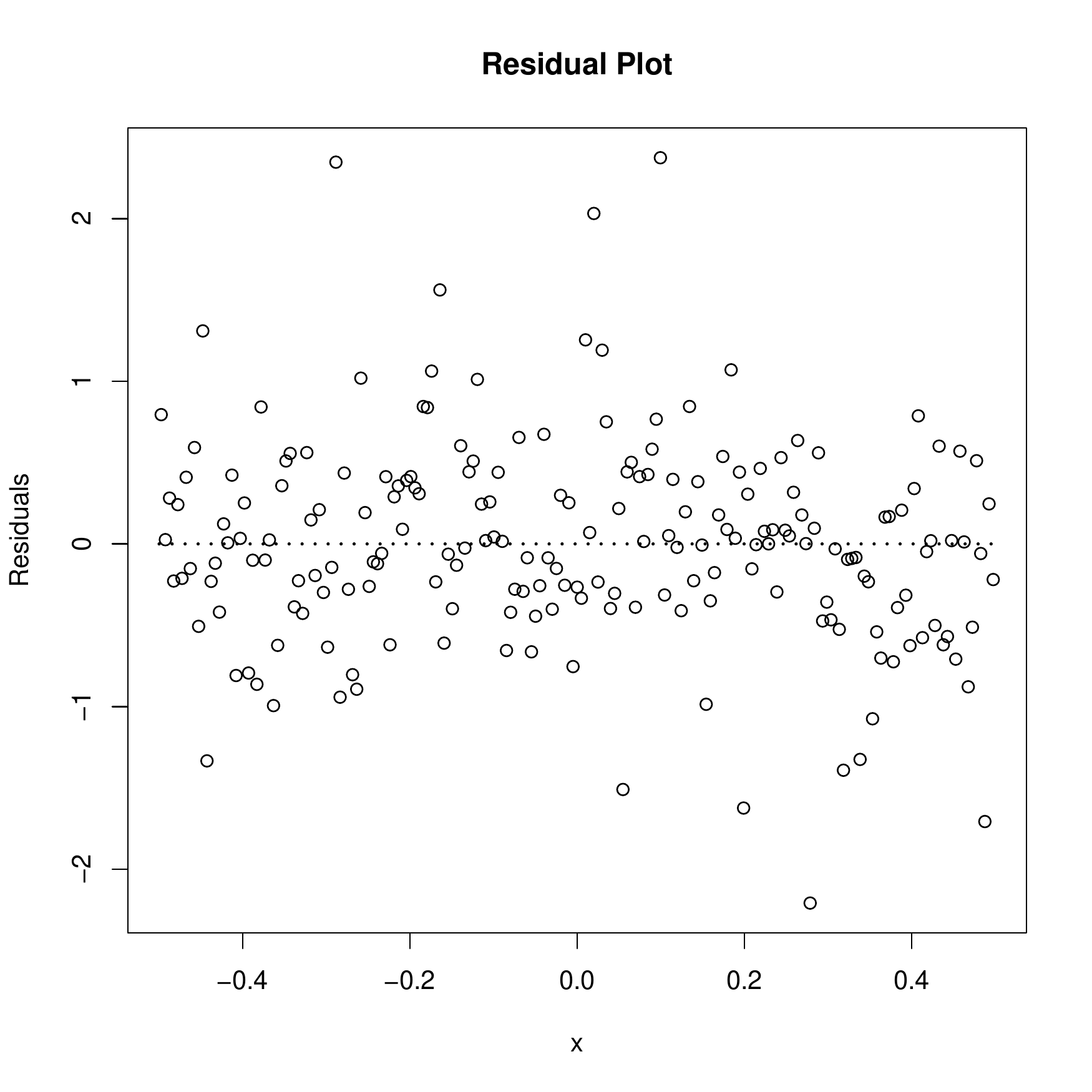}
\includegraphics[width=0.3\textwidth]{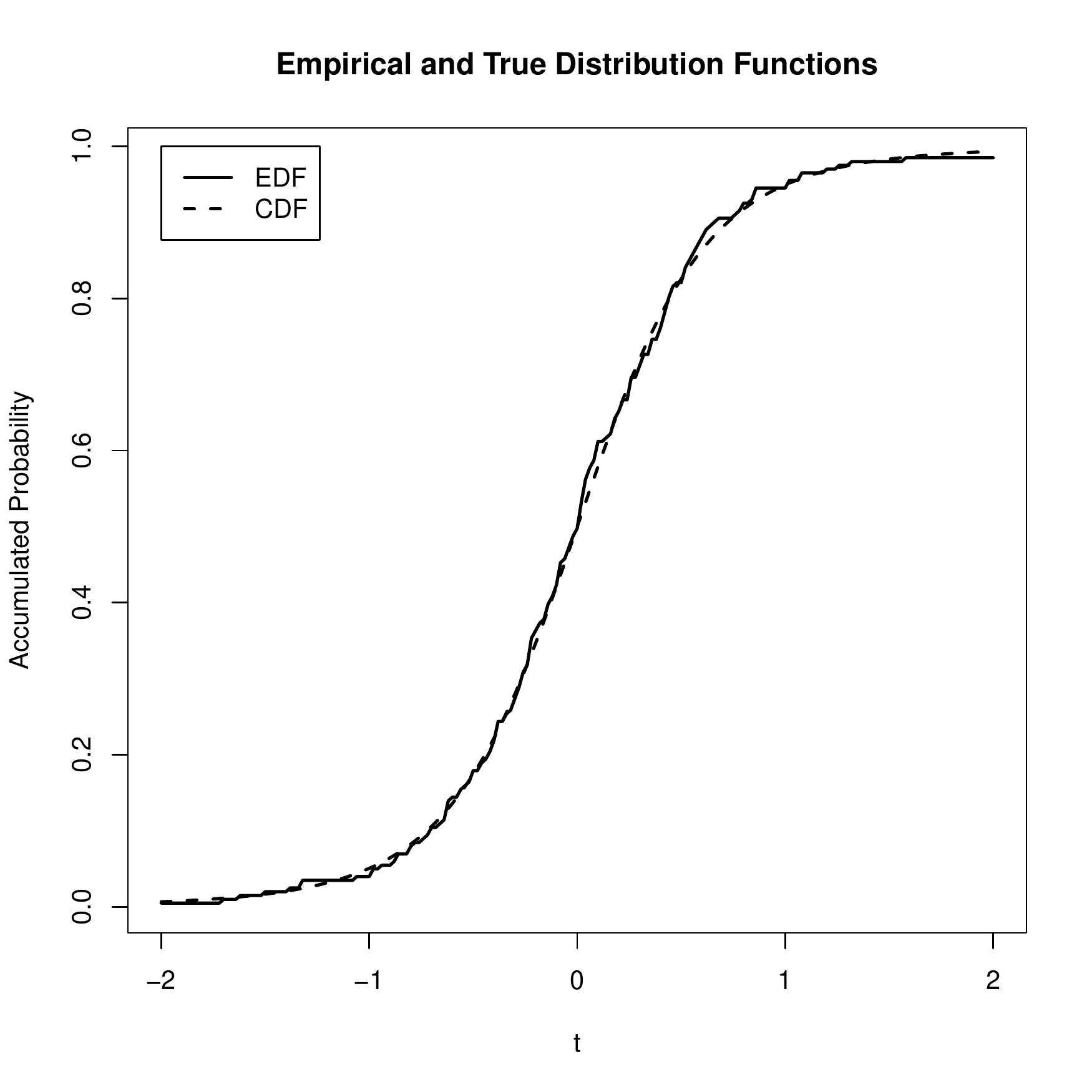}
\vspace{1ex}
\caption{{\it From left to right: A scatter plot of the data
  overlaid with the fitted blurred regression (solid),
  estimated regression (dashed), true blurred regression function
  (dotted) and the true regression function (dot-dashed);
  A scatter plot of the model residuals overlaid with a line at zero;
  A plot of the residual-based empirical distribution function (solid)
  overlaid with the true error distribution function
  (dashed).}}
\label{figure1}
\end{figure}

The assumptions of Theorem \ref{thmFhatExpan} are satisfied for the
choices made above. Figure \ref{figure1} displays the results of our
indirect regression estimator for a typical data set obtained from the
indirect regression $\theta_2$ and $t$--distributed errors based on a
sample size of 201. The scatter plot of the data shows the function
estimators $\htheta$ and $K\htheta$ work well in respectively
estimating each of $\theta_2$ and $K\theta_2$. We can plainly see that
the indirect regression estimator, constructed with the proposed
data-driven regularization methodology, is explaining the data very
well, which follows from the appearance of completely random scatter
in the plot of the residuals. The plot of the distribution functions
shows the empirical distribution function of the residuals $\hFemp$
matches very closely to the true error distribution function $F$ as
expected.

\begin{table}
\centering\small
\begin{tabular}{|r|c c c c c|}
\hline
\diagbox{$n$}{$t$} & -2 & -1 & 0 & 1 & 2 \\
\hline
$51$ & $0.000$ $(0.001)$ & $-0.017$ $(0.049)$ & $0.015$ $(0.086)$
  & $0.016$ $(0.045)$ & $0.001$ $(0.001)$ \\
     & $-0.004$ $(0.001)$ & $-0.076$ $(0.041)$ & $0.004$ $(0.085)$
  & $0.076$ $(0.040)$ & $0.005$ $(0.001)$ \\
$101$ & $-0.001$ $(0.001)$ & $-0.040$ $(0.046)$ & $-0.008$ $(0.090)$
  & $0.048$ $(0.047)$ & $0.003$ $(0.001)$ \\
      & $-0.003$ $(0.001)$ & $-0.066$ $(0.044)$ & $-0.017$ $(0.092)$
  & $0.076$ $(0.045)$ & $0.004$ $(0.001)$ \\
$201$ & $0.000$ $(0.001)$ & $-0.033$ $(0.043)$ & $0.017$ $(0.086)$
  & $0.048$ $(0.047)$ & $0.003$ $(0.001)$ \\
      & $-0.005$ $(0.001)$ & $-0.055$ $(0.044)$ & $-0.006$ $(0.081)$
  & $0.062$ $(0.042)$ & $0.004$ $(0.001)$ \\
$301$ & $-0.002$ $(0.001)$ & $-0.027$ $(0.041)$ & $-0.008$ $(0.083)$
  & $0.038$ $(0.044)$ & $0.004$ $(0.001)$ \\
      & $-0.004$ $(0.001)$ & $-0.056$ $(0.044)$ & $-0.015$ $(0.083)$
  & $0.061$ $(0.046)$ & $0.004$ $(0.001)$ \\
\hline
\end{tabular}
\vspace{1ex}
\caption{{\it Simulated asymptotic bias and variance (in parentheses)
    of $(2n + 1)^{1/2}\{\hFemp(t) - F(t)\}$ at the points $-2$, $-1$,
    $0$, $1$ and $2$ for the case of normally distributed errors. The
    results from each regression $\theta_1$ and $\theta_2$ are given as
    rows within each sample size, with the first row corresponding to
    $\theta_1$ and the second to $\theta_2$.}}
\label{BiasVarhFTableNormerr}
\end{table}

\begin{table}
\centering
\begin{tabular}{|c|c c c c c|}
\hline
\diagbox{$n$}{$t$} & $-2$ & $-1$ & $0$ & $1$ & $2$ \\
\hline
$51$ & $0.001$ & $0.049$ & $0.087$ & $0.045$ & $0.001$ \\
     & $0.001$ & $0.047$ & $0.085$ & $0.046$ & $0.001$ \\
$101$ & $0.001$ & $0.047$ & $0.090$ & $0.049$ & $0.001$ \\
      & $0.001$ & $0.049$ & $0.092$ & $0.051$ & $0.001$ \\
$201$ & $0.001$ & $0.045$ & $0.086$ & $0.046$ & $0.001$ \\
      & $0.001$ & $0.047$ & $0.081$ & $0.046$ & $0.001$ \\
$301$ & $0.001$ & $0.042$ & $0.083$ & $0.045$ & $0.001$ \\
      & $0.001$ & $0.048$ & $0.084$ & $0.050$ & $0.001$ \\
\hline
$\infty$ & $0.001$ & $0.046$ & $0.091$ & $0.046$ & $0.001$ \\
\hline
\end{tabular}
\vspace{1ex}
\caption{{\it Asymptotic mean squared error of $(2n +
    1)^{1/2}\{\hFemp(t) - F(t)\}$ at the points $-2$, $-1$, $0$, $1$
    and $2$ for the case of normally distributed errors. The results
    from each regression $\theta_1$ and $\theta_2$ are given as rows
    within each sample size, with the first row corresponding to
    $\theta_1$ and the second to $\theta_2$.}}
\label{AMSEhFTableNormerr}
\end{table}

\begin{table}
\centering
\begin{tabular}{|c c c c|c|}
\hline
51 & 101 & 201 & 301 & $\infty$ \\
\hline
$0.196$ & $0.191$ & $0.190$ & $0.186$ & $0.188$ \\
$0.200$ & $0.204$ & $0.190$ & $0.198$ &  \\
\hline
\end{tabular}
\vspace{1ex}
\caption{{\it Asymptotic integrated mean squared error of $(2n +
    1)^{1/2}\{\hFemp - F\}$ by sample size for the case of normally
    distributed errors. The results from each regression $\theta_1$
    and $\theta_2$ are given as rows, with the first row corresponding
    to $\theta_1$ and the second to $\theta_2$.}}
\label{AIMSEhFTableNormerr}
\end{table}

\begin{table}
\centering\small
\begin{tabular}{|r|c c c c c|}
\hline
\diagbox{$n$}{$t$} & -2 & -1 & 0 & 1 & 2 \\
\hline
$51$ & $-0.011$ $(0.005)$ & $-0.013$ $(0.037)$ & $0.025$ $(0.129)$
  & $-0.001$ $(0.040)$ & $0.007$ $(0.006)$ \\
     & $-0.013$ $(0.005)$ & $-0.038$ $(0.039)$ & $-0.002$ $(0.115)$
  & $0.057$ $(0.034)$ & $0.013$ $(0.004)$ \\
$101$ & $-0.007$ $(0.005)$ & $-0.021$ $(0.037)$ & $0.009$ $(0.139)$
  & $0.010$ $(0.037)$ & $0.008$ $(0.006)$ \\
      & $-0.015$ $(0.005)$ & $-0.038$ $(0.037)$ & $0.013$ $(0.136)$
  & $0.030$ $(0.041)$ & $0.014$ $(0.006)$ \\
$201$ & $-0.002$ $(0.006)$ & $-0.024$ $(0.033)$ & $-0.007$ $(0.153)$
  & $0.023$ $(0.038)$ & $0.006$ $(0.005)$ \\
      & $-0.011$ $(0.005)$ & $-0.032$ $(0.036)$ & $0.021$ $(0.153)$
  & $0.033$ $(0.038)$ & $0.010$ $(0.006)$ \\
$301$ & $-0.009$ $(0.005)$ & $-0.020$ $(0.036)$ & $0.005$ $(0.143)$
  & $0.007$ $(0.037)$ & $0.006$ $(0.006)$ \\
      & $-0.014$ $(0.006)$ & $-0.021$ $(0.037)$ & $-0.006$ $(0.150)$
  & $0.032$ $(0.034)$ & $0.010$ $(0.006)$ \\
\hline
\end{tabular}
\vspace{1ex}
\caption{{\it Simulated asymptotic bias and variance (in parentheses)
    of $(2n + 1)^{1/2}\{\hFemp(t) - F(t)\}$ at the points $-2$, $-1$,
    $0$, $1$ and $2$ for the case of $t$--distributed errors. The
    results from each regression $\theta_1$ and $\theta_2$ are given as
    rows within each sample size, with the first row corresponding to
    $\theta_1$ and the second to $\theta_2$.}}
\label{BiasVarhFTableTerr}
\end{table}

\begin{table}
\centering
\begin{tabular}{|c|c c c c c|}
\hline
\diagbox{$n$}{$t$} & $-2$ & $-1$ & $0$ & $1$ & $2$ \\
\hline
$51$ & $0.005$ & $0.037$ & $0.130$ & $0.040$ & $0.006$ \\
     & $0.005$ & $0.040$ & $0.115$ & $0.037$ & $0.005$ \\
$101$ & $0.005$ & $0.037$ & $0.139$ & $0.037$ & $0.006$ \\
      & $0.005$ & $0.039$ & $0.136$ & $0.042$ & $0.006$ \\
$201$ & $0.006$ & $0.034$ & $0.153$ & $0.038$ & $0.005$ \\
      & $0.006$ & $0.037$ & $0.153$ & $0.039$ & $0.006$ \\
$301$ & $0.005$ & $0.036$ & $0.143$ & $0.037$ & $0.006$ \\
      & $0.006$ & $0.037$ & $0.151$ & $0.035$ & $0.006$ \\
\hline
$\infty$ & $0.006$ & $0.036$ & $0.156$ & $0.036$ & $0.006$ \\
\hline
\end{tabular}
\vspace{1ex}
\caption{{\it Asymptotic mean squared error of $(2n +
    1)^{1/2}\{\hFemp(t) - F(t)\}$ at the points $-2$, $-1$, $0$, $1$
    and $2$ for the case of $t$--distributed errors. The results
    from each regression $\theta_1$ and $\theta_2$ are given as rows
    within each sample size, with the first row corresponding to
    $\theta_1$ and the second to $\theta_2$.}}
\label{AMSEhFTableTerr}
\end{table}

\begin{table}
\centering
\begin{tabular}{|c c c c|c|}
\hline
51 & 101 & 201 & 301 & $\infty$ \\
\hline
$0.237$ & $0.229$ & $0.222$ & $0.218$ & $0.228$ \\
$0.218$ & $0.231$ & $0.228$ & $0.223$ &  \\
\hline
\end{tabular}
\vspace{1ex}
\caption{{\it Asymptotic integrated mean squared error of $(2n +
    1)^{1/2}\{\hFemp - F\}$ by sample size for the case of
    $t$--distributed errors. The results from each regression
    $\theta_1$ and $\theta_2$ are given as rows, with the first row
    corresponding to $\theta_1$ and the second to $\theta_2$.}}
\label{AIMSEhFTableTerr}
\end{table}

Turning our attention to the numerical summaries of the estimator
$\hFemp$, we can plainly see this estimator is performing
well. Beginning with the case of normally distributed errors, Table
\ref{BiasVarhFTableNormerr} shows the figures for the simulated asymptotic
biases and variances of $\hFemp$ at the points $-2$, $-1$, $0$, $1$
and $2$. The simulated asymptotic biases are calculated by computing
the simulated biases of $\hFemp$ and multiplying these by the
square--root of the corresponding sample size, and the simulated
asymptotic variance is similarly calculated but now we multiply by the
corresponding sample size. Inspecting Table \ref{BiasVarhFTableNormerr}, we
find the squared asymptotic bias of $\hFemp$ becomes negligible to the
asymptotic variance of $\hFemp$ at larger sample sizes, which is
expected. In Table \ref{AMSEhFTableNormerr}, we give the asymptotic mean
squared error (AMSE) of $\hFemp$, which is calculated by multiplying
the simulated mean squared error of $\hFemp$ by the corresponding
sample size. The figures corresponding to the sample size $\infty$ are
calculated using the results of Theorem \ref{thmFhatExpan}. Comparing
the results in Table \ref{AMSEhFTableNormerr}, we find the theoretical
prediction made in Theorem \ref{thmFhatExpan} concerning the
asymptotic pointwise precision of $\hFemp$ corresponds well with the
simulated results. Finally, turning our attention to Table
\ref{AIMSEhFTableNormerr}, we give the asymptotic integrated mean squared
error (AIMSE) of $\hFemp$, which is calculated similarly to the AMSE
of $\hFemp$ but now integrating with respect to $t$. These results
also confirm that $\hFemp$ performs well in estimating $F$ even at
the smaller sample sizes 51 and 101. A possible explanation for this
observation is the use of the smooth bootstrap methodology for
choosing the regularization parameter in the estimate
$\htheta$. Table \ref{BiasVarhFTableTerr}, Table
\ref{AMSEhFTableTerr} and Table \ref{AIMSEhFTableTerr} show the
related figures to Table \ref{BiasVarhFTableNormerr}, Table
\ref{AMSEhFTableNormerr} and Table \ref{AIMSEhFTableNormerr},
respectively, when the model errors are $t$--distributed, and the
results are analogous to the case of normally distributed errors.

\begin{figure}
\centering
\includegraphics[width=0.35\linewidth]{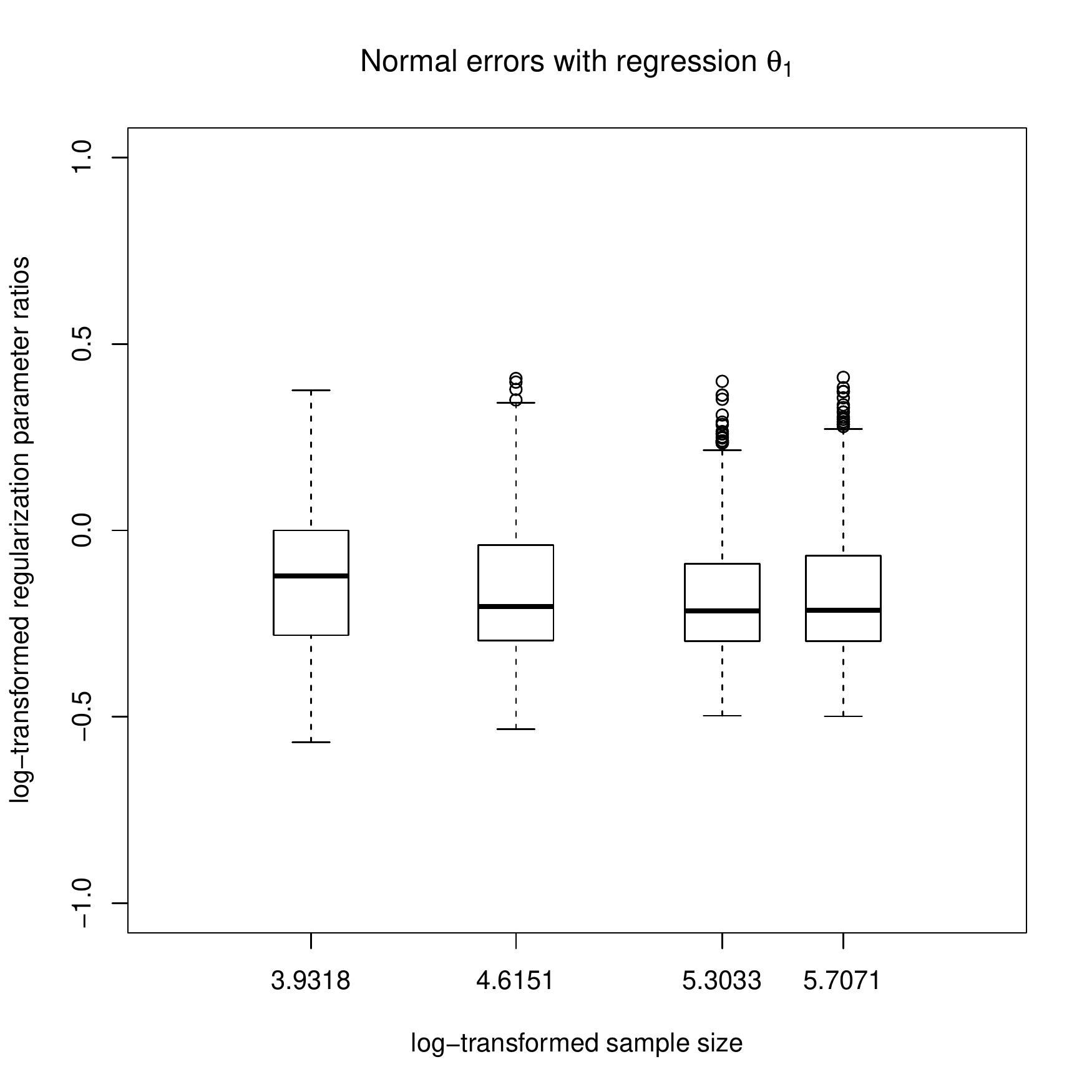}
\includegraphics[width=0.35\linewidth]{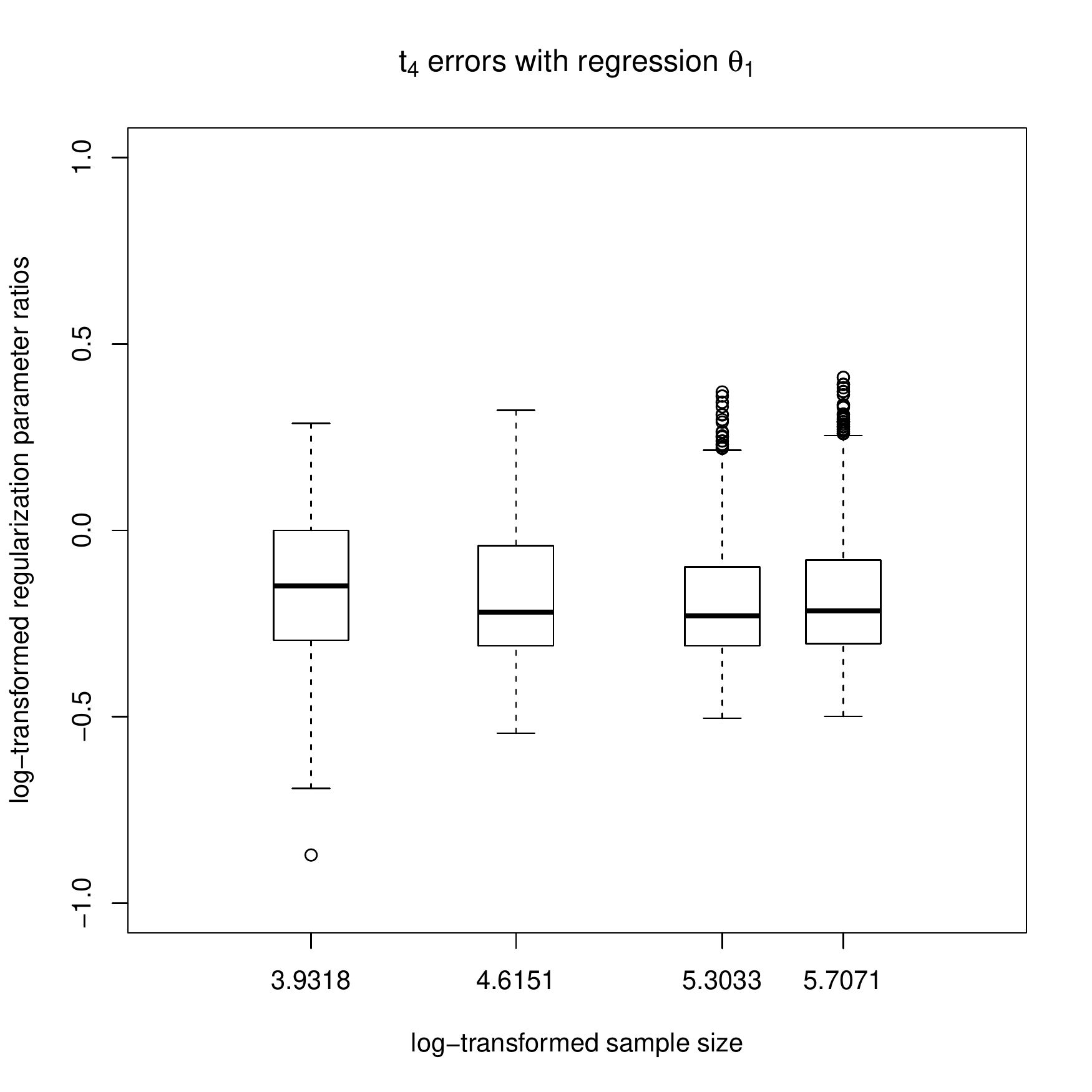}
\\
\includegraphics[width=0.35\linewidth]{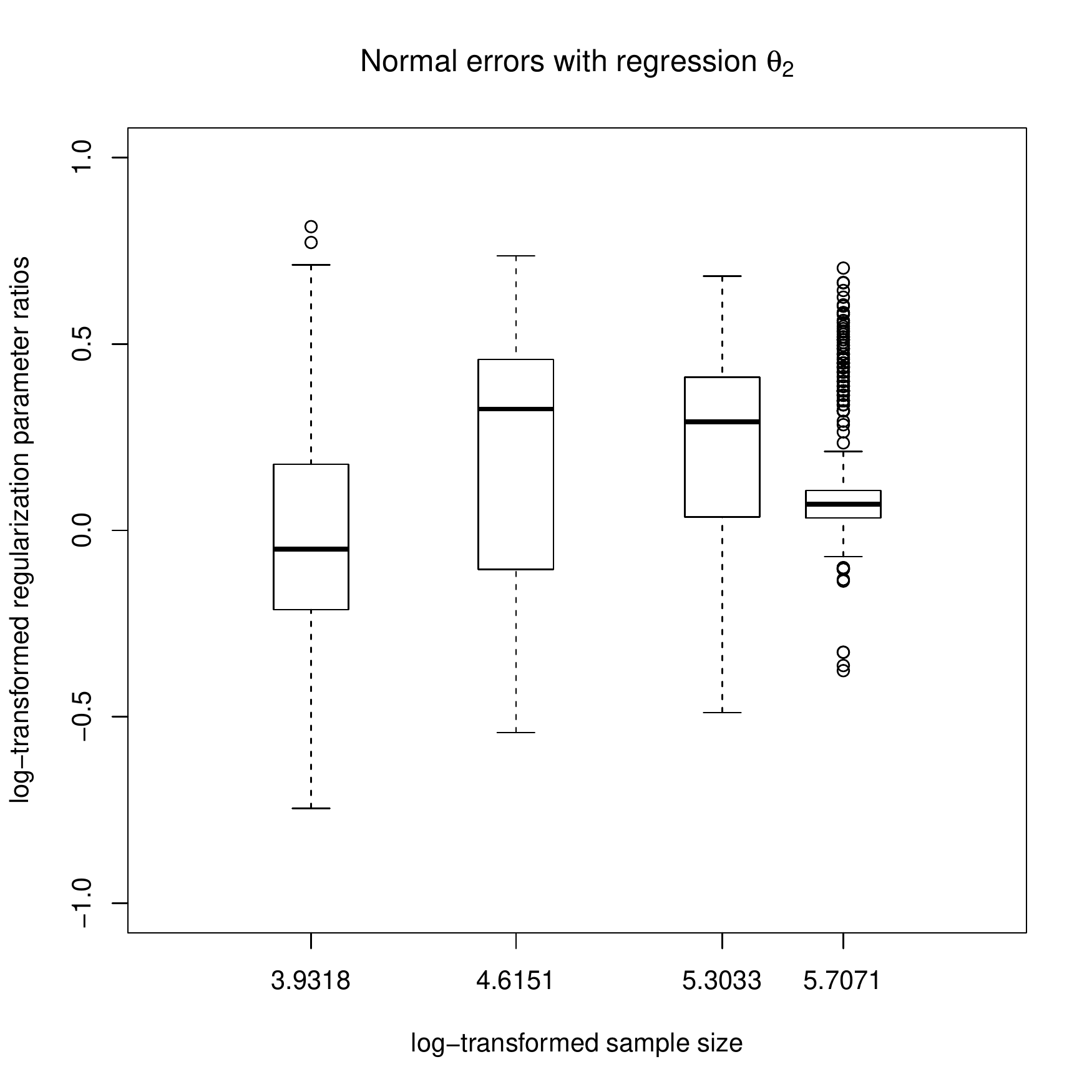}
\includegraphics[width=0.35\linewidth]{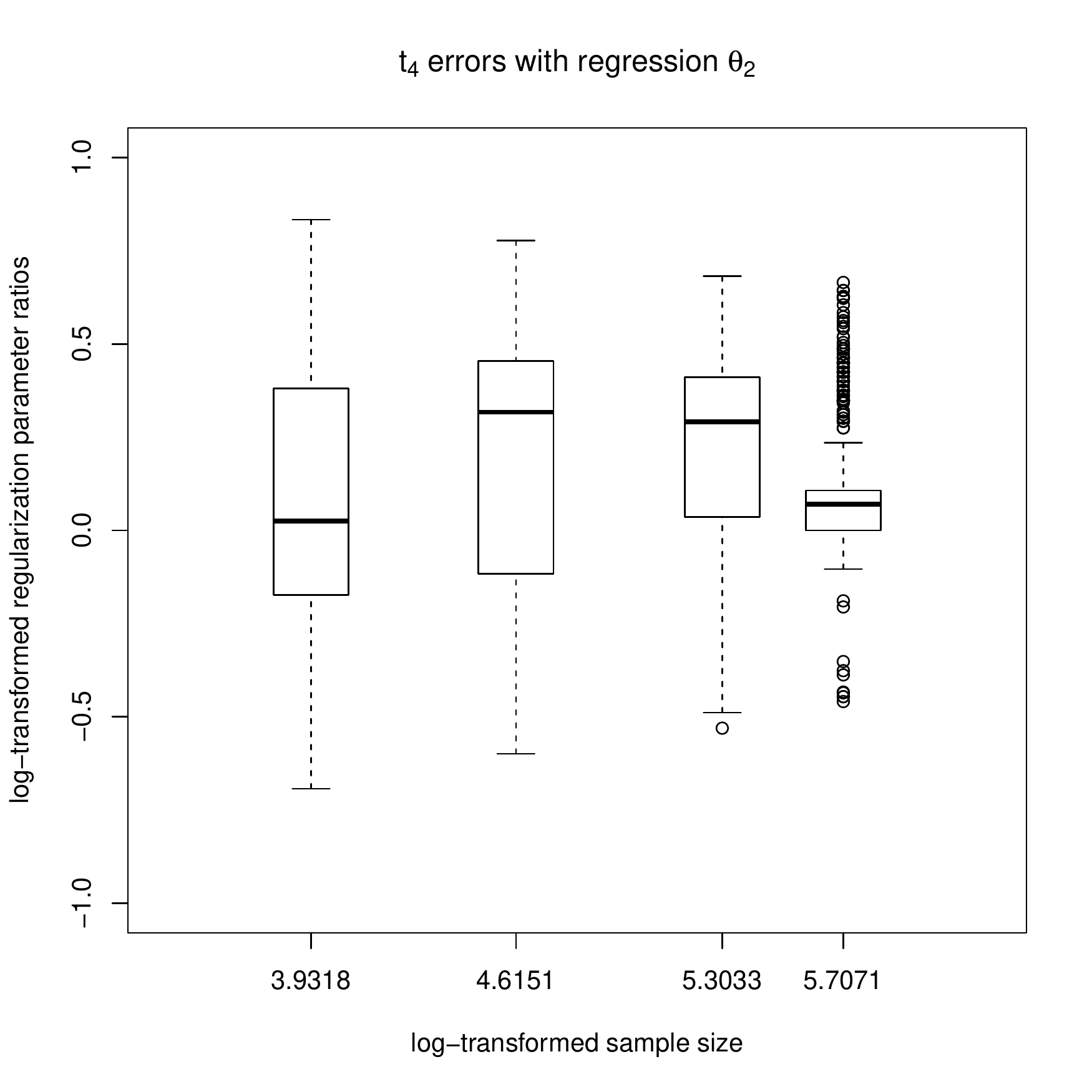}
\vspace{1ex}
\caption{{\it Boxplots of log-transformed ratios of regularization
    parameters (bootstrap--based selection to ISE--based selection) by
    log-transformed sample size. Plots on the left correspond to
    normally distributed errors and plots on the right correspond to
    $t$--distributed errors, while plots on the top correspond to the
    regression $\theta_1$ and plots on the bottom correspond to the
    regression $\theta_2$.}}
\label{bwdsfigure}
\end{figure}

\begin{table}
\centering
\begin{tabular}{|c|c c c c|}
\hline
Regularization & 51 & 101 & 201 & 301 \\
\hline
Bootstrap & $0.169$ & $0.093$ & $0.056$ & $0.040$ \\
          & $0.762$ & $0.601$ & $0.348$ & $0.249$ \\
Best & $0.132$ & $0.076$ & $0.049$ & $0.036$ \\
     & $0.587$ & $0.442$ & $0.283$ & $0.205$ \\
\hline
\end{tabular}
\vspace{1ex}
\caption{{\it Integrated mean squared error of the indirect regression
    estimator by sample size for each regression $\theta_1$ and
    $\theta_2$ in the case of normally distributed errors. Figures
    corresponding to `Bootstrap' are the IMSE estimates based on the
    proposed smooth bootstrap methodology for selecting the
    regularization parameter and the figures corresponding to `Best'
    are the IMSE estimates corresponding to selecting the
    regularization parameter by minimizing the ISE. The results for
    each regression $\theta_1$ and $\theta_2$ are given as rows within
    each regularization selection method, with the first row
    corresponding to $\theta_1$ and the second to $\theta_2$.}}
\label{IMSEhthetaTableNormerr}
\end{table}

\begin{table}
\centering
\begin{tabular}{|c|c c c c|}
\hline
Regularization & 51 & 101 & 201 & 301 \\
\hline
Bootstrap & $0.158$ & $0.094$ & $0.053$ & $0.040$ \\
          & $0.772$ & $0.584$ & $0.347$ & $0.250$ \\
Best & $0.123$ & $0.078$ & $0.046$ & $0.036$ \\
     & $0.598$ & $0.431$ & $0.283$ & $0.206$ \\
\hline
\end{tabular}
\vspace{1ex}
\caption{{\it Integrated mean squared error of the indirect regression
    estimator by sample size for each regression $\theta_1$ and
    $\theta_2$ in the case of $t$--distributed errors. Figures
    corresponding to `Bootstrap' are the IMSE estimates based on the
    proposed smooth bootstrap methodology for selecting the
    regularization parameter and the figures corresponding to `Best'
    are the IMSE estimates corresponding to selecting the
    regularization parameter by minimizing the ISE. The results for
    each regression $\theta_1$ and $\theta_2$ are given as rows within
    each regularization selection method, with the first row
    corresponding to $\theta_1$ and the second to $\theta_2$.}}
\label{IMSEhthetaTableTerr}
\end{table}

The results concerning our indirect regression estimator are
interesting. In addition to finding an asymptotically optimal
regularization parameter using the proposed bootstrap methodology, we
also conducted a similar grid search procedure choosing an optimal
regularization parameter that minimizes the integrated squared error
(ISE) between the indirect regression estimate and the regression
function for each case of $\theta_1$ and $\theta_2$. In general, this
methodology is not available in applications, but we expect it to
produce the best resulting indirect regression estimate with respect
to the IMSE of these estimates.

In Figure \ref{bwdsfigure} we give boxplots of the log-transformed
ratios of the optimal regularization parameter selected from the
proposed bootstrap methodology to the regularization parameter chosen
from the ISE methodology at each log-transformed sample size. At the
larger sample sizes, we can plainly see the boxes are beginning to
include 0, which we expect to continue as the sample size
increases. This confirms the conjecture of consistency between the two
regularization techniques mentioned in Remark
\ref{bootstrapbandwidth}. It appears that with increasing sample size
both the bootstrap selection methodology and the ISE selection
methodology choose similar regularizations for each of $\theta_1$ and
$\theta_2$ in both cases of normally distributed and $t$--distributed
errors.

We have also numerically measured the performance of the indirect
regression estimator by simulating the IMSE using both regularization
techniques for each regression $\theta_1$ and $\theta_2$ in both cases
of normally distributed errors and $t$--distributed errors. The
results are given in Table \ref{IMSEhthetaTableNormerr} for the case
of normally distributed errors and Table \ref{IMSEhthetaTableTerr}
for the case of $t$--distributed errors. We can plainly see that the
IMSE of the estimators using each regularization method are decreasing
to zero as the sample size increases, and the IMSE values between the
bootstrap--based method and the ISE--based method appear to be very
similar, even at the smaller sample sizes 51 and 101, which also
confirms the conjecture of consistency between the two regularization
techniques given in Remark \ref{bootstrapbandwidth}. In summary, we
find the residual--based empirical distribution function is performing
well as an estimator of the error distribution function, and the
proposed smooth bootstrap methodology for selecting the regularization
parameter used in the indirect regression estimate provides a useful
and convenient tool for precise indirect regression function
estimation.


\subsection{Example: comparison between regularization methods for
  spectral cut-off estimators}
\label{comparemethods}

Consider the special case of indirect regression estimates from the
so-called spectral cut-off method. This means we consider the simpler
spectral smoothing kernel
\begin{equation*}
\Lambda(k) = \1\big[-1 \leq k \leq 1\big],
 \qquad k \in \Zint.
\end{equation*}
Here one seeks a regularization that essentially decides how many
Fourier frequencies to include in the indirect regression estimator,
which follows from observing that \eqref{thetahat} evaluates $\Lambda$
at the product $h_n k$, where the regularizing parameter $h_n$ is
small. Cavalier and Golubev (2006) investigate a penalized estimator
of the mean integrated squared error of indirect regression estimators
obtained from the spectral cut-off method called a {\em risk hull};
see equation (1.9) on page 1656. These authors propose selecting a
regularization that minimizes this quantity and call this the risk
hull method. A penalty function is given but an approximate
specification is only provided for the case of normally distributed
data (see page 1661 of that article), where the authors write on page
1660 that ``this approximation is not good for small $k$'' and point
directly to an unspecified Monte Carlo strategy for computing their
penalty function.

In addition to choosing an appropriate penalty function, the risk hull
method also requires choosing a tuning parameter $\alpha$ that
influences the strength of the penalty. Here the authors suggest using
$\alpha = 1.1$ on page 1664, which we use as well. The main drawback
with the tuning parameter $\alpha$ is the risk hull method appears not
to work very well when alpha is chosen either too large or too small,
which seems to imply an optimal sequence $\{\alpha_n\}_{n \geq 1}$
should be used instead. The authors make a disappointing remark on
page 1659: ``We do not believe that there is a good general formula
for the optimal risk hull or for the penalty.'' Hence, without further
guidance we simply use the provided approximate penalty given on page
1661 (scaled by $1/10$). This is in contrast to the proposed bootstrap
selector that {\em objectively corrects} the arbitrary choice of pilot
regularization.

In this example, we simulated a comparison between the risk hull
method and the proposed bootstrap regularization selection method from
Section \ref{boothn} for both regressions $\theta_1$ and
$\theta_2$. The distortion function $\psi$ is specified in Section
\ref{sims} and the errors are again normally distributed with mean
zero and scale $2/3$. As before, we considered sample sizes 51, 101,
201 and 301. For the bootstrap selection method, we used the same
pilot sequences that were used in the previous simulations.

\begin{figure}
\centering
\includegraphics[width=0.35\linewidth]{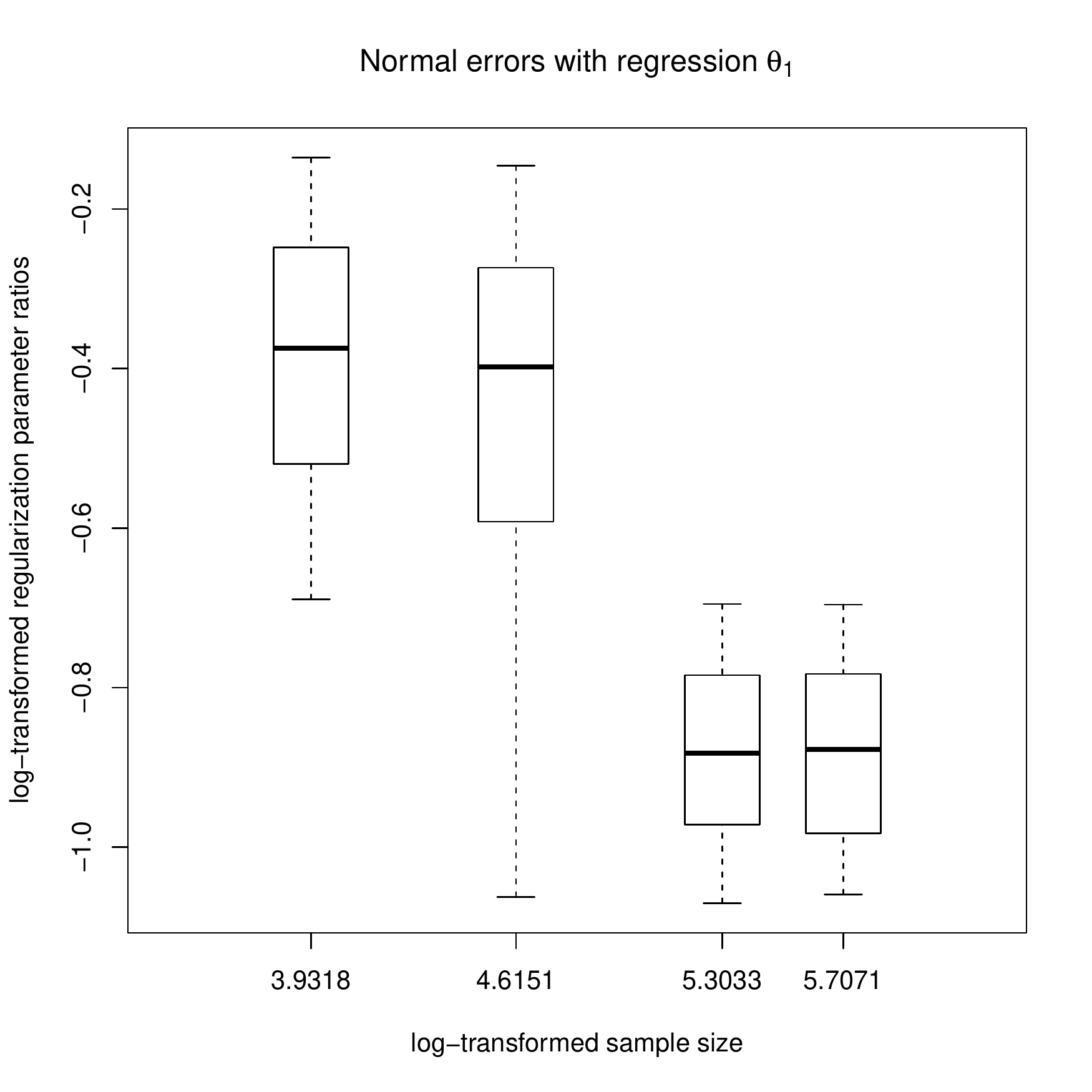}
\includegraphics[width=0.35\linewidth]{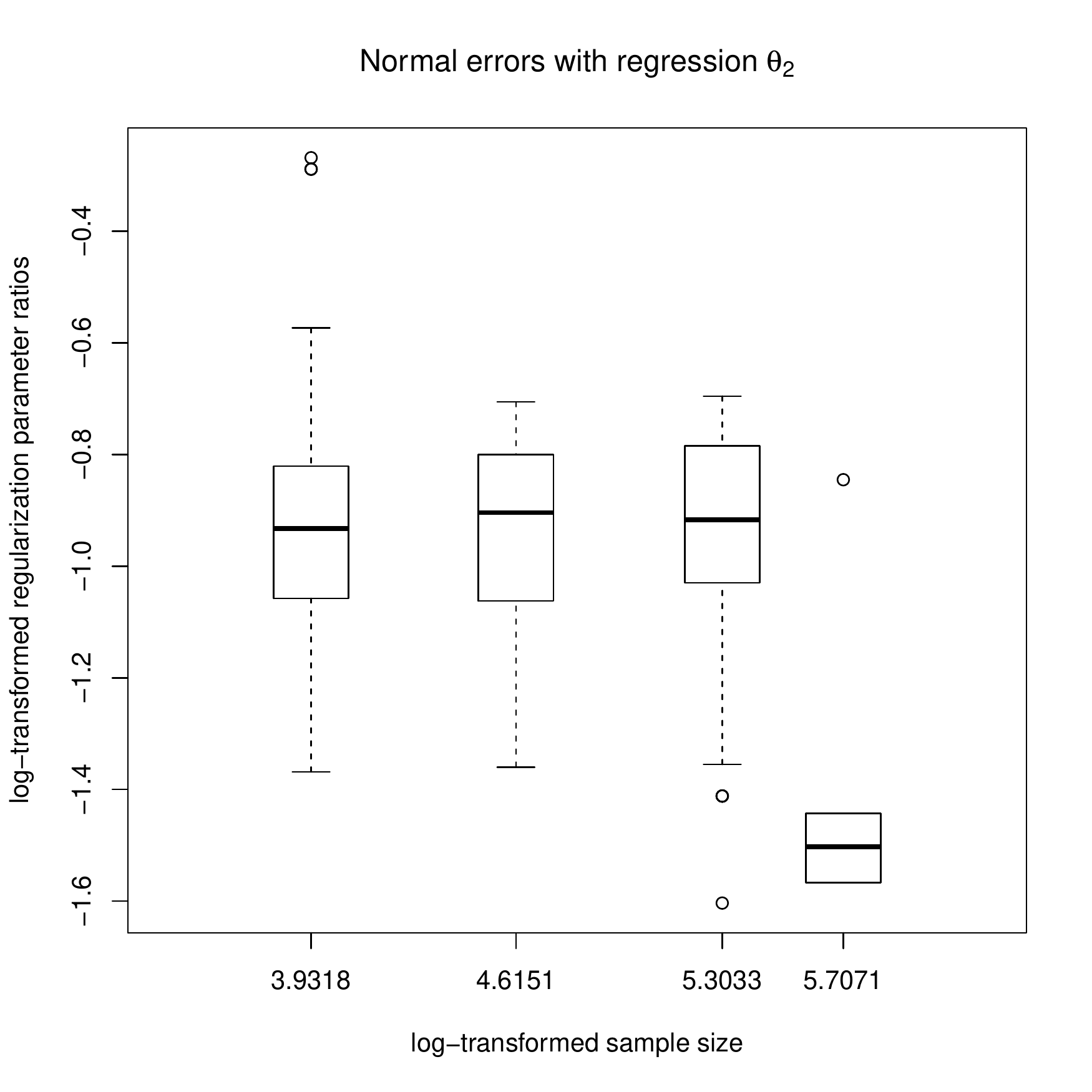}
\vspace{1ex}
\caption{{\it Boxplots of log-transformed ratios of regularization
    parameters (bootstrap--based selection to risk--hull--based
    selection) by log-transformed sample size. The boxplot on the left
    corresponds to the regression $\theta_1$ and the boxplot on the
    right to the regression $\theta_2$.}}
\label{BSRHMfigure}
\end{figure}

The results of our numerical study are summarized in the boxplots
displayed in Figure \ref{BSRHMfigure}. At smaller sample sizes 51 and
101, we can see that both approaches generally choose similar
regularizations, i.e.\ both procedures suggest similar spectral
cuts. The larger sample sizes 201 and 301, however, show the risk hull
method begins to favor regularizations that include more Fourier
frequencies than the bootstrap method. Consequently, for $\theta_1$
the simulated IMSE values are $0.321$, $0.303$, $0.290$ and $0.287$
using the risk hull method and $0.321$, $0.286$, $0.055$ and $0.041$
using the proposed bootstrap procedure, respectively for each sample size
51, 101, 201 and 301. Similarly, the simulated IMSE values for
$\theta_2$ are $1.978$, $1.957$, $1.946$ and $1.940$ using the risk
hull method and $0.762$, $0.611$, $0.541$ and $0.217$ using the
proposed bootstrap procedure. We can plainly see the proposed
bootstrap selection procedure compares favorably with the risk hull
method. Since the proposed bootstrap procedure is highly applicable,
we recommend practitioners to use it when considering data--driven
regularization selection procedures.

\medskip
\medskip

{\bf Acknowledgements}
We would like to thank the Referees for their careful reading and
helpful comments that improved the article. In particular,
one referee helped us to see how to improve and streamline our
approach as well as pointing out several important works in the
literature that had been previously unknown to us. This work has been
supported in part by the Collaborative Research Center ``Statistical
modeling of nonlinear dynamic processes'' (SFB 823, Projects C1 and
C4) of the German Research Foundation (DFG) and in part by the
Bundesministerium f\"ur Bildung und Forschung through the project
``MED4D: Dynamic medical imaging: Modeling and analysis of medical
data for improved diagnosis, supervision and drug development''.


\section{Technical details}
\label{details}


The estimator $\hR$ is biased only in the design points, which
asymptotically exhaust the interval $[-1/2,\,1/2]$ at the rate
$n^{-1}$. We arrive at the following result concerning the bias of
$\hR$:
\begin{lemma} \label{lemRhatOrder}
Let $\theta \in \RR_{s}$, with $s \geq 1$. Then
\begin{equation*}
\max_{k \in \Zint} \Big|E\big[\hR(k)\big] - R(k)\Big|
 = O\big(n^{-1}\big).
\end{equation*}
\end{lemma}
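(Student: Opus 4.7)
The plan is to pass to the Fourier series representation of $K\theta$ and convert the bias into an aliasing-type expression that can be read off explicitly.

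First I would substitute $E[Y_j] = [K\theta](x_j)$ into $\hR(k)$ and expand $[K\theta]$ as the absolutely convergent Fourier series $[K\theta](x) = \sum_l H(l) e^{i2\pi l x}$, where $H(l) = \Theta(l)\Psi(l)$ by the convolution theorem. Interchanging the (justified) absolutely convergent sums gives
\begin{equation*}
E\big[\hR(k)\big] = \sum_{l \in \Zint} H(l)\,\cdot\,\frac{1}{2n+1}\sum_{j=-n}^{n} e^{i2\pi(l-k)j/(2n)}.
\end{equation*}

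Next I would evaluate the inner discrete-Fourier sum in closed form. A short direct computation (or Dirichlet-kernel argument) shows that, writing $m = l-k$,
\begin{equation*}
\frac{1}{2n+1}\sum_{j=-n}^{n} e^{i2\pi m j/(2n)}
 = \begin{cases} 1, & m \equiv 0 \pmod{2n},\\
 (-1)^m/(2n+1), & \text{otherwise},\end{cases}
\end{equation*}
because $\sin((2n+1)\pi m/(2n)) = (-1)^m\sin(\pi m/(2n))$. Splitting the outer sum along the two cases and using $R(k) = H(k)$, this yields the clean aliasing decomposition
\begin{equation*}
E\big[\hR(k)\big] - R(k)
 = \sum_{q \neq 0} H(k+2nq)
 + \frac{(-1)^{k}}{2n+1}\sum_{l \not\equiv k\,(\mathrm{mod}\,2n)} (-1)^{l} H(l).
\end{equation*}

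The remaining task is to bound both pieces by $O(n^{-1})$. The second piece is immediate: since $|H(l)| \leq |\Psi(l)||\Theta(l)|$ is absolutely summable (use Assumption \ref{assumpPsi} and the fact that $\theta \in \RR_s$ with $s \geq 1$ implies $\sum_l |\Theta(l)| < \infty$), this term is bounded by $(2n+1)^{-1}\|H\|_1 = O(n^{-1})$. The main obstacle is the aliasing sum $\sum_{q\neq 0} |H(k+2nq)|$, and this is where the interplay of Assumptions \ref{assumpPsi} and the smoothness of $\theta$ is essential. By periodicity of $\hR(k)$ in $k$ with period $2n$, it suffices to treat $|k| \leq n$, in which case every aliased frequency satisfies $|k+2nq| \geq n$ for $q \neq 0$. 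On this range I would use $|\Psi(l)| \leq C_{\Psi}^*|l|^{-b}$ from Assumption \ref{assumpPsi} together with the factorization $|\Theta(l)| = |l|^{-s}\cdot|l|^{s}|\Theta(l)|$ to write
\begin{equation*}
\sum_{q \neq 0} |H(k+2nq)|
 \;\leq\; C\sum_{|l| \geq n} |l|^{-(s+b)} \bigl(|l|^{s}|\Theta(l)|\bigr)
 \;\leq\; C n^{-(s+b)} \sum_{l} |l|^{s}|\Theta(l)| \;=\; O\bigl(n^{-(s+b)}\bigr),
\end{equation*}
which is $O(n^{-1})$ since $s+b \geq 1$. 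The delicate step here is the justification that the exponent $s+b$ appears: one must use the pointwise decay of $\Psi$ together with the summability-with-weight defining $\RR_s$, rather than an $\ell^1$ bound on $\Theta$, because the latter only gives $O(n^{-b})$. Combining the two bounds delivers the claim.
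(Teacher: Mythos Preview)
Your Fourier/aliasing approach is quite different from the paper's, which partitions $[-1/2,1/2]$ into subintervals centred at the design points and controls the Riemann-sum error through Lipschitz estimates on $r = K\theta$ and on the complex exponentials. Your closed-form evaluation of the discrete Dirichlet sum and the resulting two-term decomposition are correct, and on the range $|k| \leq n$ your bounds go through and are considerably cleaner than the paper's five-term analysis. (A minor point: Assumption~\ref{assumpPsi} is not among the hypotheses of the lemma and is not needed---the trivial bound $|\Psi(l)| \leq 1$ already yields $\sum_{q\neq 0}|H(k+2nq)| = O(n^{-s})$ from the $\RR_s$ condition alone.)

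The genuine gap is the periodicity reduction. Although $E[\hR(k)]$ is periodic in $k$ with period $2n$, the aliasing piece $\sum_{q \neq 0} H(k+2nq)$ is not: shifting $k \mapsto k + 2n$ relabels the index $q$ but moves the \emph{excluded} value from $0$ to $1$, so the two sums differ by $H(k) - H(k+2n)$. Concretely, at $k = 2n$ the term $q = -1$ contributes $H(0) = R(0)$, which is generically of order one. This in fact shows the lemma as literally stated cannot hold: $E[\hR(2n)] = E[\hR(0)] \to R(0)$ while $R(2n) \to 0$, so $|E[\hR(2n)] - R(2n)| \to |R(0)|$. The paper's own proof carries the same lacuna---its claim that $\sup_{k} \sum_{\zeta \neq 0} |\zeta|\,|R(k+\zeta)| < \infty$ fails, since this quantity grows like $|k|\,\|R\|_{\ell^1}$. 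What your argument does establish is the $O(n^{-1})$ bound uniformly over $|k| \leq n$; since every downstream use of the lemma carries the factor $\Lambda(h_nk)$ with $h_n^{-1} = o(n)$, that restricted version is what is actually required, and you should state and prove that instead.
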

\begin{proof}
For any $s_1 \leq s_2$, we have the inclusion $\RR_{s_2} \subset
\RR_{s_1}$, and, therefore, we only need to prove the result for $s =
1$. For clarity, write $r = K\theta$. Without any loss of generality,
we can assume that $|r(0)| < \infty$. We can write
\begin{align} \label{EhR}
E\big[\hR(k)\big] &= \frac{1}{2n + 1} \sum_{j = -n}^{n}
 \bigg\{\int_{-\infty}^{\infty} y\,Q(dy\,|\,x_j)\bigg\} e^{-i2\pi k x_j}
 = \frac{1}{2n + 1} \sum_{j = -n}^{n} r(x_j)e^{-i2\pi k x_j}
 \\ \nonumber
&= \frac{1}{2n + 1} \sum_{j = 1}^{n} r(x_j)e^{-i2\pi k x_j}
 + \frac{r(0)}{2n + 1}
 + \frac{1}{2n + 1} \sum_{j = -n}^{-1} r(x_j)e^{-i2\pi k x_j}.
\end{align}
The second equality in \eqref{EhR} shows that $\hR$ is on the average
estimating the discrete Fourier transform of $r$ calculated on the
design points, which is expected.

We can relate the discrete Fourier transform of $r$ to its Fourier
coefficients $\{R(k)\}_{k \in \Zint}$ as follows. Partition the
interval $[-1/2,\,1/2]$ into
\begin{equation*}
\Bigg(\bigcup_{j = -n}^{-1}
 \bigg[\frac{2j - 1}{4n + 2},\,\frac{2j + 1}{4n + 2}\bigg)\Bigg)
\bigcup \bigg[-\frac{1}{4n + 2},\,\frac{1}{4n + 2}\bigg]
\bigcup \Bigg(\bigcup_{j = 1}^n
 \bigg(\frac{2j - 1}{4n + 2},\,\frac{2j + 1}{4n + 2}\bigg]\Bigg).
\end{equation*}
Since $x_j = j/(2n)$, we have $v/(2n + 1) + j/(2n + 1) = x_j + (v -
x_j)/(2n + 1)$. It follows that $E[\hR(k)] - R(k)$ is equal to
\begin{align*}
&\frac{1}{2n + 1} \sum_{j = 1}^{n} \bigg\{
 \int_{-1/2}^{1/2} \bigg\{r(x_j)
 - r\bigg(x_j + \frac{v - x_j}{2n + 1}\bigg)\bigg\}\,dv
 \bigg\}\exp(-i2\pi k x_j) \\
& + \frac{1}{2n + 1} \sum_{j = 1}^{n} \int_{-1/2}^{1/2}
 r\bigg(x_j + \frac{v - x_j}{2n + 1}\bigg)
 \bigg\{\exp(-i2\pi k x_j) - \exp\bigg(-i2\pi k
 \bigg(x_j + \frac{v - x_j}{2n + 1}\bigg)\bigg)\bigg\}
 \,dv \\
& + \frac{1}{2n + 1} \int_{-1/2}^{1/2}
 \bigg\{r(0) - r\bigg(\frac{v}{2n + 1}\bigg)\bigg\}
 \exp\bigg(-i2\pi k\frac{v}{2n + 1}\bigg)\,dv \\
& + \frac{r(0)}{2n + 1} \int_{-1/2}^{1/2}
 \bigg\{1 - \exp\bigg(-i2\pi k\frac{v}{2n + 1}\bigg)
 \bigg\}\,dv \\
& + \frac{1}{2n + 1} \sum_{j = -n}^{-1} \bigg\{ \int_{-1/2}^{1/2}
 \bigg\{ r(x_j) - r\bigg(x_j + \frac{v - x_j}{2n + 1}\bigg)
 \bigg\}\,dv\bigg\} \exp(-i2\pi k x_j) \\
& + \frac{1}{2n + 1} \sum_{j = -n}^{-1} \int_{-1/2}^{1/2} 
 r\bigg(x_j + \frac{v - x_j}{2n + 1}\bigg)
 \bigg\{\exp(-i2\pi k x_j) - \exp\bigg(-i2\pi k
 \bigg(x_j + \frac{v - x_j}{2n + 1}\bigg)\bigg)
 \bigg\}\,dv.
\end{align*}
We can see that $|E[\hR(k)] - R(k)|$ is bounded by
\begin{equation*}
 R_1(k) + R_2(k) + R_3(k) + R_4(k) + R_5(k) + O(n^{-1}),
\end{equation*}
where the error term $O(n^{-1})$ does not depend on $k$ and
\begin{equation*}
R_1(k) = \frac{1}{2n + 1} \sum_{j = 1}^{n}
 \int_{-1/2}^{1/2} \bigg|r(x_j)
 - r\bigg(x_j + \frac{v - x_j}{2n + 1}\bigg)\bigg|\,dv,
\end{equation*}
$R_2(k)$ is equal to
\begin{equation*}
\frac{1}{2n + 1} \sum_{j = 1}^{n} \bigg|
 \int_{-1/2}^{1/2} r\bigg(x_j + \frac{v - x_j}{2n + 1}\bigg)
 \bigg\{\exp\big(-i2\pi kx_j\big)
 - \exp\bigg( -i2\pi k
 \bigg(x_j + \frac{v - x_j}{2n + 1}\bigg) \bigg) \bigg\}
 \,dv\bigg|,
\end{equation*}
\begin{equation*}
R_3(k) = \frac{1}{2n + 1} \int_{-1/2}^{1/2} \bigg|
 r(0) - r\bigg(\frac{v}{2n + 1}\bigg)\bigg|\,dv,
\end{equation*}
\begin{equation*}
R_4(k) = \frac{1}{2n + 1} \sum_{j = -n}^{-1}
 \int_{-1/2}^{1/2} \bigg|r(x_j)
 - r\bigg(x_j + \frac{v - x_j}{2n + 1}\bigg)\bigg|\,dv,
\end{equation*}
and $R_5(k)$ is equal to
\begin{equation*}
\frac{1}{2n + 1} \sum_{j = -n}^{-1} \bigg|
 \int_{-1/2}^{1/2} r\bigg(x_j + \frac{v - x_j}{2n + 1}\bigg)
 \bigg\{\exp\big(-i2\pi kx_j\big)
 - \exp\bigg(-i2\pi k\bigg(x_j + \frac{v - x_j}{2n + 1}\bigg)
 \bigg)\bigg\}\,dv\bigg|.
\end{equation*}
Hence, the result follows, if we can show $\max_{k \in \Zint} R_i(k) =
O(n^{-1})$, for each $i = 1,\ldots,5$.

To continue, use Euler's formula to write
\begin{equation*}
\exp\big(-i2\pi kx_j\big) = \cos\big(2\pi k x_j\big)
 - i\sin\big(2\pi k x_j\big).
\end{equation*}
Since sine and cosine are each Lipschitz functions with constant equal
to $1$, it follows that
\begin{equation*}
\bigg| \exp\big(-i2\pi k x_j\big)
 - \exp\bigg(-i2\pi k\bigg(x_j + \frac{v - x_j}{2n + 1}\bigg)\bigg)
 \bigg|^2 \leq 2^3\pi^2k^2 \bigg(\frac{v - x_j}{2n + 1} \bigg)^2.
\end{equation*}
Therefore we have the bound
\begin{align*}
\bigg| \exp\big(-i2\pi k x_j\big)
 - \exp\bigg(-i2\pi k\bigg(x_j + \frac{v - x_j}{2n + 1}\bigg)\bigg)
 \bigg| \leq 2^{3/2}\pi|k| \frac{|v - x_j|}{2n + 1},
\end{align*}
which will be used throughout the proof.

Beginning with $R_1(k)$, it follows from both $\theta \in \RR_{1}$ and
the equivalence $r = K\theta$ that $r \in \RR_{1}$ as well. Using the
Fourier inversion formula, write
\begin{equation*}
\bigg|r\big(x_j\big) - r\bigg(x_j + \frac{v - x_j}{2n + 1}\bigg)\bigg|
 \leq \frac{2^{3/2}\pi}{2n + 1} \big|v - x_j\big|
 \sum_{k = -\infty}^{\infty} |k|\big|R(k)\big|.
\end{equation*}
Hence, we can find an appropriate constant $C > 0$ such that $R_1(k)$
is bounded by
\begin{equation*}
Cn^{-2}\sum_{j = 1}^{n} \int_{-1/2}^{1/2}\big|v - x_j\big|\,dv,
\end{equation*}
which both does not depend on $k$ and is easily seen to be
$O(n^{-1})$. This implies $\max_{k \in \Zint} R_1(k) = O(n^{-1})$.

Turning our attention to $R_2(k)$, we can assume without loss of
generality that $|k| > 0$ as this term is equal to zero whenever $k =
0$. The integral in $R_2(k)$ is equal to the sum of
\begin{equation*}
\int_{-1/2}^{1/2} \bigg\{ r\bigg(x_j + \frac{v - x_j}{2n + 1}\bigg)
 - r(x_j)\bigg\}\,dv\exp\big(-i2\pi kx_j\big)
\end{equation*}
and
\begin{equation*}
\int_{-1/2}^{1/2} \bigg\{
 r(x_j)\exp\big(-i2\pi kx_j\big)
 - r\bigg(x_j + \frac{v - x_j}{2n + 1}\bigg)
 \exp\bigg(-i2\pi k\bigg(x_j + \frac{v - x_j}{2n + 1}\bigg)\bigg)
 \bigg\}\,dv.
\end{equation*}
Therefore, we can see that $R_2(k)$ is bounded by the sum of $\max_{k
  \in \Zint} R_1(k)$, which we have already shown $\max_{k \in \Zint}
R_1 = O(n^{-1})$, and the quantity
\begin{align} \label{R2Bound}
\frac{1}{2n + 1} \sum_{j = 1}^{n} \bigg|
 \int_{-1/2}^{1/2} \bigg\{
 &r(x_j)\exp\big(-i2\pi kx_j\big) \\ \nonumber
 & - r\bigg(x_j + \frac{v - x_j}{2n + 1}\bigg)
 \exp\bigg(-i2\pi k\bigg(x_j + \frac{v - x_j}{2n + 1}\bigg)\bigg)
 \bigg\}\,dv\bigg|.
\end{align}
We can use the Fourier inversion formula to write
\begin{align} \label{FourierMagic}
&r(x_j)\exp\big(-i2\pi k x_j\big)
 - r\bigg(x_j + \frac{v - x_j}{2n + 1}\bigg)
 \exp\bigg(-i2\pi k\bigg(x_j + \frac{v - x_j}{2n + 1}\bigg)\bigg)
 \\ \nonumber
&= \sum_{\xi = -\infty}^{\infty} R(\xi) \bigg\{
 \exp\big(i2\pi(\xi - k)x_j\big)
 - \exp\bigg(i2\pi(\xi - k)\bigg(x_j + \frac{v - x_j}{2n + 1}\bigg)
 \bigg) \bigg\}.
\end{align}
From \eqref{FourierMagic} we can see that \eqref{R2Bound} is further
bounded by
\begin{align*}
&\frac{1}{2n + 1} \sum_{j = 1}^{n} \int_{-1/2}^{1/2}\,
 \sum_{\xi = -\infty}^{\infty} |R(\xi)|
 \bigg| \exp\big(i2\pi(\xi - k)x_j\big)
 - \exp\bigg(i2\pi(\xi - k)\bigg(x_j + \frac{v - x_j}{2n + 1}\bigg)
 \bigg) \bigg|\,dv \\
&\leq \frac{2^{3/2}\pi}{2n + 1}
 \bigg\{ \frac{1}{2n + 1} \sum_{j = 1}^{n}
 \int_{-1/2}^{1/2}\,|v - x_j|\,dv\bigg\}
 \bigg\{ \sum_{|\xi - k| > 0} |\xi - k||R(\xi)| \bigg\}.
\end{align*}
Since we have already shown $r \in \RR_{1}$, we have, for $\zeta
= \xi - k$,
$
\max_{|k| > 0} \sum_{|\zeta| > 0} |\zeta||R(k + \zeta)|
 < \infty.
$
Hence, we can find an appropriate constant $C > 0$ for \eqref{R2Bound}
to be further bounded by
\begin{equation*}
Cn^{-2} \sum_{j = 1}^{n} \int_{-1/2}^{1/2} |v - x_j|\,dv,
\end{equation*}
which both does not depend on $k$ and is easily seen to be of order
$O(n^{-1})$. Combining this fact with the result that $\max_{k \in
  \Zint} R_1(k) = O(n^{-1})$ implies $\max_{k \in \Zint} R_2(k) =
O(n^{-1})$.

Using the previous arguments we can also show $\max_{k \in \Zint}
R_3(k) = O(n^{-1})$ and $\max_{k \in \Zint} R_4(k) =
O(n^{-1})$. Finally, a similar argument for showing $\max_{k \in
\Zint} R_2(k) = O(n^{-1})$ can be used to show $\max_{k \in \Zint}
R_5(k) = O(n^{-1})$. This concludes the proof of Lemma
\ref{lemRhatOrder}.
\end{proof}


With the result of Lemma \ref{lemRhatOrder}, we can give the proof of
Lemma \ref{lemhthetabias} from Section \ref{mainresults1}:
\begin{proof}[Proof of Lemma \ref{lemhthetabias}]
We begin with the decomposition
\begin{equation*}
E\big[\htheta(x)\big] =
 \sum_{k = -\infty}^{\infty} \Lambda(h_nk) \Theta(k) \exp(i2\pi k x)
 + \sum_{k = -\infty}^{\infty} \frac{\Lambda(h_nk)}{\Psi(k)}
 \Big\{ E\big[\hR(k)\big] - R(k) \Big\} \exp(i2\pi k x)
\end{equation*}
so that $E[\htheta(x)] - \theta(x)$ is equal to
\begin{equation*}
\sum_{k = -\infty}^{\infty} \big\{\Lambda(h_nk) - 1\big\}
 \Theta(k) \exp(i2\pi k x)
 + \sum_{k = -\infty}^{\infty} \frac{\Lambda(h_nk)}{\Psi(k)}
 \Big\{ E\big[\hR(k)\big] - R(k) \Big\} \exp(i2\pi k x).
\end{equation*}
We can see that $\sup_{x \in [-1/2,\,1/2]} |E[\htheta(x)] -
\theta(x)|$ is bounded by
\begin{equation} \label{biasbound}
\sum_{k = -\infty}^{\infty} |\Lambda(h_nk) - 1||\Theta(k)|
 + \max_{k \in \Zint} \Big| E\big[\hR(k)\big] - R(k) \Big|
 \sum_{k = -\infty}^{\infty} \frac{|\Lambda(h_nk)|}{|\Psi(k)|}.
\end{equation}

Partition $\Zint$ into $I(h_n) \cup I^c(h_n)$, where $I(h_n) =
\{z \in \Zint\,:\,h_n|z| \leq M\} = \{z \in \Zint\,:\,|z| \leq
Mh_n^{-1}\}$. Hence, for every $k \in I^c(h_n)$, it follows that
$|\Lambda(h_nk)| \leq 1$, which implies both statements
$|\Lambda(h_nk) - 1| \leq 2$ and $|k| > Mh_n^{-1}$ hold. The first
term in the right--hand side of \eqref{biasbound} is therefore bounded
by
\begin{equation} \label{biassmoothbound}
2 \sum_{k \in I^c(h_n)} |\Theta(k)|
 \leq 2h_n^{s}M^{-s} \sum_{k = -\infty}^{\infty}
 |k|^s|\Theta(k)|.
\end{equation}
This implies the first term in \eqref{biasbound} is of the order
$O(h_n^{s})$, uniformly in $x \in [-1/2,\,1/2]$.

We now turn to the second term in \eqref{biasbound}. It follows from
Assumptions \ref{assumpPsi} and \ref{assumplambda} for the series in
this term to be bounded by
\begin{equation*}
h_n^{-1}\bigg[\min_{k \in \{z \in \Zint\,:\,|z| \leq \Gamma\}} |\Psi(k)|\bigg]^{-1}
 \Bigg\{h_n\sum_{\omega \in h_n\Zint} |\Lambda(\omega)|\Bigg\}
 + h_n^{-b-1}C_{\Psi}^{-1}
 \Bigg\{h_n\sum_{\omega \in h_n\Zint} |\omega|^{b}|\Lambda(\omega)|\Bigg\},
\end{equation*}
which is easily seen to be of the order $O(h_n^{-b - 1})$. The
additional factor of $h_n^{-1}$ appears in the bound above because we
have a shrinkage of $|k|$ by $h_n$. This implies $\sum_{k =
  -\infty}^{\infty} \{|\Lambda(h_nk)|/|\Psi(k)|\}$ is of the order
$O(h_n^{-b - 1})$. Now we only need to consider the term $\max_{k \in
  \Zint} |\hR(k) - R(k)|$. The assumptions of Lemma \ref{lemRhatOrder}
are satisfied. It then follows for $\max_{k \in \Zint} |\hR(k) - R(k)|
= O(n^{-1})$. Hence, the second term in \eqref{biasbound} is of the
order $O((nh_n^{b + 1})^{-1})$, uniformly in $x \in
[-1/2,\,1/2]$. Combining the results above, we have that
\eqref{biasbound} is of the order $O(h_n^{s} + (nh_n^{b + 1})^{-1})$,
uniformly in $x \in [-1/2,\,1/2]$, and the assertion of Lemma
\ref{lemhthetabias} follows.
\end{proof}


We are now prepared to state the proof of Lemma
\ref{lemthetahatconsistency}.

\begin{proof}[Proof of Lemma \ref{lemthetahatconsistency}]
Without loss of generality we can assume that $n \geq 3$. Our argument
is similar to the arguments found in Masry (1993), who gives related
results for an errors-in-variables model. We will employ truncation as
follows. Let the stabilizing sequence $\{\eta_n\}_{n \geq 3}$ satisfy
$\eta_n = O((nh_n^{2b + 1})^{-1/2}\log^{1/2}(n))$ and the truncation
sequence $\{t_n\}_{n \geq 3}$ satisfy $t_n =
O((n\log(n)(\log\log(n))^{1 + \delta})^{1/\kappa})$, with $\delta >
0$. Write $K_j = E^{1/\kappa}[|Y_j|^{\kappa}]$. We can decompose
$\htheta(x) - E[\htheta(x)]$ into the sum of $D_1(x) = \htheta(x) -
\htheta^t(x)$, $D_2(x) = E[\htheta^t(x)] - E[\htheta(x)]$ and $D_3(x)
= \htheta^t(x) - E[\htheta^t(x)]$, where
\begin{equation*}
\htheta^t(x) = \frac{1}{2n + 1} \sum_{j = -n}^{n}
 Y_j\1\big[|Y_j| \leq K_jt_n\big]W_{j,h_n}(x),
 \qquad x \in [-1/2,\,1/2].
\end{equation*}

Beginning with $D_1(x)$, it follows along the same lines as the
arguments in the proof of Lemma 2.1 of Masry (1993) for $\sup_{x \in
  [-1/2,\,1/2]} |D_1(x)| = o(\eta_n)$, almost surely. Turning our
attention now to $D_2(x)$, it is easy to show $\sup_{x \in
  [-1/2,\,1/2]} |W_{j,h_n}(x)|$ is bounded by the series $\sum_{k =
  -\infty}^{\infty} \{|\Lambda(h_nk)|/|\Psi(k)|\}$, and we have already
shown this series is of the order $O(h_n^{-b - 1})$ in the proof of
Lemma \ref{lemhthetabias}. Hence, we have that $\sup_{x \in
  [-1/2,\,1/2]} |W_{j,h_n}(x)| = O(h_n^{-b - 1})$. It follows that
we can find an appropriate constant $C > 0$ such that we can bound
$\sup_{x \in [-1/2,\,1/2]} |D_2(x)|$ by
\begin{equation} \label{D2Bound}
Ch_n^{-b - 1}\frac{1}{2n + 1} \sum_{j = -n}^{n}
 E\Big[|Y_j|\1\big[|Y_j| > K_jt_n\big]\Big].
\end{equation}
Since $\kappa > 1$, writing $M_K = \max_{j = -n,\ldots,n} K_j$, we
can apply Markov's inequality to obtain
\begin{equation*}
\max_{j = -n,\ldots,n} E[|Y_j|\1[|Y_j| > K_jt_n]] =
 \max_{j = -n,\ldots,n} \int_{0}^{\infty}
 P\big(|Y_j| > \max\{s,\,K_jt_n\}\big)\,ds
 \leq \frac{\kappa}{\kappa - 1}M_Kt_n^{1 - \kappa}.
\end{equation*}
Enlarging the constant $C$ in \eqref{D2Bound} implies
$\sup_{x \in [-1/2,\,1/2]} |D_2(x)| \leq Ch_n^{-b - 1}t_n^{1 - \kappa}
= o(\eta_n)$.

To continue, we will require an additional result. For any $u,v \in
[-1/2,\,1/2]$, we can repeat the arguments in the proof of Lemma
\ref{lemRhatOrder} to see that
\begin{equation*}
\Big|W_{j,h_n}(u) - W_{j,h_n}(v)\Big| \leq  \big|u - v\big|
 2^{3/2}\pi \sum_{k = -\infty}^{\infty} |k|
 \frac{|\Lambda(h_nk)|}{|\Psi(k)|}.
\end{equation*}
Hence, arguing as in the proof of Lemma \ref{lemhthetabias} we can
find an appropriate constant $C > 0$ such that
\begin{equation} \label{WjLipschitz}
\big|W_{j,h_n}(u) - W_{j,h_n}(v)\big| \leq Ch_n^{-b - 2}|u - v|,
 \qquad u,v \in [-1/2,\,1/2].
\end{equation}

Now we consider $D_3(x)$. Let $\{s_n\}_{n \geq 3}$ be a sequence
satisfying $s_n = O(h_n^{b + 2}\eta_nt_n^{-1}) = o(1)$ such that, when
we shatter the interval $[-1/2,\,1/2]$ into $s_n^{-1}$ many fragments
of the form $(x_i,\,x_{i + 1}]$, with the first fragment defined to be
$[-1/2,\,x_2] = \{-1/2\}\cup(-1/2,\,x_2]$, our fragments satisfy
$\max_{i = 1, \ldots, s_n^{-1}} |x_{i + 1} - x_i| \leq s_n$. For any
$x \in [-1/2,\,1/2]$, there is exactly one fragment $(x_{i'},\,x_{i' +
  1}]$ that contains $x$, and on this interval we can write
\begin{equation*}
D_3(x) = D_{4,i}(x) - D_{5,i}(x) + D_{6,i},
\end{equation*}
where $D_{4,i}(x) = \htheta^t(x) - \htheta^t(x_{i'})$, $D_{5,i}(x) =
E[\htheta^t(x)] - E[\htheta^t(x_{i'})]$ and $D_{6,i} =
\htheta^t(x_{i'}) - E[\htheta^t(x_{i'})]$. It follows that $\sup_{x
  \in [-1/2,\,1/2]} |D_3(x)|$ is bounded by
\begin{equation*}
\max_{i = 1,\ldots,s_n^{-1}} \sup_{x \in (x_i,\,x_{i + 1}]}
 \big|D_{4,i}(x)\big|
 + \max_{i = 1,\ldots,s_n^{-1}} \sup_{x \in (x_i,\,x_{i + 1}]}
 \big|D_{5,i}(x)\big|
 + \max_{i = 1,\ldots,s_n^{-1}} \big|D_{6,i}\big|.
\end{equation*}
Hence, to show the result $\sup_{x \in [-1/2,\,1/2]}
|D_3(x)| = O(\eta_n)$, almost surely, we will instead show that each
of the following statements hold:
\begin{equation} \label{D3TermBound1}
\max_{i = 1,\ldots,s_n^{-1}} \sup_{x \in (x_i,\,x_{i + 1}]} |D_{4,i}(x)|
 = O(\eta_n), \quad\text{a.s.},
\end{equation}
\begin{equation} \label{D3TermBound2}
\max_{i = 1,\ldots,s_n^{-1}}\sup_{x \in (x_i,\,x_{i + 1}]} |D_{5,i}(x)|
 = O(\eta_n)
\end{equation}
and
\begin{equation} \label{D3TermBound3}
\max_{i = 1,\ldots,s_n^{-1}} |D_{6,i}|
 = O(\eta_n), \quad\text{a.s.}.
\end{equation}

Beginning with \eqref{D3TermBound1}, fix an arbitrary interval
$(x_i,\,x_{i + 1}]$. On this interval $D_{4,i}(x)$ is equal to
\begin{equation*}
\frac{1}{2n + 1} \sum_{j = -n}^{n} Y_j \1\big[|Y_j| \leq K_jt_n\big]
 \big\{W_{j,h_n}(x) - W_{j,h_n}(x_i)\big\},
 \qquad x \in (x_i,\,x_{i + 1}].
\end{equation*}
It follows from \eqref{WjLipschitz} that we can find an appropriate
constant $C > 0$ for the inequality $\sup_{x \in (x_i,\,x_{i + 1}]}
|D_{4,i}(x)| \leq C t_nh_n^{-b - 2}s_n$ to hold, almost surely, independent of
$i$. Therefore, by construction of $\{s_n\}_{n \geq 3}$, we find that
\eqref{D3TermBound1} holds. Observing that $D_{5,i}(x) = E[D_{4,i}(x)]$, we have
that \eqref{D3TermBound2} holds as well.

To see the final statement \eqref{D3TermBound3} holds, define the random
variables $U_{j,i} = \{Y_j\1[|Y_j| \leq K_jt_n] - E[Y_j\1[|Y_j| \leq
K_jt_n]]\}W_{j,h_n}(x_i)$, $j = -n,\ldots,n$. Standard arguments can then
be used to show that $U_{-n,i},\ldots,U_{n,i}$ are independent, and
each have mean equal to zero, variance bounded by $C_1h_n^{-2b - 1}$
and bounded in absolute value by $C_2t_nh_n^{-b - 1}$, where $C_1 > 0$
and $C_2 > 0$ are appropriately chosen constants and both bounds are
independent of both $j$ and $i$. Applying Bernstein's Inequality (see,
for example, Lemma 2.2.11 in van der Vaart and Wellner, 1996), we can
find an appropriate constant $C > 0$ and obtain
\begin{equation} \label{D6ProbBound}
P\bigg( \max_{i = 1,\ldots,s_n^{-1}} |D_{6,i}| > \eta_n \bigg)
 \leq 2s_n^{-1}\exp\bigg(-C\frac{n\eta_n^2}{h_n^{-2b - 1} + t_nh_n^{-b - 1}\eta_n}
 \bigg).
\end{equation}
In light of the fact that $t_nh_n^{-b - 1}\eta_n = o(h_n^{-2b - 1})$,
which holds since $\kappa > 2 + 1/b$, we can enlarge $C$ for the
right--hand side of \eqref{D6ProbBound} to be further bounded by a
positive constant multiplied by
\begin{equation*}
h_n^{-3/2}n^{(1/2) + (1/\kappa) - C} \log^{-(1/2 - 1/\kappa)}(n)
 \big(\log\log(n)\big)^{(1 + \delta)/\kappa},
\end{equation*}
which is summable provided we take $C > (3/2)(1 + 1/(2b + 1)) +
1/\kappa$, where $1/(2b + 1)$ accounts for the expansion of
$h_n^{-1}$; i.e.\ $(n^{1/(2b + 1)}h_n)^{-3/2} \to 0$, as $n \to
\infty$. It then follows by the Borel--Cantelli lemma that
\eqref{D3TermBound3} holds. This completes the proof.
\end{proof}


We can now state the proof of Theorem \ref{thmthetahatuniformrate}
from Section \ref{mainresults1}:
\begin{proof}[Proof of Theorem \ref{thmthetahatuniformrate}]
The first two assertions follow immediately from the results of Lemma
\ref{lemhthetabias} and Lemma \ref{lemthetahatconsistency} in
combination with our choice of regularizing sequence as discussed in
Section \ref{mainresults1}. This means we only need to show the last
assertion. Let us begin by calculating the Fourier coefficients
$\{\hat\Theta(\xi)\}_{\xi \in \Zint}$ of $\htheta$:
\begin{align*}
\hat \Theta(\xi) &= \int_{-1/2}^{1/2} \htheta(x) e^{-i2\pi\xi x}\,dx
 = \sum_{k = -\infty}^{\infty} \frac{\Lambda(h_nk)}{\Psi(k)}\hR(k)
 \int_{-1/2}^{1/2}e^{i2\pi(k - \xi)x}\,dx \\
&= \Lambda(h_n\xi)\Theta(\xi)
 + \Bigg\{ E\big[\hR(\xi)\big] - R(\xi)
 + \frac{1}{2n + 1} \sum_{j = -n}^{n} \ve_je^{-i2\pi\xi x_j}\Bigg\}
 \frac{\Lambda(h_n\xi)}{\Psi(\xi)},
\end{align*}
where we have used the orthonormality of the basis
$\{\exp(i2\pi k x)\,:\,x \in [-1/2,\,1/2]\}_{k \in \Zint}$ in the
final equality. The definition of $\RR_{s - 1/2}$ requires that we
show the series condition
\begin{equation} \label{SeriesCondition}
\sum_{\xi = -\infty}^{\infty} |\xi|^{s - 1/2} \big|\hat \Theta(\xi)\big|
 < \infty
\end{equation}
is satisfied. We can see that $|\hat \Theta(\xi)|$ is bounded by
\begin{equation} \label{ThetahatabsBound}
|\Theta(\xi)|
 + \Bigg\{ \max_{k \in \Zint}\Big|E\big[\hR(k)\big] - R(k)\Big| 
 + \bigg|\frac{1}{2n + 1} \sum_{j = -n}^{n} \ve_je^{i2\pi\xi x_j}\bigg|
 \Bigg\} \frac{|\Lambda(h_n\xi)|}{|\Psi(\xi)|}.
\end{equation}
Observing that $\theta \in \RR_{s}$ implies $\theta \in \RR_{s -
  1/2}$, we have $\sum_{\xi = -\infty}^{\infty} |\xi|^{s - 1/2}
|\Theta(\xi)| < \infty$. Hence, we only need to verify the series
condition \eqref{SeriesCondition} stated for the last term in
\eqref{ThetahatabsBound} holds.

The assumptions of Lemma \ref{lemRhatOrder} are satisfied, which
gives $\max_{k \in \Zint}|E[\hR(k)] - R(k)| = O(n^{-1})$. Additionally,
the map $x \mapsto \exp(-i2\pi k x)$ is confined to the unit circle in
the complex plane. A standard argument then shows
\begin{equation*}
\max_{k \in \Zint} \bigg|
 \frac{1}{2n + 1} \sum_{j = -n}^{n} \ve_je^{-i2\pi k x_j}
 \bigg| = O\big(n^{-1/2}\log^{1/2}(n)\big),
 \qquad\text{a.s.}
\end{equation*}
Finally, in the proof of Lemma \ref{lemhthetabias}, we have shown
$\sum_{k = -\infty}^{\infty} \{|\Lambda(h_nk)|/|\Psi(k)|\} = O(h_n^{-b
  - 1})$, and similar lines of argument can be used to show $\sum_{k =
  -\infty}^{\infty} \{|k|^{s - 1/2}|\Lambda(h_nk)|/|\Psi(k)|\} = O(h_n^{-s - b
  - 1/2})$ with the assumption $\int_{-\infty}^{\infty}\,|u|^{s +
  b - 1/2}|\Lambda(u)|\,du < \infty$. Together, these results imply
\begin{equation*}
\Bigg\{ \max_{k \in \Zint}\Big| E\big[\hR(k)\big] - R(k) \Big|
 + \max_{k \in \Zint} \bigg|
 \frac{1}{2n + 1} \sum_{j = -n}^{n} \ve_je^{-i2\pi k x_j} \bigg| \Bigg\}
 \sum_{\xi = -\infty}^{\infty} |\xi|^{s - 1/2}
 \frac{|\Lambda(h_n\xi)|}{|\Psi(\xi)|}
\end{equation*}
is of order $O(1 + (n\log(n))^{-1/2}) = O(1)$, almost surely. Hence,
the series condition \eqref{SeriesCondition} stated for the last term
in \eqref{ThetahatabsBound} holds. It follows that $\htheta - \theta
\in \RR_{s - 1/2}$, almost surely, for large enough $n$. Combining
this statement with the first assertion then proves the third
assertion.
\end{proof}


Nickl and P\"otscher (2007) study classes of functions of Besov- and
Sobolev-type. These authors derive results concerning the bracketing
metric entropy and the related central limit theorems of these
spaces using weighted norms. Since our space $\RR_{s}$ is a collection
of functions with compact support on the interval $[-1/2,\,1/2]$, we
can see the results of their Corollary 4 on bracketing numbers for
weighted Sobolev spaces immediately apply to our case by repeating the
steps in the proof of their Corollary 2 for Besov-type functions of
bounded support, i.e.\ our function space $\RR_{s,1}$ is the unit ball
of the metric space $(\RR_{s},\,\|\cdot\|_{\infty})$ and $\RR_{s,1}$
can be viewed as a restriction of a larger weighted Sobolev space of
similar type, where the weighting function is now defined to be equal
to 1 on the interval $[-1/2,\,1/2]$. We can summarize this result in
the following proposition:
\begin{proposition}
 \label{corRsEntropy}
For the function space $\RR_{s,1}$, with $s > 1/2$, a finite constant
$C > 0$ exists such that
\begin{equation*}
\log N_{[\,]}\big(\epsilon,\,\RR_{s,1},\,\|\cdot\|_{\infty}\big)
 \leq C\epsilon^{-1/s},
 \qquad \epsilon > 0,
\end{equation*}
where $N_{[\,]}(\epsilon,\,\RR_{s,1},\,\|\cdot\|_{\infty})$ is the
number of brackets of length $\epsilon$ required to cover the metric
space $(\RR_{s,1},\,\|\cdot\|_{\infty})$.
\end{proposition}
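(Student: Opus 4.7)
The plan is to reduce the claim directly to Corollary 4 of Nickl and P\"otscher (2007) by realizing $\RR_{s,1}$ as a subset of the unit ball of a weighted Sobolev space whose weight function is identically $1$ on $[-1/2,\,1/2]$, so that the weighting plays no role. First I would note that for $s > 1/2$ the defining condition $\sum_{k} |k|^{s}|\Q(k)| < \infty$ of $\RR_{s}$ implies membership in the ordinary Sobolev space $\mathcal{W}^{2,s}([-1/2,\,1/2])$, because
\begin{equation*}
\sum_{k} (1 + k^2)^{s}|\Q(k)|^2 \;\leq\; \Big(\sup_{k}|k|^{s}|\Q(k)|\Big)\sum_{k}(1 + k^2)^{s/2}|\Q(k)| \;<\; \infty,
\end{equation*}
and, by the usual Sobolev embedding for $s > 1/2$, that $q \in \RR_{s,1}$ is continuous with $\|q\|_{\infty}$ finite. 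Thus $\RR_{s,1}$ is a bona fide subset of $C([-1/2,\,1/2])$ and the sup-norm bracketing entropy is meaningful.

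Next I would transfer the problem to the real line. Extend each $q \in \RR_{s,1}$ by zero outside $[-1/2,\,1/2]$ (or, equivalently, use the periodic extension multiplied by a fixed smooth cutoff supported slightly inside $[-1,\,1]$, which affects norms only by an absolute constant). Let $B_{s,w}$ denote the unit ball in the weighted Sobolev space $W^{2,s}(\R,\,w)$ studied by Nickl and P\"otscher (2007) with weight $w(x) = \1[x \in [-1/2,\,1/2]]$. The key step is the norm comparison: for compactly supported $q$ with support in $[-1/2,\,1/2]$, the weighted Sobolev norm collapses to the unweighted Sobolev norm on $[-1/2,\,1/2]$, which is in turn controlled by $\sum_{k} |k|^{s}|\Q(k)|$ up to an absolute constant. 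This inclusion $\RR_{s,1} \hookrightarrow C \cdot B_{s,w}$ for some constant $C > 0$, combined with the monotonicity of bracketing numbers under inclusion and scaling, gives
\begin{equation*}
N_{[\,]}\big(\epsilon,\,\RR_{s,1},\,\|\cdot\|_{\infty}\big) \;\leq\; N_{[\,]}\big(\epsilon/C,\,B_{s,w},\,\|\cdot\|_{\infty}\big).
\end{equation*}

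Finally I would invoke Corollary 4 of Nickl and P\"otscher (2007), which yields $\log N_{[\,]}(\epsilon/C,\,B_{s,w},\,\|\cdot\|_{\infty}) \lesssim \epsilon^{-1/s}$ under the standing assumption $s > 1/2$ (needed for sup-norm brackets to exist). As the authors indicate in their Corollary 2 for the Besov case of bounded support, the constant-weight simplification means the truncation at infinity used in the general weighted proof is vacuous, and one is left with the classical Sobolev bracketing rate. Chaining these inequalities gives the asserted bound with a finite $C > 0$.

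The main obstacle, and essentially the only nontrivial point, is the norm comparison that justifies the inclusion $\RR_{s,1} \hookrightarrow C \cdot B_{s,w}$: one must verify that the weighted $\ell^1$-type Fourier condition defining $\RR_{s}$ dominates the weighted $\ell^2$-type Sobolev seminorm, and that extending by zero preserves this bound despite introducing a boundary discontinuity. The first point is the Cauchy--Schwarz argument sketched above, and the second is handled by using a smooth cutoff rather than sharp truncation, which costs only an absolute multiplicative constant. Once these are in place, the result is immediate from the cited entropy bound.
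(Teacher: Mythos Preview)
Your proposal is correct and follows essentially the same route as the paper: the paper does not give a formal proof but instead, in the paragraph preceding the proposition, invokes Corollary~4 of Nickl and P\"otscher (2007) on weighted Sobolev spaces with the weight taken identically equal to $1$ on $[-1/2,\,1/2]$, pointing to the argument of their Corollary~2 for the compactly supported (Besov) case. Your write-up simply supplies more of the technical glue---the $\ell^1$-to-$\ell^2$ Fourier comparison and the smooth-cutoff extension to handle the boundary---that the paper leaves implicit.
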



In light of the results on the estimator $\htheta$, we can now state a
result on the modulus of continuity relating $\hFemp(t)$ to $(2n +
1)^{-1} \sum_{j = -n}^{n} \1[\ve_j \leq t]$. Using results on Donsker
classes of functions, we can show this modulus of continuity holds up
to a negligible term of order $o_P(n^{-1/2})$. The proof of this
result follows along the same lines as the proof of Lemma A.1 in Van
Keilegom and Akritas (1999) and, therefore, it is omitted.
\begin{lemma} \label{lemFhatModulus}
Let the assumptions of Theorem \ref{thmthetahatuniformrate} be
satisfied with $s > 3/2$. In addition, assume that $F$ admits a bounded
Lebesgue density function $f$. Then $\sup_{t \in \R} |M_n(t)| = \opn$,
where
\begin{align*}
M_n(t) &= \frac{1}{2n + 1} \sum_{j = -n}^{n}
 \1\big[\ve_j \leq t + \big[K\big(\htheta - \theta\big)\big](x_j)\big]
 - \int_{-1/2}^{1/2}
 F\big(t + \big[K\big(\htheta - \theta\big)\big](x)\big)\,dx \\
&\quad - \frac{1}{2n + 1} \sum_{j = -n}^{n} \1[\ve_j \leq t]
 + F(t).
\end{align*}
\end{lemma}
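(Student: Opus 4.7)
The plan is to follow the strategy of Van Keilegom and Akritas (1999), viewing $M_n$ as a centered empirical process indexed by a pair $(t, q)$, where $q$ ranges over $\RR_{s-1/2,\,1}$, and then exploiting the fact that, by the third assertion of Theorem \ref{thmthetahatuniformrate}, the realized deviation $\htheta - \theta$ belongs to $\RR_{s-1/2,\,1}$ almost surely for large $n$, while the first assertion certifies $\|\htheta - \theta\|_\infty$ shrinks at a polynomial rate. First I would split $M_n(t) = M_n^{(1)}(t) + M_n^{(2)}(t)$, where
\[
M_n^{(1)}(t) = \frac{1}{2n+1}\sum_{j = -n}^n \Big\{\1\big[\ve_j \leq t + [K(\htheta-\theta)](x_j)\big] - \1[\ve_j \leq t] - F\big(t + [K(\htheta-\theta)](x_j)\big) + F(t)\Big\}
\]
is the centered empirical process part, and $M_n^{(2)}(t)$ is the Riemann-sum discrepancy between $(2n+1)^{-1}\sum_j F(t + [K(\htheta - \theta)](x_j))$ and $\int_{-1/2}^{1/2} F(t + [K(\htheta - \theta)](x))\,dx$. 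Since $K(\htheta - \theta)$ has a uniform modulus of continuity on $[-1/2, 1/2]$ inherited from $\psi$ and the boundedness of $\htheta - \theta$, and $f$ is bounded, a standard Riemann-sum estimate gives $\sup_t |M_n^{(2)}(t)| = o(n^{-1/2})$.

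Next, I would turn to $M_n^{(1)}$ and introduce the function class
\[
\mathcal{G} = \Big\{ g_{t, q}(\ve, x) = \1[\ve \leq t + [Kq](x)] - \1[\ve \leq t] \,:\, t \in \R,\, q \in \RR_{s - 1/2,\,1} \Big\}.
\]
Because $\psi \geq 0$, the operator $K$ is order-preserving with $\|Kq\|_\infty \leq \|\psi\|_1 \|q\|_\infty$, so a sup-norm bracket $q_1 \leq q \leq q_2$ with $\|q_2 - q_1\|_\infty \leq \eta$ induces a bracket for $g_{t,q}$ whose $L^2$-size is at most $(\|f\|_\infty \|\psi\|_1 \eta)^{1/2}$. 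Combining Proposition \ref{corRsEntropy} applied to $\RR_{s-1/2,\,1}$ with a rough $O(\log(1/\epsilon))$ bracketing of $t$, I obtain
\[
\log N_{[\,]}\big(\epsilon, \mathcal{G}, L^2\big) \lesssim \epsilon^{-2/(s - 1/2)},
\]
whose entropy integral $\int_0^\delta \sqrt{\log N_{[\,]}(\epsilon, \mathcal{G}, L^2)}\,d\epsilon$ converges at zero precisely when $2/(s - 1/2) < 2$, i.e.\ $s > 3/2$---the hypothesis imposed by the lemma.

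Then I would apply a bracketing-entropy maximal inequality for sums of independent mean-zero summands (the non-i.i.d.\ analogue of Theorem 2.14.2 in van der Vaart and Wellner, 1996, obtained by conditioning on the fixed design), applied to the shrinking subclass
\[
\mathcal{G}_n = \big\{ g_{t, q} \,:\, t \in \R,\, q \in \RR_{s - 1/2,\,1},\, \|q\|_\infty \leq \delta_n \big\},
\]
where $\delta_n \to 0$ is a deterministic sequence majorizing $\|\htheta - \theta\|_\infty$ with probability tending to one. The local $L^2$-diameter of $\mathcal{G}_n$ is of order $\delta_n^{1/2}$, so the maximal inequality gives
\[
E\bigg[\sup_{g_{t,q} \in \mathcal{G}_n} \bigg|\frac{1}{\sqrt{2n+1}}\sum_j \big\{g_{t,q}(\ve_j, x_j) - E g_{t,q}(\ve_j, x_j)\big\}\bigg|\bigg] \lesssim J_{[\,]}\big(C\delta_n^{1/2}, \mathcal{G}, L^2\big) \longrightarrow 0.
\]
On the high-probability event $\{\htheta - \theta \in \RR_{s-1/2,\,1}\} \cap \{\|\htheta - \theta\|_\infty \leq \delta_n\}$, this supremum dominates $\sqrt{2n+1}\,\sup_t|M_n^{(1)}(t)|$, whence $\sup_t|M_n^{(1)}(t)| = o_P(n^{-1/2})$, completing the argument.

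The main obstacle is the random, data-dependent choice of the index $q = \htheta - \theta$ at which the empirical process is evaluated, which prevents a direct appeal to standard Donsker theory. The remedy---replacing $\htheta - \theta$ by the worst-case element of a shrinking deterministic ball in $\RR_{s-1/2,\,1}$---relies crucially on both pieces of information delivered by Theorem \ref{thmthetahatuniformrate}: that $\htheta - \theta$ lies in the unit ball of $\RR_{s-1/2}$ almost surely for large $n$, and that its sup-norm shrinks. The assumption $s > 3/2$ is exactly what makes the resulting class a Donsker-type class with a convergent entropy integral; weaker smoothness would leave the bracketing entropy too large and the maximal inequality would fail to deliver the $o_P(n^{-1/2})$ rate.
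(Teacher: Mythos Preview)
Your proposal is correct and follows essentially the same route the paper indicates: the paper omits the proof entirely, stating only that it ``follows along the same lines as the proof of Lemma A.1 in Van Keilegom and Akritas (1999),'' and your outline---freezing the random index $\htheta-\theta$ inside the deterministic class $\RR_{s-1/2,1}$ via Theorem \ref{thmthetahatuniformrate}, bounding the bracketing entropy through Proposition \ref{corRsEntropy}, and applying a maximal inequality on the shrinking subclass $\mathcal{G}_n$---is exactly that argument adapted to the present indirect-regression setting. The split into the empirical-process piece $M_n^{(1)}$ and the Riemann-sum discrepancy $M_n^{(2)}$, together with your observation that the entropy integral converges precisely when $s>3/2$, matches the role this hypothesis plays in the paper.
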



Direct regression estimators typically allow for appropriate
expansions into averages of the model errors up to some negligible
remainder term. This representation motivates the term $\ve f(t)$ in
the expansion of the empirical distribution function of the these
model residuals. In the following result, we provide a similar
expansion for the indirect regression estimator $\htheta$, and we show
this expansion holds up to a negligible term of order $\opn$. Hence,
we can immediately see that our indirect regression function estimator
$\htheta$ and typical direct regression function estimators share this
property. This combined with the modulus of continuity result above
implies that our residual-based empirical distribution function behaves
similarly to that in the usual direct estimation setting (see, for
example, M\"uller, Schick and Wefelmeyer, 2007, who construct
expansions for many residual-based empirical distribution functions
based on direct regression function estimators).

\begin{proposition} \label{propKthetahatExpan}
Let the assumptions of Lemma \ref{lemhthetabias} be satisfied, and
assume that $E[\ve_j^2] < \infty$, $j = -n,\ldots,n$. Let the
regularizing sequence $\{h_n\}_{n \geq 1}$ satisfy $h_n^{s + b + 1} =
o(n^{-1/2})$ and $(nh_n)^{-1} = o(n^{-1/2})$. Then
\begin{equation*}
\bigg| \int_{-1/2}^{1/2} \big[K\big(\htheta - \theta\big)\big](x)\,dx
 - \frac{1}{2n + 1} \sum_{j = -n}^{n} \ve_j \bigg|
 = \opn.
\end{equation*}
\end{proposition}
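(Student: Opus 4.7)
The plan is to reduce the claimed expansion to the bias control of $\hR(0)$ already established in Lemma \ref{lemRhatOrder}; the key observation is that the left-hand integral is nothing other than the zeroth Fourier coefficient of $K(\htheta - \theta)$.

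First, I would invoke the convolution theorem for Fourier coefficients together with the representation \eqref{thetahat}. Since $\htheta$ has Fourier coefficients $\{\Lambda(h_nk)\hR(k)/\Psi(k)\}_{k \in \Zint}$, multiplying by $\Psi(k)$ shows that $K\htheta$ has Fourier coefficients $\{\Lambda(h_nk)\hR(k)\}_{k \in \Zint}$. Integrating the corresponding Fourier series against $e^{-i2\pi \cdot 0 \cdot x} = 1$ and using that $\Lambda(0) = 1$, guaranteed by Assumption \ref{assumplambda} since $0 \in I$, yields
$$\int_{-1/2}^{1/2} \big[K\htheta\big](x)\,dx = \Lambda(0)\hR(0) = \hR(0) = \frac{1}{2n+1}\sum_{j=-n}^n Y_j.$$
The analogous identity gives $\int_{-1/2}^{1/2}[K\theta](x)\,dx = R(0)$.

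Second, I would substitute $Y_j = [K\theta](x_j) + \ve_j$ into the expression for $\hR(0)$ to obtain the decomposition
$$\int_{-1/2}^{1/2} \big[K(\htheta - \theta)\big](x)\,dx - \frac{1}{2n+1}\sum_{j=-n}^n \ve_j = E\big[\hR(0)\big] - R(0),$$
where $E[\hR(0)] = (2n+1)^{-1}\sum_{j=-n}^n [K\theta](x_j)$ is precisely the Riemann-sum approximation to $R(0)$ that is controlled in Lemma \ref{lemRhatOrder}. Applying that lemma at the single frequency $k=0$ gives $|E[\hR(0)] - R(0)| = O(n^{-1}) = o(n^{-1/2})$, which delivers the claim.

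I do not expect any significant obstacle: the heart of the matter is the algebraic identity $\Lambda(0)=1$ combined with the bias estimate of Lemma \ref{lemRhatOrder}. The only piece of bookkeeping is the term-by-term integration of the Fourier series representing $K\htheta$, which is justified by the absolute summability of $\{\Lambda(h_nk)\hR(k)\}_{k\in\Zint}$ guaranteed by Assumptions \ref{assumpPsi} and \ref{assumplambda} together with the stipulated bandwidth conditions $(nh_n)^{-1} = o(n^{-1/2})$ and $h_n^{s+b+1} = o(n^{-1/2})$, which ensure the tails of the relevant sums are negligible relative to the $o(n^{-1/2})$ remainder.
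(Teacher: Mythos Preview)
Your argument is correct and takes a genuinely more direct route than the paper. The paper decomposes the left-hand side into three pieces $S_1$, $S_2$, $S_3$ by splitting $\Lambda(h_nk)\hR(k) - R(k)$ into a stochastic part, a bias part, and a smoothing-error part, and then bounds each piece separately; the crude bounds used for $S_2$ and $S_3$ (treating $|\int_{-1/2}^{1/2} e^{i2\pi kx}\,dx|$ as merely $O(|k|^{-1})$ or $O(1)$ rather than $\delta_{k,0}$) are what force the hypotheses $h_n^{s+b+1} = o(n^{-1/2})$ and $(nh_n)^{-1} = o(n^{-1/2})$. Your approach instead exploits from the outset that integration over $[-1/2,1/2]$ extracts the zeroth Fourier coefficient, collapsing the entire expression to the \emph{deterministic} quantity $E[\hR(0)] - R(0) = O(n^{-1})$ via Lemma~\ref{lemRhatOrder}. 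This not only streamlines the argument but shows the bandwidth hypotheses in the statement are superfluous: the conclusion holds, with the sharper rate $O(n^{-1})$, under the assumptions of Lemma~\ref{lemRhatOrder} alone. One small correction to your bookkeeping remark: the bandwidth conditions are not needed to justify term-by-term integration of the Fourier series for $K\htheta$, since absolute summability of $\{\Lambda(h_nk)\hR(k)\}_{k\in\Zint}$ already follows from Assumption~\ref{assumplambda} (giving $\sum_k|\Lambda(h_nk)| < \infty$ for each fixed $h_n>0$) together with the almost-sure finiteness of $\max_k|\hR(k)|$; so the bandwidth conditions in fact play no role anywhere in your proof.
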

\begin{proof}
Note that $\hR(k) - E[\hR(k)] = (2n + 1)^{-1} \sum_{j = -n}^{n}
\ve_j\exp(-i2\pi k x_j)$. We can write $1 = \int_{-1/2}^{1/2} \sum_{k
  = -\infty}^{\infty} e^{i2\pi k x}\,dx$ so that we can bound the
left--hand side of the assertion by $S_1 + S_2 + S_3$, where
\begin{equation*}
S_1 = \bigg|\frac{1}{2n + 1} \sum_{j = -n}^{n} \ve_j
 \int_{-1/2}^{1/2} \bigg\{\sum_{k = -\infty}^{\infty}
 \big\{\Lambda(h_nk) - 1\big\} e^{i2\pi k(x - x_j)}\bigg\}\,dx \bigg|,
\end{equation*}
\begin{equation*}
S_2 = \sum_{k = -\infty}^{\infty}
 \big|\Lambda(h_nk) - 1\big|\big|R(k)\big|
 \bigg|\int_{-1/2}^{1/2} e^{i2\pi k x}\,dx\bigg|.
\end{equation*}
and
\begin{equation*}
S_3 = \bigg[\max_{k \in \Zint}\Big|E\big[\hR(k)\big] - R(k)\Big|\bigg]
 \sum_{k = -\infty}^{\infty} |\Lambda(h_nk)|
\end{equation*}
The assertion then follows, if we show $S_1 = \opn$, $S_2 =
o(n^{-1/2})$ and $S_3 = o(n^{-1/2})$.

We can see that it follows for $S_1 = \opn$ from Assumption
\ref{assumplambda} and the fact that
\begin{equation*}
\frac{1}{2n + 1} \sum_{j = -n}^{n} \bigg\{
 \int_{-1/2}^{1/2} \bigg\{ \sum_{k = -\infty}^{\infty}
 \big\{\Lambda(h_nk) - 1\big\}e^{i2\pi k(x - x_j)} \bigg\}\,dx
 \bigg\}^{2} = o(1).
\end{equation*}

To show that $S_2 = o(n^{-1/2})$, recall the convolution theorem for
Fourier transformation implies that $|R(k)| =
|\Theta(k)||\Psi(k)|$. The integral term in $S_2$ is bounded by a
positive constant $C$ multiplied by $|k|^{-1}$. Combining this fact
with the constant $C_{\Psi}^*$ from Assumption \ref{assumpPsi} and
that $|\Lambda(h_nk) - 1| \leq 2$ shows that we can enlarge $C$ such
that $S_2$ is bounded by
\begin{equation*}
C h_n^{s + b + 1} \sum_{k = -\infty}^{\infty} |k|^s|\Theta(k)|
 = O\big(h_n^{s + b + 1}\big) = o\big(n^{-1/2}\big).
\end{equation*}

Finally, we consider the last remainder term $S_3$. The assumptions of
Lemma \ref{lemRhatOrder} are satisfied, which gives $\max_{k \in
  \Zint} |E[\hR(k)] - R(k)| = O(n^{-1})$. Similar lines of argument to
those in the proof of Lemma \ref{lemhthetabias} shows the series term
in $S_3$ is of the order $O(h_n^{-1})$. Combining these facts, we have
that $S_3$ is of the order $O((nh_n)^{-1}) = o(n^{-1/2})$, which
concludes the proof.
\end{proof}


Combining the results above, we can now state the proof of Theorem
\ref{thmFhatExpan}.

\begin{proof}[Proof of Theorem \ref{thmFhatExpan}]
Recall $M_n(t)$ from Lemma \ref{lemFhatModulus}. A straightforward
calculation shows that
\begin{equation*}
\frac{1}{2n + 1} \sum_{j = -n}^{n}
 \Big\{ \1\big[\hve_j \leq t\big] - \1\big[\ve_j \leq t\big]
 - \ve_jf(t) \Big\}  = 
 M_n(t) + H_n(t) + L_n(t),
\end{equation*}
where
\begin{equation*}
H_n(t) = \int_{-1/2}^{1/2}
 F\big(t + \big[K\big(\htheta - \theta\big)\big](x)\big)
 \,dx - F(t)
 -f(t)\int_{-1/2}^{1/2} \big[K\big(\htheta - \theta\big)\big](x)\,dx
\end{equation*}
and
\begin{equation*}
L_n(t) = f(t)\bigg\{
 \int_{-1/2}^{1/2} \big[K\big(\htheta - \theta\big)\big](x)\,dx
 - \frac{1}{2n + 1} \sum_{j = -n}^{n} \ve_j \bigg\}.
\end{equation*}
The assumptions of Lemma \ref{lemFhatModulus} are satisfied, which
implies $\sup_{t \in \R}|M_n(t)| = \opn$. Hence, the assertion follows
from showing $\sup_{t \in \R} |H_n(t)| = \opn$ and $\sup_{t \in \R}
|L_n(t)| = \opn$.

Beginning with $H_n(t)$, writing $C_{f,\gamma}$ for the H\"older
constant of $f$ with exponent $\gamma$, we have
\begin{equation*}
H_n(t) = \int_{-1/2}^{1/2} \big[K\big(\htheta - \theta\big)\big](x)
 \int_{0}^{1}
 \Big\{f\big(t + s\big[K\big(\htheta - \theta\big)\big](x)\big)
 - f(t)\Big\}\,ds\,dx
\end{equation*}
so that $\sup_{t \in \R} |H_n(t)|$ is bounded by
\begin{equation*}
\frac{C_{f,\gamma}}{1 + \gamma} \bigg[
 \sup_{x \in [-1/2,\,1/2]} \Big|\htheta(x) - \theta(x)\Big|
 \bigg]^{1 + \gamma}.
\end{equation*}
The assumptions of Theorem \ref{thmthetahatuniformrate} are satisfied,
which implies the second term in the bound above is $o(n^{-1/2})$,
almost surely. It then follows that $\sup_{t \in \R} |H_n(t)| = \opn$.

Now we will consider $L_n(t)$. Since $f$ is bounded, we have that
$\sup_{t \in \R} |L_n(t)|$ is bounded by
\begin{equation*}
\sup_{t \in \R}|f(t)|\bigg|
 \int_{-1/2}^{1/2} \big[K\big(\htheta - \theta\big)\big](x)\,dx
 - \frac{1}{2n + 1} \sum_{j = -n}^{n} \ve_j \bigg|.
\end{equation*}
The parameter sequence $\{h_n\}_{n \geq 1}$ satisfies \eqref{bw}, and
we have both
\begin{equation*}
h_n^{s + b + 1}
 = O\big(n^{-1/2 - 1/(4s + 4b + 2)}
 \log^{(s + b + 1)/(2s + 2b + 1)}(n)\big)
 = o\big(n^{-1/2}\big)
\end{equation*}
and
\begin{equation*}
\big(nh_n\big)^{-1}
 = O\big(n^{-(2s + 2b)/(2s + 2b + 1)}
 \log^{-1/(2s + 2b + 1)}(n)\big)
 = o\big(n^{-1/2}\big),
\end{equation*}
which follows from the fact that $s + b > 1/2$. The assumptions of
Proposition \ref{propKthetahatExpan} are satisfied, which implies the
second term in the bound above is $\opn$. This shows that $\sup_{t \in
  \R}|L_n(t)| = \opn$, and, hence, the assertion of Theorem
\ref{thmFhatExpan} holds.
\end{proof}

Here we provide a short proof of Proposition \ref{prophthetaVarBiassq}.

\begin{proof}[Proof of Proposition \ref{prophthetaVarBiassq}] 
Beginning with the first assertion, we can write the integrated
variance of $\htheta$ as
\begin{equation*}
\int_{-1/2}^{1/2} E\Big[\big\{\htheta(x)
 - E\big[\htheta(x)\big]\big\}^2\Big]\,dx
 = \frac{\sigma^2}{2n + 1} \sum_{k = -\infty}^{\infty}
 \frac{\Lambda^2(h_nk)}{\Psi^2(k)}.
\end{equation*}
Repeating the arguments in the proof of Lemma \ref{lemhthetabias} in
Section \ref{details} then shows
\begin{equation*}
\sum_{k = -\infty}^{\infty} \frac{\Lambda^2(h_nk)}{\Psi^2(k)} =
O\big(h_n^{-2b - 1}\big).
\end{equation*}
Therefore, we can specify $C_{\Lambda} > 0$ for the first assertion to
hold. The second assertion follows directly by an application of Lemma
\ref{lemhthetabias}.
\end{proof}

Now returning to the discussion in Section \ref{boot}, the choice of
scaling sequence $\{c_n\}_{n \geq 1}$ used for the contaminates
$c_nU_j$, $j=-n,\ldots,n$, in the smooth bootstrap, always satisfies
$(nc_n)^{-1}\log(n) = o(1)$, and, hence, we can apply Theorem A of
Silverman (1978) in combination with the H\"older continuity of $w$
and the results of Theorem \ref{thmthetahatuniformrate}, or the
combination of the results from Lemma \ref{lemhthetabias} and Lemma
\ref{lemthetahatconsistency}, to show that $f_n^*$ is strongly
consistent for $f$, uniformly over the entire real line. The result
only holds when the density function $f$ is H\"older with exponent
$2/3 < \gamma \leq 1$, the density function $w$ is also chosen to be
H\"older continuous with similar smoothness, and the smoothness index
$s$ of the function space $\RR_{s}$ satisfies $s > (1 + \gamma)(2b +
1)/(3\gamma - 2) > 4b + 2$, which is twice the lower bound on $s$
required by the second statement of Theorem
\ref{thmthetahatuniformrate}. The additional smoothness in $\theta$
is required due to the fact that residuals are used in the estimator
$f_n^*$ rather than the model errors. Under these conditions, further
technical but standard arguments similar to those used to prove
related results in Neumeyer (2009) can be used to prove the following
result.
\begin{proposition} \label{propGausCovTerms}
Let the assumptions of Theorem \ref{thmFhatbootExpan} be satisfied,
but now requiring the density $f$ to be H\"older continuous with
exponent $2/3 < \gamma \leq 1$ and choosing the density $w$ also to be
H\"older continuous with the same exponent. Additionally, assume the
smoothness index $s$ of the function space $\RR_{s}$ satisfies $s > (1
+ \gamma)(2b + 1)/(3\gamma - 2)$. Then
\begin{equation*}
\sup_{t \in \R} \Big| f_{n}^*(t) - f(t) \Big| = \op,
\end{equation*}
\begin{equation*}
\sup_{t \in \R} \Big| F_{n}^*(t) - F(t) \Big| = \op
\end{equation*}
and
\begin{equation*}
\sup_{t \in \R} \Big| E^*\big[\ve^*\1\big[\ve^* \leq t\big]\big]
 - E\big[\ve\1[\ve \leq t]\big] \Big| = \op.
\end{equation*}
\end{proposition}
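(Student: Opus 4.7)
The three assertions share a common structure. Each compares a smooth bootstrap quantity built from the centered residuals $\tilde\ve_j$ with its population counterpart, and the natural plan is to split each discrepancy into a ``residual substitution'' term (the difference between the bootstrap estimator and its analogue built from the unobserved errors $\ve_j$) and a ``kernel smoothing'' term (the difference between the latter and the target). The residual substitution term will be controlled by H\"older- or Lipschitz-continuity of $w$ together with the rate in Theorem~\ref{thmthetahatuniformrate}; the kernel smoothing term will be controlled by classical results for kernel density/distribution estimators.

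First I would establish the basic residual bound $\max_{-n \leq j \leq n} |\tilde\ve_j - \ve_j| = O(n^{-s/(2s+2b+1)}\log^{s/(2s+2b+1)}(n))$, almost surely. Since $K$ is convolution with the bounded density $\psi$, we have $|[Kq](x)| \leq \|\psi\|_\infty \|q\|_\infty$, so $\max_j |\hve_j - \ve_j| \leq \|\psi\|_\infty \sup_x |\htheta(x) - \theta(x)|$, while $|\bar{\hve}| \leq |(2n+1)^{-1}\sum_j \ve_j| + \sup_x |[K(\htheta-\theta)](x)| = O_P(n^{-1/2}) + O(\sup_x|\htheta - \theta|)$. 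Theorem~\ref{thmthetahatuniformrate} then delivers the stated bound.

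For the first assertion, write $f_n^*(t) - f(t) = [f_n^*(t) - \bar f_n(t)] + [\bar f_n(t) - f(t)]$, where $\bar f_n(t) = \{(2n+1)c_n\}^{-1}\sum_j w((t-\ve_j)/c_n)$ is the classical kernel density estimator based on the true errors. The second summand is $o(1)$ uniformly by Theorem~A of Silverman (1978), using H\"older continuity of $f$ and $w$ together with $(nc_n)^{-1}\log n = o(1)$. For the first summand, H\"older continuity of $w$ with exponent $\gamma$ gives the pointwise-in-$t$ bound $|f_n^*(t) - \bar f_n(t)| \leq C\,c_n^{-1-\gamma}\max_j |\tilde\ve_j - \ve_j|^{\gamma}$; with $c_n \asymp n^{-1/5}$, this is of order $n^{(1+\gamma)/5 - s\gamma/(2s+2b+1)}$ up to log factors, and a direct calculation shows the exponent is negative precisely when $s > (1+\gamma)(2b+1)/(3\gamma - 2)$, which is the hypothesis. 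The second assertion follows from the analogous decomposition with $W$ (the CDF of $w$) in place of $w$: the residual-substitution term is bounded using Lipschitz continuity of $W$ by $\|w\|_\infty c_n^{-1}\max_j|\tilde\ve_j - \ve_j|$, which is a less restrictive condition and is satisfied here, while the kernel-smoothed CDF $\bar F_n$ converges uniformly to $F$ by monotone Glivenko--Cantelli after writing $\bar F_n = E[\bar F_n] + (\bar F_n - E[\bar F_n])$ and noting $E[\bar F_n(t)] = (F \ast w_{c_n})(t) \to F(t)$ uniformly by uniform continuity of $F$.

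For the third assertion, write $E^*[\ve^*\1[\ve^*\leq t]] - E[\ve\1[\ve \leq t]] = \int_{-\infty}^{t} u\{f_n^*(u) - f(u)\}\,du$ and fix a truncation level $T > 0$. The truncated part $\int_{-T}^{\min(t,T)} u\{f_n^*(u) - f(u)\}\,du$ is bounded by $2T\sup_u|f_n^*(u) - f(u)| = o_P(1)$ by the first assertion, uniformly in $t$. The tails are controlled by $\int_{|u|>T}|u|f(u)\,du + \int_{|u|>T}|u|f_n^*(u)\,du$, the first of which tends to zero as $T \to \infty$ by integrability, and the second is at most $T^{1-\kappa}E^*[|\ve^*|^\kappa]$ by Markov's inequality, with $E^*[|\ve^*|^\kappa] = O_P(1)$ since $|\ve^*|^\kappa \leq 2^{\kappa - 1}(|\tilde\ve_j|^\kappa + c_n^\kappa|U_j|^\kappa)$ and $(2n+1)^{-1}\sum_j |\tilde\ve_j|^\kappa = O_P(1)$ by the moment assumption on the $\ve_j$ together with Step~1. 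Sending first $n \to \infty$ and then $T \to \infty$ gives the assertion uniformly in $t$. The main obstacle is this last step: the uniform-in-$t$ conclusion requires care because the integrand is unbounded, so one must balance the truncation level $T$ against the rate of uniform convergence of $f_n^*$ and the available moments of $\ve^*$, exactly as in Step~4 above.
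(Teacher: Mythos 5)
Your argument is correct and follows essentially the route the paper intends: the paper supplies no formal proof, only the paragraph preceding the proposition (Silverman's Theorem A for the kernel estimator built from the true errors, plus the H\"older continuity of $w$ and the uniform rate from Theorem \ref{thmthetahatuniformrate} to absorb the residual substitution), and your balancing of $c_n^{-1-\gamma}$ against $\max_j|\tilde{\ve}_j-\ve_j|^{\gamma}$ recovers exactly the threshold $s>(1+\gamma)(2b+1)/(3\gamma-2)$ that the paper attributes to this step. The only blemish is cosmetic: in the third assertion the truncated integral is bounded by $T^2\sup_u|f_n^*(u)-f(u)|$ rather than $2T\sup_u|f_n^*(u)-f(u)|$, which changes nothing.
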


We omit proof of the following result because it is proven in exactly
the same manner as Theorem \ref{thmthetahatuniformrate}.
\begin{proposition} \label{prophthetabootuniformrate}
Let the assumptions of Theorem \ref{thmthetahatuniformrate} be
satisfied. Choose the regularizing sequence $\{g_n\}_{n \geq 1}$
according to \eqref{bw} and let the scaling sequence $\{c_n\}_{n \geq
  1}$ satisfy $c_n = O(n^{-\alpha})$, with $0 < \alpha < 1/2 +
1/\kappa$. Then, $P^*$--outer almost surely, we have
\begin{equation*}
\sup_{x \in [-1/2,\,1/2]} \Big| \htheta^*(x) - \htheta(x) \Big|
 = O\big(n^{-s/(2s + 2b + 1)}\log^{s/(2s + 2b + 1)}(n)\big),
\end{equation*}
if, additionally, $s > (2b + 1)/(2\gamma)$, for some $0 < \gamma \leq
1$,
\begin{equation*}
\bigg[\sup_{x \in [-1/2,\,1/2]} \Big| \htheta^*(x) - \htheta(x) \Big|
 \bigg]^{1 + \gamma} = o(n^{-1/2}),
\end{equation*}
and, for large enough $n$,
\begin{equation*}
\htheta^* - \htheta \in \RR_{s - 1/2,1}.
\end{equation*}
\end{proposition}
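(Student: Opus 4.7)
The plan is to mirror the proof of Theorem \ref{thmthetahatuniformrate} line by line, working throughout in $P^*$-outer measure conditional on the original sample $\Dset$. Since the bootstrap data obey $Y_j^* = [K\htheta](x_j) + \ve_j^*$, which has exactly the structure of \Ref{modeleq} with $\htheta$ in the role of $\theta$ and $\ve_j^*$ in the role of $\ve_j$, it suffices to re-establish conditional analogs of Lemma \ref{lemRhatOrder}, Lemma \ref{lemhthetabias} and Lemma \ref{lemthetahatconsistency} and to combine them exactly as before.

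First I would verify the standing hypotheses for the bootstrap errors $\ve_j^* = \tilde{\ve}_j^* + c_n U_j$: independence and zero mean under $P^*$ are immediate from the construction, and a conditional absolute moment of order $\kappa > 2 + 1/(s + b)$ follows because $\tilde{\ve}_j^*$ is bounded a.s.\ (being drawn from the centered residuals, whose absolute maximum is controlled through Theorem \ref{thmthetahatuniformrate}) and $c_n U_j$ has a finite $\kappa$-th moment with $c_n = O(n^{-\alpha})$. I would then establish the bootstrap bias bound $\sup_x |E^*[\htheta^*(x)] - \htheta(x)| = O(g_n^s + (n g_n^{b + 1})^{-1})$, $P^*$-almost surely, by copying the proof of Lemma \ref{lemhthetabias} applied to the Fourier coefficients $\{\hat\Theta(k)\}_{k \in \Zint}$ of $\htheta$, which are $\Lambda(h_n k)\hR(k)/\Psi(k)$. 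Splitting this as $\Lambda(h_n k)\Theta(k) + \Lambda(h_n k)\{\hR(k) - R(k)\}/\Psi(k)$ and using $\theta \in \RR_s$ together with the uniform bound $\max_k |\hR(k) - R(k)| = O(n^{-1/2}\log^{1/2}(n))$ from the proof of Theorem \ref{thmthetahatuniformrate} gives $\sum_k |k|^s|\hat\Theta(k)| = O(1)$ a.s., which then drives the series bound on $\sup_x |\sum_k (\Lambda(g_n k) - 1)\hat\Theta(k)e^{i2\pi k x}|$ exactly as in \Ref{biassmoothbound}.

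The bootstrap version of Lemma \ref{lemthetahatconsistency} follows by the same truncation-plus-Bernstein argument, with stabilizing sequence $\eta_n$ matching the claimed rate and a truncation sequence $t_n$ chosen as in the original proof. Finally the third assertion, $\htheta^* - \htheta \in \RR_{s - 1/2,1}$, is obtained by computing the Fourier coefficients of $\htheta^* - \htheta$ exactly as in the closing part of the proof of Theorem \ref{thmthetahatuniformrate}; the integrability condition $\int |u|^{s + b - 1/2}|\Lambda(u)|\,du < \infty$ supplies the needed series bound, and a conditional exponential inequality gives $\max_k |(2n+1)^{-1}\sum_j \ve_j^* e^{-i2\pi k x_j}| = O(n^{-1/2}\log^{1/2}(n))$ in $P^*$-outer measure.

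The main obstacle is the Bernstein step in the conditional version of Lemma \ref{lemthetahatconsistency}. The distribution $F_n^*$ of the bootstrap errors depends on $n$ through both the random residuals and the bandwidth $c_n$, so uniform control of the truncated-tail remainder $E^*[|Y_j^*|\1[|Y_j^*| > K_j^* t_n]]$ requires bounding $E^*[|\ve_j^*|^\kappa]$ independently of $n$. The constraint $\alpha < 1/2 + 1/\kappa$ is precisely what keeps the resulting $c_n^{-(b + 1)} t_n^{1 - \kappa}$ term negligible relative to $\eta_n$, and checking this ordering of rates, together with the summability factor $(n^{1/(2b + 1)} g_n)^{-3/2}$ needed for the Borel--Cantelli conclusion, is the only delicate bookkeeping in the argument.
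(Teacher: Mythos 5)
The paper offers no proof of Proposition \ref{prophthetabootuniformrate}: it simply asserts that the result ``is proven in exactly the same manner as Theorem \ref{thmthetahatuniformrate}.'' Your overall strategy --- rerunning Lemmas \ref{lemRhatOrder}, \ref{lemhthetabias} and \ref{lemthetahatconsistency} conditionally on $\Dset$, with $\htheta$ in the role of $\theta$ and $\ve_j^*$ in the role of $\ve_j$ --- is therefore exactly the intended one, and your remarks on the conditional moment condition and on the role of $\alpha < 1/2 + 1/\kappa$ are sensible (one small imprecision: $\tilde{\ve}_j^*$ is not bounded; what you need, and what holds, is that the empirical $\kappa$-th moment of the centered residuals is $O(1)$ almost surely).

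There is, however, one step that fails as written: the claim that $\sum_k |k|^{s}|\hat\Theta(k)| = O(1)$ almost surely. Writing $\hat\Theta(k) = \Lambda(h_nk)\Theta(k) + \Lambda(h_nk)\{\hR(k) - R(k)\}\Psi(k)^{-1}$, the stochastic part contributes $\max_k|\hR(k) - E[\hR(k)]|\cdot\sum_k |k|^{s}|\Lambda(h_nk)|/|\Psi(k)| = O(n^{-1/2}\log^{1/2}(n))\cdot O(h_n^{-s-b-1})$, and with $h_n$ as in \eqref{bw} this is of exact order $n^{1/(2(2s+2b+1))}$ up to logarithmic factors, so it diverges. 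This is precisely why the third assertion of Theorem \ref{thmthetahatuniformrate} is calibrated at $s - 1/2$ rather than $s$: since $(s+b+1/2)/(2s+2b+1) = 1/2$, only $\sum_k|k|^{s-1/2}|\hat\Theta(k)| = O(1)$ is available. Feeding that weaker bound into the analogue of \eqref{biassmoothbound} gives a conditional bias for $\htheta^*$ of order $O(g_n^{s-1/2})$ only, which is too large for the claimed rate $n^{-s/(2s+2b+1)}\log^{s/(2s+2b+1)}(n)$. The remedy is not to absorb the randomness of $\htheta$ into a fixed smoothness class at all: keep the cross term $\sum_k\{\Lambda(g_nk)-1\}\Lambda(h_nk)\Psi(k)^{-1}\{\hR(k)-E[\hR(k)]\}e^{i2\pi kx} = (2n+1)^{-1}\sum_j\ve_j\widetilde{W}(x-x_j)$ as a mean-zero average of the original errors and bound it by the truncation-plus-Bernstein argument of Lemma \ref{lemthetahatconsistency}; since $\sup_x|\widetilde{W}(x)| = O(h_n^{-b-1})$ and the associated variance series is $O(h_n^{-2b-1})$, this term is $O((nh_n^{2b+1})^{-1/2}\log^{1/2}(n))$ almost surely, which matches the claimed rate. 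With that modification the rest of your outline goes through.
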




\begin{thebibliography}{35}

\bibitem{A1995}
Adorf, H.M.\ (1995).
Hubble Space Telescope image reconstruction in its fourth year.
{\it Inverse Problems} {\bf 11}, 639-653.

\bibitem{BBDV2009}
Bertero, M., Boccacci, P., Desider\`a, G.\ and Vicidomini, G.\
(2009).
Image deblurring with Poisson data: from cells to galaxies.
{\it Inverse Problems} {\bf 25}, 123006.

\bibitem{BBH2010}
Birke, M., Bissantz, N.\ and Holzmann, H.\ (2010).
Confidence bands for inverse regression models.
{\it Inverse Problems} {\bf 26}, 115020.

\bibitem{BH2008}
Bissantz, N.\ and Holzmann, H.\ (2008).
Statistical inference for inverse problems.
{\it Inverse Problems} {\bf 24}, 034009.

\bibitem{BH2013}
Bissantz, N.\ and Holzmann, H.\ (2013).
Asymptotics for spectral regularization estimators in statistical
inverse problems.
{\it Computat. Statist.} {\bf 28}, 435-453.

\bibitem{BMM2006}
Bissantz, N., Mair, B.\ and Munk, A.\ (2006).
A multi-scale stopping criterion for MLEM reconstructions in PET.
{\it IEEE Nuclear Science Symposium Conference Record} {\bf 6}, 3376-3379.

\bibitem{BMM2008}
Bissantz, N., Mair, B.\ and Munk, A.\ (2008).
A statistical stopping rule for MLEM reconstructions in PET.
{\it IEEE Nuclear Science Symposium Conference Record} {\bf 8}, 4198-4200.

\bibitem{BHR2016}
Blanchard, G., Hoffmann, M.\ and Rei\ss, M.\ (2016).
Optimal adaptation for early stopping in statistical inverse problems.
arXiv:1606.07702v1.

\bibitem{BM2012}
Blanchard, G.\ and Math\'e, P.\ (2012).
Discrepancy principle for statistical inverse problems with
application to conjugate gradient iteration.
{\it Inverse Problems} {\bf 28}, 115011.

\bibitem{C1993}
Cao, R.\ (1993).
Bootstrapping the mean integrated squared error.
{\it J. Multivariate Anal.} {\bf 45}, 137-160.

\bibitem{C2000}
Cavalier, L.\ (2000).
Efficient estimation of a density in a problem of tomography.
{\it Ann. Statist.} {\bf 28}, 630-647.

\bibitem{C2008}
Cavalier, L.\ (2008).
Nonparametric statistical inverse problems.
{\it Inverse Problems} {\bf 24}, 034004.

\bibitem{CG2006}
Cavalier, L.\ and Golubev, Y.\ (2006).
Risk hull method and regularization by projections of ill-posed
inverse problems.
{\it Ann. Statist.} {\bf 34}, 1653-1677.

\bibitem{CT2002}
Cavalier, L.\ and Tsybakov, A.\ (2002).
Sharp adaptation for inverse problems with random noise.
{\it Probab. Theory Related Fields} {\bf 123}, 323-354.

\bibitem{DM2008}
Davies, P.L.\ and Meise, M.\ (2008).
Approximating data with weighted smoothing splines.
{\it J. Nonparametr. Stat.} {\bf 20}, 207-228.

\bibitem{F1991}
Fan, J.\ (1991).
On  the optimal rates of convergence for nonparametric deconvolution
problems.
{\it Ann. Statist.} {\bf 19}, 1257-1272.

\bibitem{G1999}
Goldenshluger, A.\ (1999).
On pointwise adaptive nonparametric deconvolution.
{\it Bernoulli} {\bf 5}, 907-925.

\bibitem{GMMMPG2004}
Gonz\'alez-Manteiga, W., Martinez-Miranda, M.D.\ and
P\'erez-Gonz\'alez, A.\ (2004).
The choice of smoothing parameter in nonparametric regression through
wild bootstrap.
{\it Comput. Statist. Data Anal.} {\bf 47}, 487-515.

\bibitem{HH2005}
Hall, P.\ and Horowitz, J.\ (2005).
Nonparametric methods for inference in the presence of instrumental
variables.
{\it Ann. Statist.} {\bf 33}, 2904-2929.

\bibitem{HMSDKM2012}
Hotz, T., Marnitz, P., Stichtenoth, R., Davies, L., Kabluchko, Z.\
and Munk, A.\ (2012).
Locally adaptive image denoising by a statistical multiresolution
criterion.
{\it Comput. Statist. Data Anal.} {\bf 56}, 543-558.

\bibitem{MR1996}
Mair, B.A.\ and Ruymgaart, F.H.\ (1996).
Statistical inverse estimation in Hilbert scales.
{\it SIAM J. Appl. Math.} {\bf 56}, 1424-1444.

\bibitem{MM2014}
Marteau, C.\ and Math\'e, P.\ (2014).
General regularization schemes for signal detection in inverse
problems.
{\it Math. Methods Statist.} {\bf 23}, 176-200.

\bibitem{M1991}
Masry, E.\ (1991).
Multivariate probability density deconvolution for stationary random
processes.
{\it IEEE Trans. Inform. Theory} {\bf 37}, 1105-1115.

\bibitem{M1993}
Masry, E.\ (1993).
Multivariate regression estimation with errors-in-variables for
stationary processes.
{\it J. Nonparametr. Stat.} {\bf 3}, 13-36.

\bibitem{MP2006}
Math\'e, P.\ and Pereverzev, S.V.\ (2006).
Regularization of some linear ill-posed problems with discretized
random noisy data.
{\it Math. Comp.} {\bf 75}, 1913-1929.

\bibitem{MSW2004}
M\"uller, U.U., Schick, A.\ and Wefelmeyer, W.\ (2004).
Estimating linear functionals of the error distribution in
nonparametric regression.
{\it J. Statist. Plann. Inference} {\bf 119}, 75-93.

\bibitem{MSW2007}
M\"uller, U.U., Schick, A.\ and Wefelmeyer, W.\ (2007).
Estimating the error distribution function in semiparametric
regression.
{\it Statist. Decisions} {\bf 25}, 1-18.

\bibitem{N2009}
Neumeyer, N.\ (2009).
Smooth residual bootstrap for empirical processes of non-parametric
regression residuals.
{\it Scand. J. Stat.} {\bf 36}, 204-228.

\bibitem{NP2007}
Nickl, R.\ and P\"otscher, B.M.\ (2007).
Bracketing metric entropy rates and empirical central limit theorems
for function classes of Besov- and Sobolev-type.
{\it J. Theoret. Probab.} {\bf 20}, 177-199.

\bibitem{P1962}
Parzen, E.\ (1962).
On estimation of a probability density function and mode.
{\it Ann. Math. Statist.} {\bf 33}, 1065-1076.

\bibitem{PR1999}
Politis, D.N.\ and Romano, J.P.\ (1999).
Multivariate density estimation with general flat-top kernels of
infinite order.
{\it J. Multivariate Anal.} {\bf 68}, 1-25.

\bibitem{PBD2015}
Proksch, K., Bissantz, N.\ and Dette, H.\ (2015).
Confidence bands for multivariate and time dependent inverse
regression models.
{\it Bernoulli} {\bf 21}, 144-175.

\bibitem{S1978}
Silverman, B.W.\ (1978)
Weak and strong uniform consistency of the kernel estimate of a
density and its derivatives.
{\it Ann. Statist.} {\bf 6}, 177-184.

\bibitem{S1986}
Silverman, B.W.\ (1986).
{\it Density estimation for statistics and data analysis.}
Vol. 26. CRC press.

\bibitem{VW1998}
van der Vaart, A.W.\ and Wellner J.A.\ (1996).
{\it Weak convergence and empirical processes. With applications to 
statistics.} Springer Series in Statistics. Springer-Verlag, New York.

\bibitem{VKA1999}
Van Keilegom, I.\ and Akritas, M.G.\ (1999).
Transfer of tail information in censored regression models. 
{\it Ann. Statist.} {\bf 27}, 1745-1784.

\end{thebibliography}
\end{document}